\newcommand{\p}{\rho}
\newcommand{\pel}{p_{\ell}}
\newcommand{\Zplus}{\mathbb{Z}_+}
\newcommand{\rec}{{\gamma}}
\newcommand{\C}{{\mathcal{C}}}
\newcommand{\EC}{\mathrm{2EC}}
\newcommand{\HK}{\mathrm{LP}}
\newcommand{\cov}{\mathrm{cov}}
\newcommand{\point}{\mathrm{end}}
\newcommand{\gap}{\alpha^{\HK}_{\EC}}
\newcommand{\Lu}{\mathcal{L}_\ell} 
\newcommand{\Rv}{\mathcal{R}_\ell} 
\newcommand{\Lup}{\mathcal{L}_{\ell'}} 
\newcommand{\Rvp}{\mathcal{R}_{\ell'}}
\declaretheoremstyle[style=claim,qed=$\Diamond$]{claim}
\declaretheoremstyle[style=plain,qed=$\square$]{theorem}
\theoremstyle{plain}
\newtheorem{theorem}{Theorem}
\newtheorem{corollary}[theorem]{Corollary}
\newtheorem{lemma}[theorem]{Lemma}
\newtheorem{conjecture}{Conjecture}
\newtheorem{observation}[theorem]{Observation}
\newtheorem{definition}[theorem]{Definition}
\newtheorem{claim}{Claim}
\numberwithin{equation}{section}
\numberwithin{theorem}{section}
\numberwithin{lemma}{section}
\numberwithin{corollary}{section}
\numberwithin{definition}{section}
\numberwithin{observation}{section}
\DeclareMathOperator{\lm}{LM}
\DeclareMathOperator{\LP}{LP}
\newcommand{\arash}[1]{\noindent{\bf {\color{blue!60!black}{\sc Arash:}  #1}}}
\newcommand{\alantha}[1]{\noindent{\bf {\color{magenta!80!black}{\sc Alantha:}  #1}}}
\newenvironment{cproof}
{\begin{proof}
 [Proof.]
 \vspace{-1.5\parsep}
}
{ \end{proof}}
\begin{document}{\bibliographystyle{alpha}}

\title{Efficient constructions of convex combinations for 2-edge-connected subgraphs on
fundamental classes}

\author{\textsc{Arash Haddadan\thanks{University of Virginia,
      Charlottesville, VA, USA. {\tt{ahaddada@virginia.edu}}.}} \and \textsc{Alantha
    Newman\thanks{CNRS-Universit\'e Grenoble Alpes, 38000,
      France. {\tt{alantha.newman@grenoble-inp.fr}}.}}}

\maketitle

\begin{abstract}
We present coloring-based algorithms for tree augmentation and use
them to construct convex combinations of 2-edge-connected subgraphs.
This classic tool has been applied previously to the problem, but our
algorithms illustrate its flexibility, which---in coordination with
the choice of spanning tree---can be used to obtain various properties
(e.g., 2-vertex connectivity) that are useful in our applications.

We use these coloring algorithms to design approximation algorithms
for the 2-edge-connected multigraph problem (2ECM) and the
2-edge-connected spanning subgraph problem (2ECS) on two 
well-studied types of LP solutions.  The first type of points, half-integer
square points, belong to a class of {\em fundamental extreme points},
which exhibit the same integrality gap as the general case.  For
half-integer square points, the integrality gap for 2ECM is known to
be between $\frac{6}{5}$ and $\frac{4}{3}$.  We improve the upper
bound to $\frac{9}{7}$.  The second type of points we study are {\em
uniform points} whose support is a 3-edge-connected graph and each
entry is $\frac{2}{3}$.  Although the best-known upper bound on the
integrality gap of 2ECS for these points is less than $\frac{4}{3}$,
previous results do not yield an efficient algorithm.  We give the first
approximation algorithm for 2ECS with ratio below $\frac{4}{3}$ for
this class of points.

\end{abstract}

\section{Introduction}

Due, at least in part, to its similarities and connections to the
\textbf{traveling salesman problem (TSP)}, the
\textbf{2-edge-connected spanning multigraph problem (2ECM)} is a
well-studied problem in the areas of combinatorial optimization and
approximation algorithms, and the two problems have often been studied
alongside each other.  Let $G=(V,E)$ be a graph with edge weights $w
\in \mathbb{R}^E_+$.  TSP is the problem of finding a minimum weight
connected spanning Eulerian multigraph of $G$ (henceforth a
\textit{tour of $G$}).  Note that a tour is Eulerian and connected,
which implies that it is also 2-edge-connected. 2ECM is the problem of
finding a minimum weight 2-edge-connected spanning multigraph of $G$
(henceforth a \textit{2-edge-connected multigraph of $G$}) and is a
relaxation of TSP.  A well-studied relaxation for both TSP and 2ECM on
a graph $G=(V,E)$ is as follows.
\begin{align*}
\min& \quad wx\\
\text{subject to:}&\quad x(\delta(S))\geq 2& \text{for $\emptyset \subset S \subset V$}\\
&\quad x\geq 0.  
\end{align*}
Let $\HK(G)$ be the feasible region of this LP. The integrality gap
$\gap$ is defined as
\begin{equation*} \sup_{G,w}
  \frac{\min \{{wx \;:\; x \mbox{ is the incidence vector of a
        2-edge-connected multigraph of $G$}} \}}{\min\{wx\; :\;x\in
    \HK(G)\}}.
\end{equation*} 
Alternatively, $\gap$ is the smallest number such that for any graph
$G$ and for any $x\in \HK(G)$, vector $\gap x$ dominates
a convex combination of 2-edge-connected multigraphs of
$G$ (see \cite{Goemans95} or Theorem 1 of \cite{carrvempala}).  Wolsey's analysis of Christofides'
algorithm shows that $\gap \leq \frac{3}{2}$
\cite{christofides,wolsey1980heuristic}, which is currently the
best-known approximation factor for 2ECM.  This seems strange since
Christofides' algorithm finds tours, which are more constrained than
2-edge-connected multigraphs.

Stated as a potentially easier-to-prove variant of the famous
four-thirds conjecture for TSP, it has been conjectured that $\gap\leq
\frac{4}{3}$ (e.g., Conjecture 2 in \cite{Carr98}, Conjecture 1 in
\cite{alexander2006integrality} and Conjecture 4 in
\cite{Boyd-Carr-2011}).  However, in contrast to the four-thirds
conjecture, the largest lower bound only shows that $\gap \geq
\frac{6}{5}$ \cite{alexander2006integrality}. Based on this lower
bound and computational evidence, Alexander, Boyd and Elliott-Magwood
proposed the following stronger conjecture (Conjecture 6 in
\cite{alexander2006integrality}), to which we will refer as the
\textit{six-fifths conjecture}.

\begin{conjecture}\label{sixfifth}
	If $x\in \HK(G)$, then $\frac{6}{5}x$ dominates a convex
        combination of 2-edge-connected multigraphs of $G$. In other
        words, $\gap \leq \frac{6}{5}$.
\end{conjecture}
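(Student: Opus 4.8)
Conjecture~\ref{sixfifth} is a well-known open problem, so what follows is a research \emph{program} rather than a complete argument. By the Carr--Vempala equivalence between integrality gaps and uniform covers, proving $\gap \le \frac{6}{5}$ is the same as showing that for every $k$-edge-connected graph $G$, the everywhere $\frac{12}{5k}$ vector dominates a convex combination of 2-edge-connected multigraphs of $G$; the central case is $k=3$, namely writing the everywhere $\frac{4}{5}$ vector of a 3-edge-connected graph as a convex combination of 2-edge-connected subgraphs. The plan is: (i) reduce to extreme points of $\HK(G)$, and then to the fundamental classes of such points --- half-integer points, and within them the structured families such as the square points and triangle points; (ii) for each class, exhibit an explicit convex combination via the tree-plus-augmentation paradigm used in this paper; and (iii) combine the classes.

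For a fixed point $x$, the first step is to write a suitable scaling of $x$ as a convex combination of spanning trees using the spanning-tree polytope, seeking a decomposition with extra structure --- for instance the rainbow, low-congestion spanning-tree decomposition in the style of Boyd and Seb\H{o} --- so that on average no tree is heavy on any edge. Next, for each tree $T$ in the decomposition, I would augment $T$ to a 2-edge-connected spanning multigraph by adding a small, carefully chosen edge set, using the top-down coloring framework of Iglesias and Ravi to control, edge by edge and simultaneously over the whole family, how often each edge is used. The accounting goal is that every edge $e$ receives total weight at most $\frac{6}{5}x_e$; for the square points this is precisely the kind of argument that yields the $\frac{9}{7}$ bound in this paper, and the program is to sharpen it to $\frac{6}{5}$ and to extend it beyond square points.

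The hard part --- and the reason the conjecture remains open --- is twofold. First, one must cover \emph{all} extreme points: points with large denominators, or with deeply nested families of tight sets, do not admit the clean spanning-tree decompositions that make the square-point argument go through, and no classification of the fundamental classes fine enough to support a full case analysis is currently known. Second, even when $x$ decomposes nicely into trees, the augmentation step is inherently lossy: the 2-edge-cuts of a spanning tree are expensive to cover, and squeezing the loss from Christofides' $\frac{3}{2}$ (or the $\frac{9}{7}$ obtained here in a special case) down to $\frac{6}{5}$ appears to require a genuinely new charging idea, or an approach that does not route through a single spanning tree at all.
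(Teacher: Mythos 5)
This statement is the six-fifths conjecture, which the paper poses as an open problem and does not prove, so there is no proof of the paper's to compare against; you correctly recognize this, and what you have written is a research program rather than an argument. It cannot be accepted as a proof: every step that would carry logical weight is deferred. Beyond that, the program itself has concrete problems. In step (i) you propose to reduce to extreme points, then to half-integer points, then to square and triangle points. The only reduction of this kind that is actually known is the Boyd--Carr reduction to Boyd--Carr points (fractional edges forming disjoint 4-cycles), and these need \emph{not} be half-integer; the claim that half-integer points capture the worst case is itself a conjecture (essentially Conjecture 1.2 of the paper), so even a complete case analysis over half-integer fundamental classes would not yield $\gap \le \frac{6}{5}$ in general. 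Second, your framing of $k=3$ as ``the central case'' of the Carr--Vempala equivalence is misleading: the equivalence requires the everywhere $\frac{12}{5k}$ statement for \emph{every} $k$, and the $k=3$ instance (the everywhere $\frac{4}{5}$ vector on 3-edge-connected graphs) is in fact already known by Boyd and Legault, and was improved to $\frac{7}{9}$ by Legault, without this settling the conjecture.

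Quantitatively, the tools you invoke --- rainbow 1-tree decompositions and the Iglesias--Ravi top-down coloring framework --- are exactly the tools of this paper, and here they deliver an everywhere $\frac{7}{8}$ uniform cover (polynomial time) and a $\frac{9}{7}$ bound for half-integer square points, both strictly weaker than what the conjecture demands ($\frac{4}{5}$-type covers for all $k$, and $\frac{6}{5}$ overall). You do not supply the ``genuinely new charging idea'' you yourself identify as necessary, nor any mechanism for handling points with large denominators or nested tight sets, so the gap between the program and the conjecture is the entire content of the conjecture. In short: the statement remains open, your proposal mirrors the paper's own perspective on how one might attack it, and it neither contains nor could be completed into a proof as written.
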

Despite lack of progress in the general case, there has been some
progress towards validating this conjecture for special cases.
For example, in the unweighted case (when $w_e = 1$ for all $e\in E$),
there has been some success in beating the factor of
$\frac{3}{2}$~\cite{sebo12, boyd2014frac,bit13,takazawa}.  Another
important special case is half-integer points, which are conjectured
to exhibit the largest gap for TSP (e.g., see \cite{schalekamp20142,
  boydsebo-mp}).  Carr and Ravi proved that $\gap\leq \frac{4}{3}$ if
the optimal solution to $\min_{x\in \HK(G)} wx$ is
half-integer~\cite{Carr98}.  More specifically, they proved that if
$x\in \HK(G)$ and $x_e$ is a multiple of $\frac{1}{2}$ for $e\in E$,
then $\frac{4}{3}x$ dominates a convex combination of 2-edge-connected
multigraphs of $G$.  Very recently, Boyd et al. turned this proof into
a polynomial-time algorithm~\cite{boyd_et_al:LIPIcs:2020:12664}.

In this paper, we focus on two well-studied special cases of feasible
solutions for $\HK(G)$ both of which, directly or indirectly, can be
viewed as special cases of half-integer points.  Let $G=(V,E)$ be a
3-edge-connected graph.  Observe that $\frac{2}{3}\chi^E \in
\HK(G)$.  Such solutions can be reduced to half-integer solutions (see
proof of Theorem \ref{uniformcover-cubic}) and
provide a lower bound on the integrality gap for half-integer
solutions (see Theorem 4.10 \cite{arash-phd2020}).  Conjecture
\ref{sixfifth} implies that $\frac{4}{5}\chi^E$ can be written as a
convex combination of 2-edge-connected multigraphs of $G$.  
In fact, this was proved by Boyd and Legault, who actually proved the
stronger statement that $\frac{4}{5}\chi^E$ dominates a convex
combination of 2-edge-connected {\em subgraphs} of $G$ (a subgraph has
at most one copy of each edge)~\cite{philip6/7}.  The factor of
$\frac{4}{5}$ was subsequently improved even further to
$\frac{7}{9}$~\cite{philip}.

However, these proofs do not yield polynomial-time (approximation)
algorithms for the 2ECS problem, which is the problem of finding a
minimum weight 2-edge-connected spanning {\em subgraph} of $G$
(henceforth a {\em 2-edge-connected subgraph of $G$}).  This
gives rise to the following natural problem: \textit{Find a small
  positive rational number $\alpha$ such that for a 3-edge-connected
  graph $G=(V,E)$, the vector $\alpha \chi^E$ dominates a convex
  combination of 2-edge-connected subgraphs of $G$ and this convex
  combination can be found \textbf{in polynomial time}.}  The
best-known answer to this question is
$\frac{8}{9}$~\cite{cheriyan19992,uniform}.  (If we allow
2-edge-connected multigraphs instead of 2-edge-connected subgraphs,
the best-known answer to this question is
$\frac{15}{17}$~\cite{uniform}.)  In this paper, we improve this
factor.

\begin{restatable}{theorem}{uniformCover}
	\label{uniformcover}
	Let $G=(V,E)$ be a 3-edge-connected graph. The vector 
        $\frac{7}{8}\chi^E$ dominates a convex combination of
        2-edge-connected subgraphs of $G$. Moreover, this convex
        combination can be found in time polynomial in the size of
        $G$.
\end{restatable}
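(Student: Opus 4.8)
The plan is to reduce to cubic $3$-edge-connected graphs and then realize each $2$-edge-connected subgraph as a spanning tree together with a set of augmenting edges (``links''), drawing the tree from a structured convex combination of spanning trees and choosing the links with the top-down coloring framework of Iglesias and Ravi. By Lemma~2.2 of~\cite{philip6/7} it suffices to prove the statement when $G=(V,E)$ is $3$-edge-connected and cubic, so assume this. By the characterization of the spanning-tree dominant (a Nash--Williams-type partition argument: in a $3$-edge-connected graph every $r$-part partition has at least $\lceil 3r/2\rceil\ge r-1$ crossing edges), the everywhere $\frac23$ vector for $G$ dominates a convex combination of spanning trees, and one such combination $\sum_i\lambda_i\chi^{T_i}$ can be computed in polynomial time. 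The first substantive step is to strengthen this: using the rainbow spanning tree decomposition recently used by Boyd and Seb\H{o}, together with a near-proper edge-coloring of the cubic graph $G$, we obtain such a combination in which every $T_i$ is rainbow with respect to the coloring. This gives control over how the non-tree edges, and in particular the leaf edges, of each $T_i$ are distributed, which is exactly what will let us bound the augmentation cost on \emph{every} edge simultaneously rather than only on average.

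Next, fix a tree $T=T_i$, root it, and view each non-tree edge of $G$ as a link covering the tree path between its endpoints; then $T\cup F$ is a $2$-edge-connected spanning subgraph of $G$ precisely when the link set $F$ covers every edge of $T$. Two facts drive a cheap randomized choice of $F$. First, $3$-edge-connectivity of $G$ guarantees that every tree edge $f$ is crossed by at least two distinct links: if $S$ is a component of $T-f$ then $\delta(S)$ has at least three edges, exactly one of which is $f$, so at least two are links covering $f$, and hence there is always a genuine choice of covering link. Second, cubicity of $G$ together with the rainbow property limits how many tree edges a single link can be forced to cover at once. Running the top-down coloring of Iglesias and Ravi — exploiting its freedom in the choice of root, the processing order, and the assignment of links to colors — we obtain, for each $T_i$, a distribution over feasible link sets $F$ in which each edge is selected with small marginal probability; the target is that, after averaging over the choice of $i$ as well, every edge is used as a link with probability at most $\frac78-\frac23=\frac{5}{24}$.

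Putting the pieces together, sample the final convex combination by picking $i$ with probability $\lambda_i$ and then drawing $F$ from the augmentation distribution of $T_i$, outputting $H=T_i\cup F$. Each such $H$ is a $2$-edge-connected spanning subgraph of $G$, and for every edge $e\in E$,
\[
\Pr[e\in H]\ \le\ \Pr[e\in T_i]\ +\ \Pr[e\in F]\ \le\ \tfrac23+\tfrac{5}{24}\ =\ \tfrac78,
\]
so the everywhere $\frac78$ vector dominates this distribution, as required. Every ingredient runs in polynomial time: the reduction to cubic graphs, computing a spanning-tree convex combination dominated by the everywhere $\frac23$ vector (by matroid union, or by an LP with a polynomial separation oracle), the edge-coloring and the rainbow decomposition, the top-down coloring, and finally converting the per-tree distributions into an explicit convex combination of polynomially many subgraphs.

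The main obstacle is the uniform $\frac{5}{24}$ bound on the augmentation marginals. Producing \emph{some} cheap augmentation of a single tree is routine, but a single link can be the (near-)unique useful cover of several tree edges at once, so naive rounding uses it with probability close to $1$. Overcoming this is precisely where the flexibility of the top-down framework must be married to the structural guarantees of the rainbow decomposition: choosing the root and the coloring order so as to spread out the unavoidable choices, and using cubicity and $3$-edge-connectivity to reroute coverage onto under-used edges. I expect the bulk of the proof, and essentially all of its case analysis, to reside in establishing this bound.
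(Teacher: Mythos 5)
Your skeleton coincides with the paper's: reduce to cubic $3$-edge-connected graphs, write the everywhere $\frac23$ vector as a combination of spanning trees with structure extracted from the rainbow $1$-tree decomposition (Theorem \ref{boydsebo-rainbow}), and augment each tree via the top-down coloring framework so that each non-tree edge is used with conditional marginal at most $\frac58$, giving $\frac23+\frac13\cdot\frac58=\frac78$ as in Lemma \ref{coloringdownlemma}. However, the step you explicitly defer (``the uniform $\frac{5}{24}$ bound on the augmentation marginals'') is the entire technical content of the theorem, and as stated your plan for it has a concrete gap: you invoke only $3$-edge-connectivity, which gives $|\cov(e)|\geq 2$ for tree edges, and with just two covering links each carrying $5$ of $8$ colors no admissible coloring can be guaranteed for internal edges (the generic rules yield $5+2$ new colors, one short). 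The paper resolves this by first gluing over proper $3$-edge cuts (Lemma \ref{gluing}) to reduce to \emph{essentially $4$-edge-connected} cubic graphs, where every internal tree edge satisfies $|\cov(e)|\geq 3$ and only leaf and root edges have two covering links; those exceptional edges are then handled by dedicated coloring rules. Your proposal never performs this reduction, and without it the claimed per-edge bound fails, not merely becomes harder to prove.

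A second, related imprecision: the property the paper actually extracts from the rainbow decomposition is not that each tree is ``rainbow with respect to a near-proper edge-coloring'' (a cubic $3$-edge-connected graph need not be $3$-edge-colorable, e.g.\ the Petersen graph; the general case in Lemma \ref{rainbow2} uses cycle covers and pairs \emph{adjacent} half-edges so that each pair's common endpoint has degree two in every $1$-tree), but the specific structural guarantee that the \emph{leaf-matching links} of each rooted tree are vertex-disjoint. That guarantee is precisely what the admissible factor-$\frac58$ coloring of Lemma \ref{5/8} needs: its invariants (every tree edge misses $8,3,1$ or $0$ colors; the two leaf edges matched by a leaf-matching link always share a missing color) and the set-theoretic Claim \ref{coloring-shenanigan} are built around isolated leaf-matching links, and the argument breaks if two such links share a vertex. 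So while your high-level route is the paper's route, the proposal as written is missing the two ideas that make it work: the gluing reduction to essential $4$-edge-connectivity, and the vertex-disjointness of leaf-matching links together with the invariant-maintaining coloring that exploits it.
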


We can improve this factor to $\frac{41}{47}$ (Theorem
\ref{doubled-edges}) when we are finding convex combination of
multigraphs (i.e., when we allow doubled edges).  One consequence of
Theorem \ref{uniformcover} is a polynomial-time algorithm to write
$(\frac{6}{5} + \frac{1}{120})x$ as a convex combination of
2-edge-connected multigraphs when $x$ is a half-integer triangle point
(see Theorem 5.1 in \cite{arash-phd2020}.)  Half-integer triangle
points were introduced in \cite{Boyd-Carr-2011} as a class of points
for which the conjectured lower bound of $\frac{4}{3}$ for integrality
gap of TSP with the subtour elimination relaxation is achieved. Later
\cite{alexander2006integrality} showed that this class also achieves
the conjectured lower bound of $\frac{6}{5}$ for
$\alpha^{\HK}_{\EC}$. Boyd and Legault \cite{philip6/7} showed that
$\frac{6}{5}$ is also an upper bound on $\alpha^{\HK}_{\EC}$ when
restricted to half-integer triangle points. However, their proof does not
yield an (efficient) approximation algorithm.  We also remark that
very recently, Boyd et al., gave an efficient algorithm to write
$\frac{7}{8} \chi^E$ as a convex combination of 2-edge-connected {\em
  multigraphs} when $G=(V,E)$ is
3-edge-connected~\cite{boyd_et_al:LIPIcs:2020:12664}.

Another approach to the six-fifths conjecture is to consider so-called
{\em fundamental extreme points} introduced by Carr and
Ravi~\cite{Carr98} and further developed by Boyd and
Carr~\cite{Boyd-Carr-2011}.  A \textit{Boyd-Carr point} is a point $x$
of $\HK(G)$ that satisfies the following conditions. 
  \begin{itemize}
  	\item[(i)] The support graph of $x$ is cubic and 3-edge-connected.
 	\item[(ii)] For each vertex, there is exactly one
 incident edge $e$ such that $x_e = 1$ (a $1$-edge.)
  	\item[(iii)] The fractional edges form disjoint 4-cycles. 
  \end{itemize}
Boyd and Carr proved that in order to bound $\gap$ (e.g., to prove the
six-fifths conjecture), it suffices to prove a bound for Boyd-Carr
points~\cite{Boyd-Carr-2011}.  A generalization of Boyd-Carr points
are \textit{square points}, which are obtained by replacing each
1-edge in a Boyd-Carr point by an arbitrary-length path of 1-edges.
Half-integer square points are particularly interesting for various
reasons.  For every $\epsilon>0$, there is a half-integer square point
$x$ such that $(\frac{6}{5}-\epsilon)x$ does {\em not} dominate a
convex combination of 2-edge-connected multigraphs in the support of
$x$.  In other words, the lower bound for $\gap$ is achieved for
half-integer square points.  (This specific square point is discussed
in Section \ref{hard-to-round}.)  Furthermore, half-integer square
points also demonstrate the lower bound of $\frac{4}{3}$ for the
integrality gap of TSP with respect to the Held-Karp
relaxation~\cite{boydsebo-mp}.

Recently, Boyd and Seb{\H{o}} initiated the study of improving upper
bounds on the integrality gap for these classes and presented a
$\frac{10}{7}$-approximation algorithm (and upper bound on the integrality
gap) for TSP in the special case of {\em half-integer square points}.  They
pointed out that, despite their significance, not much effort has been
expended on improving bounds on the integrality gaps for these classes
of extreme point solutions.

In this paper, we focus on this class of solutions and we
improve the best-known upper bound
on $\gap$ for half-integer square points.  The best previously-known
upper bound on $\gap$ for half-integer square points is $\frac{4}{3}$,
which follows from the aforementioned bound of Carr and Ravi on
all half-integer points~\cite{Carr98}.  We note that there is also a
simple $\frac{4}{3}$-approximation algorithm using the observation
from \cite{boydsebo-mp} that the support of a square point is
Hamiltonian.  Our main result is to improve this factor.
\begin{restatable}{theorem}{squarepoint}
\label{square-point}
Let $x$ be a half-integer square point. Then $\frac{9}{7}x$ dominates
a convex combination of 2-edge-connected multigraphs in $G_x$, the
support graph of $x$.  Moreover, this convex combination can be found
in polynomial time.
\end{restatable}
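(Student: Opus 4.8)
The plan is to reduce the theorem to a purely combinatorial question about how to choose the square edges, and then to answer that question by applying the top‑down coloring (tree‑augmentation) framework to an auxiliary graph built from a Hamilton cycle of $G_x$. Write $E(G_x)=E_1\cup E_{1/2}$, where $E_1$ is the set of $1$‑edges (forming the internally disjoint paths that replace the $1$‑edges of the underlying Boyd--Carr point) and $E_{1/2}$ is the set of $\tfrac12$‑edges (forming vertex‑disjoint $4$‑cycles, the \emph{squares}). After subdividing if necessary we may assume every $1$‑edge lies on a path with an internal degree‑$2$ vertex, hence every $1$‑edge is forced into every $2$‑edge‑connected multigraph of $G_x$; so the only freedom is which square edges to take, and the constraint $x(\delta(v))\ge 2$ at the four corners of a square forces the chosen square edges to edge‑cover the $4$‑cycle --- i.e.\ to be two opposite edges, three edges, or all four. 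Since $\tfrac97\cdot 1=\tfrac97\ge 1$ and $\tfrac97\cdot\tfrac12=\tfrac{9}{14}$, it suffices to produce a convex combination of $2$‑edge‑connected subgraphs of $G_x$ in which each square edge is used with total weight at most $\tfrac{9}{14}$. For a single square the right target is to make it \emph{full} (all four edges) with probability $\tfrac27$ and each of its two opposite pairs with probability $\tfrac{5}{14}$; the entire difficulty is to couple these choices across the squares so that every sampled subgraph is $2$‑edge‑connected.

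\emph{Construction.} By the observation of Boyd and Seb\H{o}~\cite{boydsebo-ipco} the support $G_x$ has a Hamilton cycle $C$, and (using the subdivision above) $C$ uses exactly one opposite pair $P_i$ of edges from each square $Q_i$; write $\bar P_i$ for the complementary pair. Let $G'$ be the subgraph consisting of all $1$‑edges together with $\bigcup_i\bar P_i$ (``flip every square''); since every vertex has degree $2$ in $G'$ it is a disjoint union of cycles $\mathcal O_1,\dots,\mathcal O_r$, and because the two opposite edges of $\bar P_i$ pair the four pendant paths of $Q_i$ into two pairs, each square meets at most two of these cycles. Build the auxiliary multigraph $\Gamma$ with vertex set $\{\mathcal O_1,\dots,\mathcal O_r\}$ and one edge for each square $Q_i$, joining the (at most two) cycles that $Q_i$ meets; $\Gamma$ is connected since $G_x$ is. For $F\subseteq E(\Gamma)$ let $H_F$ be $G'$ together with $P_i$ for every $Q_i\in F$, i.e.\ make the squares of $F$ full. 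The connectivity lemma at the heart of the construction is: if $F$ is (the edge set of) a spanning connected subgraph of $\Gamma$, then $H_F$ is $2$‑edge‑connected --- each $\mathcal O_j$ is a cycle, a full square behaves as a $2$‑edge‑connected ``link'' joining two of these blobs by two edges on each side, and a spanning tree's worth of such links glues everything into one $2$‑edge‑connected graph. This is precisely a tree‑augmentation instance on the blobs $\mathcal O_j$, and it is the flexibility of the Iglesias--Ravi top‑down coloring framework --- in the choice of the base structure (here $C$, equivalently $G'$) and in which links are used --- that lets us realize the convex combination of the $H_F$ we need.

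\emph{Budget.} Put weight $\mu$ on $C$ and total weight $1-\mu$ on the $H_F$, distributed according to some distribution over spanning connected subgraphs $F$ of $\Gamma$. Each $1$‑edge then has weight $1\le\tfrac97$. Each ``flipped'' edge $\bar P_i$ lies in every $H_F$ and not in $C$, so it has weight $1-\mu$, and $1-\mu\le\tfrac{9}{14}$ forces $\mu\ge\tfrac{5}{14}$. Each ``kept'' edge $P_i$ lies in $C$ and lies in $H_F$ precisely when $Q_i\in F$, so it has weight $\mu+(1-\mu)\,\Pr[Q_i\in F]$; taking $\mu=\tfrac{5}{14}$ this is at most $\tfrac{9}{14}$ if and only if $\Pr[Q_i\in F]\le\tfrac49$. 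Hence it suffices to find a distribution over spanning connected subgraphs of $\Gamma$ with every edge‑marginal at most $\tfrac49$, and by the Nash--Williams/Tutte base‑packing characterization such a distribution exists whenever every partition of $V(\Gamma)$ into $p$ parts has at least $\tfrac94(p-1)$ crossing edges --- which holds in the ``dense'' case, when the squares supply many parallel connections among the $\mathcal O_j$. (More generally one balances $\mu$ against the best achievable marginal bound $\beta$, getting factor $\tfrac{2}{2-\beta}$, so $\beta=\tfrac49$ is exactly what yields $\tfrac97$.)

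\emph{Main obstacle.} The crux is the opposite, \emph{thin}, situation: if $G_x$ has small ($2$‑ or $3$‑edge) cuts separating clusters of squares then $\Gamma$ can be close to a tree, and no distribution over its spanning trees keeps all marginals below $\tfrac49$; this kind of thin structure is also what drives the $\tfrac65$ integrality‑gap lower bound for half‑integer square points, and dealing with it is where the real work lies. To get around it I would decompose $G_x$ along its $2$‑ and $3$‑edge cuts into ``chain‑like'' segments of squares; choose the Hamilton cycle $C$ (again exploiting the framework's flexibility, and possibly using three‑edge square configurations) so that it crosses each small cut the minimum number of times, which makes each side's contribution to the convex combination essentially forced and hence consistently combinable across a small cut; and for each segment --- which has bounded local structure --- build the required convex combination by an explicit finite case analysis, the factor $\tfrac97$ being what the tightest segment forces the technique to concede (there is no claim of tightness, since $\tfrac97>\tfrac65$). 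Finally, all of these steps --- a Hamilton cycle of $G_x$, the cut decomposition, the auxiliary graph $\Gamma$ and its fractional base packing, and the $O(1)$‑size per‑segment combinations --- are computable in polynomial time and produce a convex combination with polynomially many terms, which gives the ``moreover'' part of the theorem.
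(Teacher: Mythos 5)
Your construction up to the ``Budget'' paragraph is fine as far as it goes: $G'$ (all $1$-edges plus the flipped pair of every square) is indeed $2$-regular, the graphs $H_F$ are $2$-edge-connected whenever $F$ induces a connected spanning subgraph of the auxiliary graph $\Gamma$, and the marginal condition $\Pr[Q_i\in F]\le \tfrac49$ with $\mu=\tfrac5{14}$ is the right accounting for that family. But the proof has a genuine gap exactly where you flag ``the real work lies.'' Whenever $\Gamma$ has a bridge (and more generally whenever its Nash--Williams strength is below $\tfrac94$) \emph{every} connected spanning subgraph contains that bridge square, so $\Pr[Q_i\in F]=1$ and the kept pair $P_i$ receives weight $\mu+(1-\mu)=1>\tfrac9{14}$; such sparse $\Gamma$'s do occur (chain- and donut-like square points), so the scheme as specified simply does not prove the theorem. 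The proposed repair --- decompose along small cuts, re-choose the Hamilton cycle, do a finite case analysis per segment, ``possibly using three-edge square configurations'' --- is a plan, not an argument: no budget is verified for any configuration other than ``kept pair'' and ``full square,'' no consistency-across-cuts lemma is stated, and under your self-imposed restriction of never doubling $1$-edges the two configurations you do analyze provably cannot meet the $\tfrac9{14}$ bound at a bridge square. So the hard case is not handled, and it is not clear the approach can be completed without importing the very ingredient you excluded.

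That ingredient is the crux of the paper's proof, which after the same Boyd--Seb\H{o} Hamiltonian-cycle step diverges from yours. The paper contracts the squares to obtain a $4$-regular, $4$-edge-connected graph $G$ and proves the property $P(G,\tfrac1{10})$: the everywhere $\tfrac1{10}$ vector of $G$ is a convex combination of matchings whose complements are $2$-vertex-connected. Matched $1$-edges are \emph{doubled} in $G_x$ (giving them weight $1+\alpha$), and an orientation of each matched edge dictates which three half-edges of the adjacent squares to take; this is precisely what allows an off-cycle half-edge to be dropped to weight $1-\alpha$, and $\tfrac57 r^{1/10,x}+\tfrac27\chi^{H}\le\tfrac97 x$ follows. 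The density that makes this work is the $4$-edge-connectivity of the contracted graph --- a property of \emph{all} half-integer square points --- exploited via two edge-disjoint spanning trees, a subdivision reduction from $2$-vertex- to $2$-edge-connectivity, and an admissible top-down coloring with factor $\tfrac45$; your auxiliary graph $\Gamma$ carries no comparable connectivity guarantee, which is why your dense-case argument cannot be made unconditional. To salvage your route you would need either a doubling mechanism for $1$-edges with its own accounting (essentially re-deriving $P(G,\alpha)$) or a worked-out analysis of three-edge square configurations across small cuts; neither is present.
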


\subsection{Overview of our methods}

A common approach to TSP and 2ECM is to choose a spanning tree that is
specially tailored to a particular type of cheap augmentation (e.g.,
Gao trees for path-TSP~\cite{gao2013lp}, max-entropy spanning trees
for asymmetric TSP~\cite{asadpour2017log} and rainbow 1-trees for
TSP~\cite{boydsebo-mp}).  Our algorithms fall within this general
paradigm.  To construct our spanning trees and augmentations, we use
the following three key ingredients: (i) gluing solutions over 3-edge
cuts, (ii) rainbow spanning tree decompositions and (iii) the top-down
coloring framework for tree augmentation.  The first ingredient,
gluing solutions for the 2-edge-connected subgraph problem over 3-edge
cuts, was introduced by Carr and Ravi~$\frac{4}{3}$~\cite{Carr98} and
has by now become a standard tool~\cite{philip6/7,philip}.  In each of
these works, a common pitfall is that the number of times the gluing
procedure is applied is not provably polynomial, leading to
inefficient algorithms.  We take a different approach to ensure a
polynomial running time.  While we do use gluing in the proof of
Theorem \ref{uniformcover}, we use it more sparingly (i.e., only over
proper 3-edge cuts that appear in the initial input graph and
therefore only a polynomial number of times) and with the sole
objective of assuming that the input graph is essentially
4-edge-connected.

The second ingredient is rainbow spanning tree decompositions, which
were introduced by Broersma and Li~\cite{broersma1997spanning} and
recently employed by Boyd and Seb\H{o} who used it to control the
parity of cuts of spanning trees over certain 4-edge cuts in
half-integer square points~\cite{boydsebo-mp}.  In this paper, we
apply this tool in a new way and for a completely different purpose
(unrelated to parity), highlighting its flexibility and usefulness.
Roughly speaking, this tool allows the edges of a graph to be
partitioned (subject to certain constraints) so that only one edge
from each set in the partition belongs to a spanning tree.
Hypothetically, this powerful decomposition tool could be applied to
control properties of the spanning trees output in the convex
combination of $x \in \HK(G)$ in, for example, an implementation of
the best-of-many Christofides' algorithm.  However, exactly which
properties can be obtained and how to use such properties is not yet
clear; the power of this tool is likely still far from being fully
realized.  In this paper, we use it to obtain spanning trees in which
few pairs of adjacent vertices in the graph are both leaves of the
tree.

The third ingredient is coloring-based algorithms for tree
augmentation.  Such algorithms were recently
studied by Iglesias and Ravi who introduced a top-down coloring
algorithm for tree augmentation~\cite{IR}, which generalized results of
Cheriyan, Jordan and Ravi~\cite{cheriyan19992}.  A straightforward
application of the algorithm of Iglesias and Ravi can be used to prove Theorem
\ref{uniformcover} with a factor $\frac{8}{9}$.  In this paper, we significantly
extend this coloring-based framework.  We prove Theorem \ref{uniformcover} by
demonstrating that certain properties of the graph (e.g., essentially
4-edge-connected) and the spanning tree (e.g., the leaf structure
obtained via rainbow decomposition) can be used to design more careful
coloring rules which results in smaller tree augmentations.  In a key
step in the proof of Theorem \ref{square-point}, we tailor the
coloring rules to obtain a convex combination of 2-vertex-connected
subgraphs with minimum degree three.  The objective here is to
construct convex combinations which use the half-edges in the
half-integer square points sparingly.  The property of 2-vertex
connectivity and the fact that the complement of a subgraph forms a
matching crucially allows us to be more parsimonious with the
half-edges when constructing the convex combinations.  Thus, we
demonstrate that this coloring framework is a flexible and therefore
powerful tool in designing approximation algorithms for 2ECM and
related problems, and we believe it likely has further applications.

\section{Preliminaries and tools}\label{sec:tools}
In this paper, a graph may contain parallel edges.
We work with multisets of edges of $G=(V,E)$. For a multiset $F$ of
$E$, the submultigraph induced by $F$ (henceforth, we simply call $F$
a \textit{multigraph} of $G$) is the graph with the same number of
copies of each edge as $F$. A subgraph of $G$ has at most one copy of
each edge in $E$.  The incidence vector of multigraph $F$ of $G$,
denoted by $\chi^{F}$ is a vector in $\mathbb{R}^E$, where
$\chi^{F}_e$ is the number of copies of $e$ in $F$. For multigraphs
$F$ and $F'$ of $E$, we define $F+F'$ to be the multigraph that
contains $\chi^{F}_e+\chi^{F'}_e$ copies of each edge $e\in E$.  For a
subset $S \subset V$ of vertices, let $\delta(S)$ be the edges in $E$
with one endpoint in $S$ and one endpoint not in $S$. For a subgraph
$F$ of $G$, $\delta_F(S)=\delta(S)\cap F$.

If multigraph $F$ spans $V$ and is 2-edge-connected, we say $F$ is a
\textit{2-edge-connected spanning multigraph} of $G$ (or a
2-edge-connected multigraph of $G$ for brevity). If in addition, $F$
is a subgraph, we say $F$ is a 2-edge-connected subgraph of $G$.  Let
$\mathcal{F}$ be a collection of multigraphs of $G$.  We say
$\sum_{i=1}^k \lambda_i \chi^{F_i}$ is a {\em convex combination} of
$k$ multigraphs in $\mathcal{F}$ if $\lambda_i > 0$ for all $i \in
\{1,\ldots,k\}$ and $\sum_{i=1}^k \lambda_i = 1$.  Let $x$ be a vector
in $\mathbb{R}^E$.  We say {\emph{$x$ can be written as convex
    combination of multigraphs in $\mathcal{F}$}} if additionally $x =
\sum_{i=1}^k \lambda_i \chi^{F_i}$, and $\lambda_i$ and $F_i$ for $i
\in \{1,\ldots,k\}$ can be found in polynomial time in the size of
$x$.  Here by the size of $x$ we refer to $|E|$ (i.e., the number of
edges in the support of $x$).  For a vector $x,y\in \mathbb{R}^E$ we
say $y$ \textit{dominates} $x$ if $y_e\geq x_e$ for $e\in E$.  For any
$ x\leq y$, if $x$ can be written as a convex combination of
multigraphs in $\mathcal{F}$, then we say {\em $y$ dominates a convex
  combination of multigraphs in $\mathcal{F}$}.  Moreover, for $x \leq
y \leq 1$ ($x \leq y \leq 2$, respectively), if $x$ can be written as
a convex combination of 2-edge-connected subgraphs (multigraphs,
respectively), then $y$ can be written as a convex combination of
2-edge-connected subgraphs (multigraphs, respectively).  Thus, in our
proofs we sometimes use the phrase ``dominates a convex combination''
in place of ``written as a convex combination'' when the context is
appropriate.

 The support graph of $x$, denoted by $G_x$ is the graph induced on
 $G$ by $E_x = \{e\in E\; : \; x_e >0\}$. Vector $x$ is
 \textit{half-integer} if $x_e$ is a multiple of $\frac{1}{2}$ for all
 $e\in E$.  We say that an edge $e\in G_x$ is a \textit{1-edge} if
 $x_e=1$. Similarly, an edge is a \textit{half-edge} if $x_e =
 \frac{1}{2}$.

Finally, we say graph $G$ is {\em $k$-edge-connected} if
$|\delta(S)|\geq k$ for all $\emptyset\subset S\subset V$, and is {\em
  essentially $k+1$-edge-connected} if additionally $|\delta(S)|\geq
k+1$ for all $S\subset V$ with $2\leq |S| \leq |V|-2$.

Next, we introduce some key tools.

\subsection{Cycle covers}\label{tools:cc}

A \textit{cycle cover} $\C$ of graph $G=(V,E)$ is a subgraph of $G$
where every vertex $v\in V$ belongs to exactly one cycle in $\C$.  
We now present some (well-known) observations pertaining to cycle
covers that we will use.

\begin{observation}\label{obs:cycle-cover}
Let $G = (V,E)$ be a 2-edge-connected cubic graph.  Then vector
$\frac{2}{3}\chi^E$ can be written as a convex combination of cycle
covers of $G$ where each cycle cover covers the 3-edge cuts of $G$.
\end{observation}

\begin{proof}
The vector $\frac{1}{3}\chi^E$ belongs to the perfect matching
polytope (see Theorem 4 in \cite{naddef1981matchings}).  Thus, this
vector can be written as the convex combination of at most $|E|+1$
perfect matchings via Carath\'eodory's
Theorem~\cite{caratheodory1911variabilitatsbereich}.  Moreover, each
of these perfect matchings intersects each 3-edge cut of $G$ in
exactly one edge.  (This fact has been used many times and can be
considered folklore; see \cite{kaiser2005unions} for an application.)
In a cubic graph, the complement of a perfect matching is a cycle
cover.  This proves the observation.
\end{proof}

Notice that when we say a cycle cover {\em covers} a 3-edge cut, we
mean that the cycle cover intersects this cut in exactly two edges.
One useful property of a cycle cover that covers all 3-edge cuts of a
graph is that it does not contain any cycle with length less than four
(i.e., it does not contain a triangle).

\begin{observation}\label{obs:in-subtour}
	Let $G=(V,E)$ be 3-edge-connected cubic graph and
        $\mathcal{C}$ be a cycle cover of $G$ that covers all 3-edge
        cuts of $G$. Then the vector $x = \chi^E - \frac{1}{2}
        \chi^{\C}$ belongs to
        $\HK(G)$.  Moreover, $x(\delta(v)) = 2$ for all $v \in V$.
\end{observation}

\begin{proof}
	Take $\emptyset\subset U \subset V$.  If $|\delta(U)|\geq 4$,
        then clearly $x(\delta(U))\geq 2$.  Otherwise, $|\delta(U)| =
        3$.  In this case, since exactly two edges in $\delta(U)$ belong to
        $\mathcal{C}$, there is one edge $e\in \delta(U)$
        with $x_e=1$. Hence, $x(\delta(U)) = 2$. Therefore, $x \in
        \HK(G)$.
\end{proof}

\begin{theorem}[\cite{bit13}]\label{bit}
		Let $G=(V,E)$ be a 3-edge-connected cubic graph. We
                can in polynomial time find a cycle cover
                $\mathcal{C}$ of $G$ such that for every $S\subseteq
                V$ for which $|\delta(S)|\leq 4$, we have
                $\mathcal{C}\cap \delta(S) \neq \emptyset$ (i.e.,
                cycle cover $\mathcal{C}$ covers all 3-edge cuts and
                4-edge cuts of $G$).
	\end{theorem}

\subsection{Gluing over 3-edge cuts}\label{gluingtool}

A tool introduced by Carr and Ravi and frequently used for
constructing convex combinations of 2-edge-connected subgraphs is {\em
  gluing solutions over 3-edge cuts}~\cite{Carr98}.  This allows us to
focus on essentially 4-edge-connected graphs, as stated in the next
lemma whose full proof can be found in Appendix \ref{app:gluing}.

\begin{restatable}{theorem}{reductionToEssentiallyFourEC}\label{reductionToFourEC}
For $\alpha \in [0,1]$, the following two statements are equivalent.

\begin{enumerate}
\item For an essentially 4-edge-connected graph $G=(V,E)$ and any cycle
  cover $\C$ of $G$, the vector $\chi^E - \alpha \chi^{\C}$ can be written as a
  convex combination of 2-edge-connected subgraphs of $G$.

\item For a 3-edge-connected graph $G=(V,E)$ and any cycle
  cover $\C$ of $G$ that covers all the 3-edge cuts of $G$, the vector
  $\chi^E - \alpha \chi^{\C}$ can be written as a
  convex combination of 2-edge-connected subgraphs of $G$.

\end{enumerate}
\end{restatable}

\subsection{Rainbow 1-tree decomposition}\label{tools:rainbow}

Given a graph $G=(V,E)$, a {\em 1-tree} $T$ of $G$ is a connected
spanning subgraph of $G$ containing $|V|$ edges, where the vertex
labeled 1 has degree exactly two and $T\setminus{\delta(1)}$ is a
spanning tree on $V\setminus \{1\}$.  Boyd and Seb\H{o} proved the
following theorem (see Theorem 5 in \cite{boydsebo-mp}).  In fact,
they showed that the relevant decomposition can be found in time
polynomial in the size of graph $G$.

\begin{theorem}[\cite{boydsebo-mp}]\label{boydsebo-rainbow}
	Let $x\in \HK(G)$ be half-integer, $x(\delta(v))=2$ for all $v\in
        V$, and $\mathcal{P}$ be a partition of the half-edges into
        pairs. Then $x$ can be written as a convex combination of
        1-trees of $G$ such that each 1-tree contains exactly one edge
        from each pair in $\mathcal{P}$.
\end{theorem}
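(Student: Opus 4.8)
The plan is to realize the rainbow $1$-trees as the common bases of two matroids and then apply Edmonds' matroid intersection theorem. We may assume $x$ is supported on $G_x$ (any $1$-tree occurring in a decomposition of $x$ uses only edges with $x_e>0$), and since $x$ is half-integer with $x(\delta(v))=2$ and $G_x$ is $2$-edge-connected, every edge of $G_x$ is a $1$-edge or a half-edge. I would first observe that a $1$-tree of $G_x$ is exactly a spanning tree of $G_x-1$ together with two edges of $\delta(1)$, so the $1$-trees are precisely the bases of the matroid $M:=M_{\mathrm{graphic}}(G_x-1)\oplus U_{2,\delta(1)}$, where $U_{2,\delta(1)}$ is the rank-$2$ uniform matroid on the edges at vertex $1$. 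To encode the rainbow condition, let $N$ be the partition matroid that is free on the set of $1$-edges and has one part of capacity one for each pair of $\mathcal{P}$; its bases are exactly the sets consisting of all $1$-edges together with one edge from each pair. From $2x(E)=\sum_v x(\delta(v))=2|V|$ one gets $|V|=\#\{1\text{-edges}\}+|\mathcal{P}|$, so $r(N)=|V|=r(M)$, and a common base of $M$ and $N$ is precisely a $1$-tree containing every $1$-edge and exactly one edge of each pair --- i.e., a rainbow $1$-tree.

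The crux is to show that $x$ lies in the base polytopes of both $M$ and $N$. For $N$ this is immediate: $x_e=1$ on $1$-edges, and $x_e+x_{\ee}=\tfrac12+\tfrac12=1$ for each pair $\{e,\ee\}$ of half-edges. For $M$, since $M$ is a direct sum I would check the two parts separately. On $\delta(1)$ we have $x(\delta(1))=2$ and $0\le x\le 1$, which is exactly the base polytope of $U_{2,\delta(1)}$. On $E(G_x-1)$ I would invoke the standard computation that a subtour point satisfies $x(E[S])=\tfrac12(\sum_{v\in S}x(\delta(v))-x(\delta(S)))\le|S|-1$ for every vertex set $S$ (using $x(\delta(v))=2$ and $x(\delta(S))\ge2$), where $E[S]$ denotes the edges with both ends in $S$; together with $x(E(G_x-1))=x(E)-x(\delta(1))=|V|-2=|V(G_x-1)|-1$, these are precisely the defining inequalities and tight equality of the spanning-tree base polytope of $G_x-1$. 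This requires $G_x-1$ to be connected, which follows from the hypotheses: if $V\setminus\{1\}$ split into nonempty $S,S'$ with no $G_x$-edge between them, then $\delta(S),\delta(S')\subseteq\delta(1)$ together with $x(\delta(S)),x(\delta(S'))\ge2=x(\delta(1))$ would force all of vertex $1$'s positive edges to go into both $S$ and $S'$, which is impossible.

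With $x$ in the intersection of the base polytopes of $M$ and $N$, I would conclude as follows. This set is contained in the intersection of the two independent-set polytopes, which by Edmonds' matroid intersection theorem is the convex hull of common independent sets of $M$ and $N$; hence $x=\sum_i\lambda_i\chi^{I_i}$ with the $I_i$ common independent sets, $\lambda_i>0$, $\sum_i\lambda_i=1$. Since $x(E)=|V|=r(M)=r(N)$ and each $|I_i|\le|V|$, every $I_i$ must satisfy $|I_i|=|V|$ and is therefore a common base --- a rainbow $1$-tree. For the polynomial-time statement, I would produce this decomposition by the usual peeling/Carath\'eodory scheme: repeatedly use a matroid-intersection algorithm to find a common base $B$ and a maximal $\lambda>0$ with $x-\lambda\chi^{B}$ still in the matroid intersection polytope (after renormalization), subtract $\lambda\chi^{B}$, and recurse; this halts in $O(|E|)$ iterations, each polynomial.

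The step I expect to be the main obstacle is the base-polytope membership for $M$, especially handling vertex $1$ correctly: the reason one works with $1$-trees rather than spanning trees is exactly that $x(E)=|V|\ne|V|-1$, so the extra edge must be absorbed by forcing $\deg(1)=2$, and one has to verify that the uniform-matroid constraint at vertex $1$ and the graphic-matroid constraints on $G_x-1$ (including the connectivity check above) splice into the $1$-tree matroid with no slack. The remaining ingredients --- identifying the two matroids, the edge count, and quoting Edmonds' theorem --- are routine.
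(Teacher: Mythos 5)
The paper itself contains no proof of this statement: Theorem \ref{boydsebo-rainbow} is imported verbatim from \cite{boydsebo-ipco} (Theorem 5 there), so there is no internal argument to compare against. Your matroid-intersection proof is correct, and it is in essence the argument behind the cited result (going back to the rainbow spanning tree technique of Broersma and Li): 1-trees are exactly the bases of $M=M_{\mathrm{graphic}}(G_x-1)\oplus U_{2,\delta(1)}$, the pairing is encoded by a partition matroid $N$, and integrality of the matroid intersection polytope does the rest. The individual verifications are sound: every supported edge is a 1-edge or a half-edge (this uses the cut constraints, as you note via 2-edge-connectivity of $G_x$); membership of $x$ in the base polytope of $N$ is immediate from $x_e=1$ on 1-edges and $x_e+x_{e'}=1$ on pairs; membership in the base polytope of $M$ splits over the direct sum, with the spanning-tree part following from $x(E[S])=\tfrac12\bigl(2|S|-x(\delta(S))\bigr)\le |S|-1$ and $x(E(G_x-1))=|V|-2$, and your connectivity check for $G_x-1$ is valid; and the counting step $x(E)=|V|=r(M)=r(N)$ correctly upgrades the common independent sets produced by Edmonds' theorem to common bases, i.e.\ rainbow 1-trees.

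One small caveat concerns the algorithmic remark (which is not part of the theorem statement, only of the sentence following it in the paper): in the peeling scheme you cannot subtract an \emph{arbitrary} common base $B$, because the renormalized remainder $(x-\lambda\chi^{B})/(1-\lambda)$ violates any constraint that is tight at $x$ but slack at $\chi^{B}$, for every $\lambda>0$. The standard fix is to choose $B$ as a vertex of the minimal face of the common base polytope containing the current point (equivalently, a common base tight on all constraints tight at the current point), which guarantees that the number of linearly independent tight constraints increases each iteration and bounds the number of iterations by $|E|$; alternatively one can invoke a generic polynomial-time Carath\'eodory decomposition over a polytope admitting a polynomial-time separation oracle. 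With that adjustment the algorithmic claim also goes through.
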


To prove Theorem \ref{uniformcover}, we construct 2-edge connected
subgraphs by augmenting 1-trees.  However, it is often easier to think
of augmenting spanning trees.  We will use the term {\em connector} to
refer to a subgraph of $G$ that is either a spanning tree or a 1-tree
of $G$.  For a 1-tree $K$, let $C_K$ denote the unique cycle in $K$.
Note that $K/C_K$ is a spanning tree $T$ of $G/C_K$ that is obtained when
we contract that cycle to a single vertex.  We let the vertex
corresponding to the contracted $C_K$ be the root $r$ of $T$.  In a
connector, each vertex with degree 1 is a {\em leaf}.  The exception
to this is that the root of a spanning tree is not a leaf (even if it
has degree 1).  We have the following useful facts.

\begin{observation}\label{obs:connector2tree}
Let $G=(V,E)$ be a graph, let $K$ be a 1-tree of $G$ with cycle $C_K$
and let $T= K/C_K$ be a spanning tree with root $r$ (where $r$
corresponds to the contracted $C_K$).

\begin{enumerate}
\item Let $F \subset E(G/C_K) \setminus{T}$ and suppose that $T + F$
is 2-edge connected.  Then $K + F$ is 2-edge connected.

\item A vertex $v \in V$ is a leaf in $K$ iff it is a leaf in $T$.
\end{enumerate}
\end{observation}

\subsection{Tree augmentation and the top-down coloring
  framework}\label{subsec:coloring}

We now describe the {\em top-down coloring} framework which is key to
proving both of our main results.  Consider a graph
$G=(V,E)$ and a spanning tree $T$ of $G$. Let $L=E\setminus T$ be the
set of links, and let $w\in \mathbb{R}^L_{\geq 0}$ be a weight
vector. The tree augmentation problem asks for the minimum weight
$F\subseteq L$ such that $T+F$ is 2-edge-connected (i.e., $F$ is a
{\em feasible augmentation for $T$}).  For a link $\ell\in L$, denote
by $P_\ell$ the unique path between the endpoints of $\ell$ in
$T$. For an edge $e\in T$, we say $\ell \in \cov(e)$ if $e\in P_\ell$.

For $q \in \Zplus$ and vector $\p\in \Zplus^L$, where $\p =(p_1,
\ldots, p_{|L|})$ and $\pel \leq q$ for all $\ell \in L$, a {\em
  $(\p,q)$-coloring of $L$} is a function $\gamma : L \rightarrow
2^{\{c_1,\ldots,c_q\}}$ where $\gamma(\ell)$ has size at most $\pel$
for $\ell \in L$.  In other words, a $(\p,q)$-coloring of $L$ assigns at most
$\pel$ different colors from a set of $q$ available colors to each link
$\ell \in L$.  Although $\gamma$ is defined to be a function on $L$,
we abuse notation and for $e \in T$, we let $\rec(e) = \bigcup_{\ell
  \in cov(e)} \gamma(\ell)$ denote the set of (distinct) colors edge
$e \in T$ has received in the coloring $\gamma$.  For a
$(\p,q)$-coloring $\gamma$ of $L$, an edge $e \in T$ and $i \in \{1,
\ldots, q\}$, we say {\em $e$ received color $c_i$} if $c_i\in
\gamma(e)$, otherwise, we say {\em $e$ is missing color $c_i$}. We
denote the set of colors an edge $e$ is missing by $\bar{\gamma}(e)$
which is defined by $\{c_1,\ldots,c_q\}\setminus \gamma(e)$.

\begin{definition}
Let $\gamma$ be a $(\p,q)$-coloring of $L$. We say $\gamma$ is {\em
  $T$-admissible} if for each edge $e \in T$, $e$ has received all $q$
colors (i.e., for each $e \in T$, we have
$\rec(e) =\{c_1,\ldots,c_q\}$).
\end{definition}

We note that Observations \ref{combination}, \ref{firstlink} and
\ref{ansectory} are from \cite{IR}, but we present them here
using our notation.  

\begin{observation}\label{combination}
	Let $T$ be a spanning tree of $G=(V,E)$ and $L =
        E\setminus{T}$ be the set of links. If there exists a
        $T$-admissible $(\p,q)$-coloring of $L$, namely $\gamma$, then
        the vector $z\in \mathbb{R}^L$, where
        $z_\ell=\frac{\pel}{q}$ for $\ell\in L$, dominates a convex
        combination of feasible augmentations of $T$.  Moreover, given
        $\gamma$, this convex combination can be found in polynomial
        time.
\end{observation}
\begin{proof}
For $i\in \{1, \ldots, q\}$, let $A_i = \{\ell \in L : c_i \in
\gamma(\ell)\}$. By the definition of $T$-admissibility, for each
$e\in T$ and each color $c_i\in \{c_1,\ldots,c_q\}$ there is at least
one link $\ell\in L\cap \cov(e)$ with $c_i\in \gamma(\ell)$. Hence,
for each $i \in \{1, \ldots,q\}$, $A_i$ is a feasible augmentation for
$T$. Moreover, a link $\ell$ is in at most $\pel$ of
$\{A_1,\ldots,A_q\}$ since a link $\ell$ is colored with at most
$\pel$ colors.  Finally, observe that $\sum_{i=1}^{q} \frac{1}{q}
\chi^{A_i} \leq z$.
\end{proof}

Now we are almost ready to define a {\em top-down $(\p,q)$-coloring
  algorithm} for finding a $T$-admissible $(\p,q)$-coloring of the
links $L$.  We first need to introduce some more terminology.  If we
choose a vertex $r\in V$ to be the root of tree $T$, we can think of
$T$ as an arborescence, with all edges oriented away from the root.
For a link $\ell=uv$ in $L$, a {\em least common ancestor} of $u$ and
$v$, denoted by $LCA(\ell)$, is the vertex $s$ that has edge-disjoint
directed paths to $u$ and $v$ in $T$.  An edge $e \in T$
  is an ancestor of $f \in T$ if there is a directed path containing
  both $e$ and $f$.  (Note that $e$ is an ancestor of itself.)  By
{\em LCA order}, we mean the partial ordering of the links according
to their LCAs (i.e., if $LCA(\ell)$ is higher than $LCA(\ell')$, then
$\ell < \ell'$ in the partial order).  For a link $\ell = uv$ where $s
= LCA(\ell)$, we use $\Lu$ to denote the edges in $T$ on the path from
$s$ to $u$ and $\Rv$ for the edges in $T$ on the path from $s$ to $v$.

 In each iteration of a top-down $(\p,q)$-coloring algorithm, we choose
 a link $\ell\in L$ and color $\ell$ with $\pel$ different colors from a
 set of $q$ available colors, $\{c_1,\ldots,c_q\}$.  There are two key
 requirements: (i) the links are chosen in any order that respects the
 LCA order, and (ii) we continue until all links are colored.

After each iteration of a top-down $(\p,q)$-coloring algorithm, we
have a $(\p,q)$-coloring of the links, and if some links are not yet
colored, we sometimes refer to this as a {\em partial
  $(\p,q)$-coloring} of the links.  
For a partial coloring of the
links, $e \in T$ and $ i \in \{1,\ldots, q\}$, we say color $c_i$ is
{\em missing} for edge $e$ if no links in $\cov(e)$ have been colored
with $c_i$.
When coloring
link $\ell$ with $c_i$ as one of its $\pel$ colors, we say $e$ {\em
  receives a new color $c_i$} or {\em color $c_i$ is new for edge $e$}
if $e\in P_\ell$ and edge $e$ was missing $c_i$ before this iteration
of the algorithm.

\begin{observation}\label{firstlink}
Consider a partial $(\p,q)$-coloring $\gamma$ of $L$ produced after
some iterations of a top-down $(\p,q)$-coloring algorithm.  For an
edge $e \in T$, let $\ell$ be any link in $cov(e)$.  If $\ell$ is colored in $\gamma$, then $e$ has received at least
$\pel$ colors (i.e., $|\rec(e)|\geq \pel$).
\end{observation}

\begin{observation}\label{ansectory}
Consider a partial $(\p,q)$-coloring $\gamma$ of $L$ produced after
some iterations of a top-down $(\p,q)$-coloring algorithm.  Consider $e, f
\in T$ where $e$ is an ancestor of $f$.  For any link $\ell \in \cov(e)
\cap \cov(f)$ that is not colored in $\gamma$, 
any color given to $\ell$ that is new
for $e$ is also new for $f$.
\end{observation}

Now consider a partial $(\p,q)$-coloring $\gamma$ of $L$ produced
after some iterations of a top-down $(\p,q)$-coloring algorithm.  Let
$\ell \in L$ be a link that is not yet colored in $\gamma$ and let $P
\in \{\Rv, \Lu\}$.  Then $P=e_1,\ldots,e_k$ where $e_k$ is the
lowest edge in the path $P$.  By Observation \ref{ansectory}, we have
$\gamma(e_k)\subseteq \gamma(e_{k-1})\subseteq \ldots \subseteq
\gamma(e_1)$, or $\bar\gamma(e_1)\subseteq \ldots \subseteq
\bar\gamma(e_{k})$.  We define \textit{the top $i$ missing colors for
  path $P$}, denoted by $\bar\gamma_i(P)$, to be a set of size $i$
where $\bar\gamma(e_{j-1})\subset \bar\gamma_i(P)\subseteq
\bar\gamma(e_j)$ where $e_j$ is the highest edge in $P$ with
$|\bar\gamma(e_j)|\geq i$.  If no such $e_j$ exists, it must be that
$|\bar\gamma_i(e_k)| < i$ and we define $\bar\gamma_i(P)$ to be
$\bar\gamma(e_k)$ (in which case $|\bar\gamma_i(P)| < i$).

A $(\p,q)$-coloring algorithm is {\em $T$-admissible} if the final
$(\p,q)$-coloring of $L$ (i.e., after the last iteration) is
$T$-admissible. Throughout this paper, we prove that a top-down
coloring algorithm is $T$-admissible by showing that after the iteration in
the algorithm when all the links covering an edge $e \in T$ are
colored, the edge $e$
has received all $q$ colors.

\subsubsection{A simple application of the top-down coloring algorithm}

To illustrate the utility of the top-down coloring framework, we show
how it can be used to state a short proof of a theorem of DeVos,
Johnson and Seymour~\cite{devos2003cut}.  Here, the key fact is that
for each spanning tree $T$, a $T$-admissible top-down $(\p,q)$-coloring
algorithm produces only $q$ feasible augmentations.

\begin{theorem}[\cite{devos2003cut}]\label{devosJohnsonSeymour}
Let $G=(V,E)$ be a 3-edge-connected graph.  Then there exists a
partition of $E$ into sets $\{X_1, X_2, \ldots, X_9\}$ (where $X_i$ is
allowed to be empty) such that the graph $G_i = (V, E\setminus{X_i})$
is 2-edge-connected for $i \in \{1, \ldots, 9\}$.
\end{theorem}

Before we can prove Theorem \ref{devosJohnsonSeymour}, we need to
prove the following claim, which directly follows from \cite{IR}.

\begin{claim}\label{easy-color}
Let $G=(V,E)$ be a 3-edge-connected graph, let $T$ be a spanning
tree of $G$ with root $r$, and let $L = E \setminus{T}$.  Then there
is a $T$-admissible top-down $(2\chi^L,3)$-coloring algorithm to color the
links in $L$.
\end{claim}

\begin{proof}
Let $\gamma$ be the $(2 \chi^L,3)$-coloring of $L$ we maintain. At the start,
we have $\gamma(\ell)=\emptyset$ for all $\ell \in L$. To color link
$\ell = uv$, we use the following coloring rule.

\paragraph{Coloring Rule:} Give link $\ell$ colors $\bar\gamma_1(\Lu)$
and $\bar\gamma_1(\Rv)$. If either is empty, give $\ell$ an arbitrary
color that $\ell$ does not already have.

\vspace{5pt}

We now prove that this top-down coloring algorithm is
$T$-admissible. Consider an $e \in T$. If $e$ is an edge in $T$, then
since the graph is 3-edge-connected we have $|\cov(\ell)|\geq 2$.  Let
$\ell_1,\ell_2$ be two of the links in $\cov(e)$ with the highest
LCAs.

When coloring $\ell_1$, edge $e$ receives two new
colors by Observation \ref{firstlink}. Now consider
the iteration in which the algorithm colors $\ell_2$.  At
the time of coloring $\ell_2$, the top-down coloring algorithm that we 
described above will give $\ell_2$ at least one color that an ancestor
of $e$ is missing since $e$ is either in $\mathcal{R}_{\ell_2}$ or
$\mathcal{L}_{\ell_2}$.  By Observation \ref{ansectory}, we can conclude
that $e$ receives a new color after coloring $\ell_2$.  Thus, after we
have colored link $\ell_2$, edge $e$ has received at least $2+1=3$ colors.
\end{proof}

\begin{proof}[Proof of Theorem \ref{devosJohnsonSeymour}]
From the theorem of Nash-Williams~\cite{nw}, we know that $2G$
contains three edge-disjoint spanning trees of $G$.  Call these trees
$T_1, T_2$ and $T_3$.  Observe that each edge in $E$ is absent from at
least one of the three spanning trees.  For each $i \in \{1,2,3\}$, we
want to show that there is an admissible top-down
$(2\chi^L,3)$-coloring algorithm for $T_i$ and $L_i = E
\setminus{T_i}$.  Since $G$ is 3-edge-connected, we can apply Claim
\ref{easy-color}.  Observe that each link receives two colors and the
algorithm uses three colors in total.

For each $i \in \{1,2,3\}$, we obtain three augmentations $A^j_i
\subset L_i$ for $j \in \{1,2,3\}$ such that $A^j_i \cup T_i$ is
2-edge-connected.  The set $A^j_i$ contains all links in $L_i$ that
received color $j$ as one of their two colors.  Let $X^j_i = L_i
\setminus{A^j_i}$ be the set of links in $L_i$ that did not receive
color $j$.  Then for each $e \in L_i$, $e$ belongs to $X^j_i$ for some
$j \in \{1,2,3\}$.  Since each edge $e \in E$ belongs to $L_i$ for
some $i \in \{1,2,3\}$, we conclude that each edge $e \in E$ belongs
to at least one of the nine sets $X^j_i$ for $i,j \in \{1,2,3\}$.
\end{proof}

The top-down coloring framework might have further applications for
problems in which the objective is to obtain a convex combination of
few subgraphs.  Such problems were recently explored by H\"orsch and
Szigeti~\cite{horscheulerian}.

\section{Uniform cover for 2-edge-connected subgraphs}\label{sec:uniform}

Boyd and Legault \cite{philip6/7} showed that to prove
Theorem \ref{uniformcover}, it suffices to prove it for all cubic
3-edge-connected graphs (See Lemma 2.2 of \cite{philip6/7}). 

\begin{theorem}\label{uniformcover-cubic}
	Let $G=(V,E)$ be a 3-edge-connected cubic graph. Then
	$\frac{7}{8} \chi^E$ can be written as a convex combination of
	2-edge-connected subgraphs of $G$. 
\end{theorem}

Recall that we say a vector can be {\em written as a
  convex combination of subgraphs} if the convex multipliers and the
respective subgraphs can be constructed in polynomial time.  In order
to prove Theorem \ref{uniformcover-cubic} we prove the following
theorem.

\begin{theorem}\label{ccpoint-cubic}
	Let $G=(V,E)$ be a 3-edge-connected cubic graph and let
        $\mathcal{C}$ be a cycle cover of $G$ covering all 3-edge cuts
        of $G$. The vector $\chi^E - \frac{3}{16}\chi^{\mathcal{C}}$
        can be written as a convex combination of 2-edge-connected
        subgraphs of $G$.
\end{theorem}

\begin{proof}[Proof of Theorem \ref{uniformcover-cubic}]
Let $\sum_{i=1}^{k}\lambda_i \chi^{\mathcal{C}_i} = \frac{2}{3}
\chi^E$ be the convex combination of cycle covers obtained via
Observation \ref{obs:cycle-cover}. By Theorem \ref{ccpoint-cubic}, for
each $i\in\{1,\ldots,k\}$ we can write $\chi^E -
\frac{3}{16}\chi^{\C_i}$ as convex combination of 2-edge-connected
subgraphs of $G$. Hence, $\sum_{i=1}^{k} \lambda_i (\chi^E -
\frac{3}{16}\chi^{\C_i}) = \chi^E - \frac{1}{8}\chi^E =
\frac{7}{8}\chi^E$, and we conclude that $\frac{7}{8} \chi^E$ can be
written as a convex combination of 2-edge-connected subgraphs of $G$.
\end{proof}

Furthermore, applying Theorem \ref{reductionToFourEC} we can focus on
proving the next lemma, which implies Theorem \ref{ccpoint-cubic}.

\begin{lemma}\label{1-13/16}
Let $G=(V,E)$ be an essentially 4-edge-connected cubic graph and
$\mathcal{C}$ be a cycle cover of $G$. The vector $\chi^E -
\frac{3}{16}\chi^{\mathcal{C}}$ can be written as a convex combination
of 2-edge-connected subgraphs of $G$.
\end{lemma}

Our approach to proving Lemma \ref{1-13/16} is based on the top-down
coloring framework introduced in Section \ref{subsec:coloring}.  This
allows us to avoid gluing completely when dealing with an essentially
4-edge-connected cubic graph (in contrast to \cite{philip6/7},
\cite{philip}).  In particular, in an essentially 4-edge-connected
graph, if we consider any spanning tree $T$, then any edge $e\in T$
that is not incident to a leaf vertex is covered by at least three
links (i.e., $|\cov(e)|\geq 3$), as opposed to only two links if the
graph is only 3-edge-connected.  Therefore, assigning fewer colors to
each link still satisfies the requirements of the top-down coloring
algorithm for most of the edges in $T$.  The problematic links are
those that are incident to two leaves, since we cannot satisfy the
color requirements of both adjacent tree edges using fewer colors on
these links.  These problematic links (called {\em leaf-matching
  links}) must be assigned more colors.  Using a specially designed
rainbow 1-tree decomposition, we can ensure that there are actually
few such problematic links.  We now present some necessary
definitions.

\begin{definition}
Let $K$ be a connector of a graph $G=(V,E)$ and let $L =
E\setminus{K}$ denote the set of {\em links}.  We say an edge $e=uv
\in L$ is a {\em leaf-matching link} for $K$ if both $u$ and $v$ are
leaves in $K$.  We denote by $\lm(G,K)$ the set of leaf-matching links
for $K$ in $G$.
\end{definition}

The following lemma shows that using the top-down coloring algorithm
we can find feasible augmentations that are ``cheap'' when there are
few leaf-matching links.

\begin{restatable}{lemma}{topdownFewLM}\label{lem:newlemma}
	Let $H=(U,F)$ be an essentially 4-edge-connected graph and let
        $T$ be a spanning tree of $H$ with root $r$. Then we can find
        a $(\chi^{\lm(H,T)}+3\chi^{L},5)$-coloring of $L = F\setminus
        T$ in polynomial time.
	\end{restatable}

Lemma \ref{lem:newlemma} is not strong enough to prove Lemma
\ref{1-13/16}, but its proof helps illustrate our tools and
techniques.  The next lemma states that we can in fact find a convex
combination of 1-trees with few leaf-matching links (i.e., each edge
in $\C$ is a leaf-matching link in at most $\frac{1}{10}$ fraction of
the convex combination).

\begin{restatable}{lemma}{rainbowDecom} \label{rainbow2}
	Let $G=(V,E)$ be a 3-edge-connected cubic graph and let $\C$
        be a cycle cover of $G$.  The vector $\chi^E -
        \frac{1}{2}\chi^{\C}$ can be written as a convex combination
        of 1-trees $\{K_1, \dots, K_{k}\}$ of $G$.  Moreover, this
        convex combination (i.e., $\sum_{i=1}^k \lambda_i \chi^{K_i}$
        where $\lambda_i \in [0,1]$ and $\sum_{i=1}^k \lambda_i = 1$)
        has the following two properties: (i) $\sum_{i:e\in
          \lm(G,K_i)} \lambda_i \leq \frac{1}{10}$ for $e\in \C$, (ii)
        the links in $\lm(G,K_i)$ are vertex-disjoint.
\end{restatable}

We show how to use property (i) from Lemma \ref{rainbow2} to obtain a
convex combination with a weaker bound than that proved in Lemma
\ref{1-13/16}.  For an essentially 4-edge-connected cubic graph
$G=(V,E)$, let ${\cal C}$ be a cycle cover and let $y= \chi^E -
\frac{1}{2}\chi^{\C}$.  Then we show that $\chi^E -
\frac{9}{50}\chi^{\C}$ can be written as a convex combination of
2-edge-connected subgraphs of $G$.  Let $\sum_{i=1}^{k}\lambda_i
\chi^{K_i}$ be the convex combination of $y$ obtained via Lemma
\ref{rainbow2}. For $i \in \{1,\ldots,k\}$, let $H_i = G/C_{K_i} =
(U_i,F_i)$, $T_i = K_i/C_{K_i}$ and define
$\p^i=\chi^{\lm(H_i,T_i)}+3\chi^{F_i\setminus T_i}$. Lemma
\ref{lem:newlemma} implies that we can find a $(\p^i,5)$-coloring of
$F_i\setminus T_i$ for $i\in\{1,\ldots,k\}$. By Observation
\ref{combination}, we have $\frac{\p^i}{5}= \sum_{j=1}^{5}\frac{1}{5}
\chi^{A^i_j}$ where $A^i_j$ is a feasible augmentation for $T_i$ and
hence also for $K_i$.

Let $z^i_j = \chi^{K_i + A_j^i} $ be the characteristic vector of the
corresponding 2-edge-connected subgraph of $G$. Then, we have
\begin{eqnarray*}
z^i & = & \sum_{j=1}^5\frac{1}{5} z^i_j = \sum_{j=1}^5 \frac{1}{5}
\chi^{K_i + A_j^i} = \chi^{K_i} + \sum_{j=1}^5 \frac{1}{5} \chi^{A_j^i}\\ 
& = & \chi^{K_i} + \frac{3}{5} \chi^{L_i} +
\frac{1}{5}\chi^{\lm(H_i,T_i)}.
\end{eqnarray*}
Notice that $\sum_{i=1}^k \lambda_i \chi^{L_i} \leq
\frac{1}{2}\chi^{\cal{C}}$.  Moreover, by
property (i) of Lemma
\ref{rainbow2}, $\sum_{i=1}^k \lambda_i \chi^{\lm(H_i,T_i)} \leq
\frac{1}{10} \chi^{\C}$.
Next, we claim that the vector $z$ can be
written as a convex combination of 2-edge-connected subgraphs of $G$.
\begin{eqnarray*}
z  & = &  
\sum_{k=1}^k \lambda_i z^i = \sum_{i=1}^k\lambda_i \left(\chi^{K_i} + \frac{3}{5} \chi^{L_i} + \frac{1}{5}\chi^{\lm(H_i,T_i)}\right) \\
& = & \sum_{i=1}^k\lambda_i \chi^{K_i} + \frac{3}{5}\sum_{i=1}^k \lambda_i 
\chi^{L_i}  + 
\frac{1}{5}\sum_{i=1}^k\lambda_i \chi^{\lm(H_i,T_i)} \\
& \leq &
(\chi^E - \frac{1}{2} \chi^{\cal{C}}) + \frac{3}{10}\chi^{\cal{C}} + 
\frac{1}{50} \chi^{\C} = \chi^E - \frac{9}{50}\chi^{\C}.
\end{eqnarray*}

This is slightly worse than the factor promised in Lemma
\ref{1-13/16}. We show that by paying a bit more on non leaf-matching
links and exploiting a different property of the leaf-matching links,
namely that they can be vertex-disjoint (property (ii) in Lemma
\ref{rainbow2}), we can obtain the desired factor.

\begin{restatable}{lemma}{topdownfiveovereight}\label{5/8} 
	Let $H=(U,F)$ be an essentially 4-edge-connected graph
 and let
        $T$ be a spanning tree of $H$ with root $r$. If the edges in
        $\lm(H,T)$ are vertex-disjoint, then we can find a
        $T$-admissible $(5\chi^{L},8)$ coloring of $L = F\setminus{T}$
        in polynomial time.
\end{restatable}

The proof of Lemma \ref{5/8} is based on a top-down coloring algorithm
and is deferred to Section \ref{subsec:main-coloring-algs}. 

\begin{proof}[Proof of Lemma \ref{1-13/16}]
Let $y= \chi^E - \frac{1}{2}\chi^{\C}$. Let $\sum_{i=1}^{k}\lambda_i
\chi^{K_i}$ be the convex combination of vector $y$ obtained via Lemma
\ref{rainbow2}.  We now set $H_i = G/C_{K_i} = (U_i,F_i)$ and $T_i =
K_i/C_{K_i}$ (recall that $C_{K_i}$ is the unique cycle in the 1-tree
$K_i$).  By Lemma \ref{5/8}, we can find a $(5\chi^{F_i\setminus
  T_i},8)$-coloring of $L_i = F_i\setminus T_i$ for $i \in
\{1,\ldots,k\}$. By Observation \ref{combination}, we have
$\frac{5}{8}\chi^{L_i}= \sum_{j=1}^{8} \frac{1}{8} \chi^{A^i_j}$ where
$A^i_j$ is a feasible augmentation for $T_i$ and therefore for $K_i$
(by Observation \ref{obs:connector2tree}).  In other words, $K_i +
A_j^i$ for $i \in \{1, \ldots, k\}$ and $j \in \{1, \ldots, 8\}$ is a
2-edge-connected subgraph of $G$.  Let $z^i_j = \chi^{K_i + A_j^i} $
be the characteristic vector of the corresponding 2-edge-connected
subgraph of $G$. Then, we have
\begin{eqnarray*}
z^i = \sum_{j=1}^8\frac{1}{8} z^i_j = \sum_{j=1}^8 \frac{1}{8}
\chi^{K_i + A_j^i} = \chi^{K_i} + \sum_{j=1}^8 \frac{1}{8} \chi^{A_j^i} =
\chi^{K_i} + \frac{5}{8} \chi^{L_i}.
\end{eqnarray*}
Notice that $\sum_{i=1}^k \lambda_i \chi^{L_i} \leq
\frac{1}{2}\chi^{\cal{C}}$.  Next, we claim that the vector $z$ can be
written as a convex combination of 2-edge-connected subgraphs of $G$.
\begin{eqnarray*}
z  & = &  \sum_{k=1}^k \lambda_i z^i = \sum_{i=1}^k \lambda_i (\chi^{K_i} +
\frac{5}{8} \chi^{L_i}) = \sum_{i=1}^k \lambda_i \chi^{K_i} +
\frac{5}{8} \sum_{i=1}^k \lambda_i \chi^{L_i}\\  & \leq &
(\chi^E - \frac{1}{2} \chi^{\cal{C}}) + \frac{5}{16}\chi^{\cal{C}} = 
\chi^E - \frac{3}{16}\chi^{\cal{C}}.
\end{eqnarray*}
This concludes the proof of Lemma \ref{1-13/16}.
\end{proof}

\subsection{Coloring algorithms: Proofs of Lemmas \ref{lem:newlemma} and \ref{5/8}}\label{subsec:main-coloring-algs}

\topdownFewLM*

\begin{proof}
Let $\p= \chi^{\lm(H,T)}+3\chi^{L}$. We show that there is
$T$-admissible top-down $(\p,5)$-coloring algorithm of the links in
$L$.

Let $\gamma$ be the $(\p,5)$-coloring of $L$ that we
maintain. Initially, we have $\gamma(\ell)=\emptyset$ for $\ell \in
L$. Suppose we want to color link $\ell=uv$ at some iteration of the
algorithm.

\paragraph{Coloring Rule:} Give $\ell$ the colors in $\bar{\gamma}_2(\Lu)$ if $u$ is a leaf in $T$. If $u$ is not a leaf, give $\bar{\gamma}_1(\Lu)$ to $\ell$. Similarly, give $\ell$ the colors in $\bar{\gamma}_2(\Rv)$ if $v$ is a leaf in $T$, and if $v$ is not a leaf, give $\bar{\gamma}_1(\Rv)$ to $\ell$. If $\ell$ has fewer than three distinct colors,  give it arbitrary colors that it does not already have.

 	\vspace{5pt}

We now prove that this top-down coloring algorithm is
	$T$-admissible. Consider an $e \in T$. If $e$ is an internal edge of $T$
	(not incident on any leaf), then since the graph is essentially
	4-edge-connected we have $|\cov(e)|\geq 3$. Let
	$\ell_1,\ell_2,\ell_3$ be three of the links in $\cov(e)$ with the
	highest LCAs. When coloring $\ell_1$, edge $e$ receives three new
	colors by Observation \ref{firstlink}. Now consider
	the iteration in which the algorithm colors $\ell_i$ for some $i \in \{2,3\}$. At
	the time of coloring $\ell_i$,  the top-down coloring algorithm that we
	described above will give $\ell_i$ at least one color that an ancestor
	of $e$ is missing since $e$ is either in $\mathcal{R}_{\ell_i}$ or
	$\mathcal{L}_{\ell_i}$. By Observation \ref{ansectory}, we can conclude
	that $e$ receives a new color after coloring $\ell_i$.  Thus, after we
	have colored link $\ell_3$, edge $e$ has received at least $3+1+1=5$ colors.
	
	If $e$ is incident to a leaf, then $|\cov(e)| \geq 2$.  Let
        $\ell_1, \ell_2$ be two of the links in $\cov(e)$ with the
        highest LCAs.  When coloring $\ell_1$, edge $e$ receives three
        new colors by Observation \ref{firstlink}.  When coloring
        $\ell_2$, receives two colors that $e$ is missing by
        Observation \ref{ansectory}. So in total $e$ receives at least
        $3+2 = 5$ colors.  This concludes the proof of
        $T$-admissibility of the coloring algorithm.
	
Finally notice that each link in $L$ receives at least three colors by
construction. Moreover, if $\ell \in \lm(H,T)$, then $\ell$ is colored
with at most four colors. Therefore, the number of colors given to a link
$\ell$ is at most $\pel$ as desired.\end{proof}

To prove Lemma \ref{5/8}, we need a different strategy to handle the
leaf-matching links.  In fact, there is only one case in which
coloring a leaf-matching link is problematic, which we describe next.
Recall that the top-down coloring algorithm colors the links in any
order that respects the partial order according to their LCAs.

\begin{definition}
Consider $\ell \in \lm(H,T)$ where $\ell = uv$.  Let $\ell_u$ be 
the (only) other link that is incident on $u$ and $\ell_v$ be the (only)
other link incident on $v$.  If $\ell$ is colored after both $\ell_u$
and $\ell_v$, then we say that link $\ell$ is a \em{bad link}.
\end{definition}

For example, suppose vertex $u$ and $v$ each have degree three in $H$.
If the LCA of $\ell$ is lower than that of either $\ell_u$ or
$\ell_v$, then $\ell$ is a bad link.  We call such links ``bad'' for
the following reason.  Suppose For $p,q \in \Zplus$, suppose that we
have a partial $ (p\cdot \chi^L,q)$-coloring $\gamma$ of $L$ obtained
during some iteration of a top-down coloring algorithm.  Suppose that
$u$ and $v$ each have degree three, and suppose that both $\ell_u$ and
$\ell_v$ are both colored in $\gamma$, but link $\ell$ has not yet
been colored.  Before we color link $\ell$, the leaf edges $e_u$ and
$e_v$ (a {\em leaf edge} is the unique edge in $T$ incident to a leaf)
adjacent to $\ell_u$ and $\ell_v$, respectively, are each missing
$q-p$ colors.  If these two sets of missing colors are disjoint and $p
< 2(q-p)$,~\footnote{If $p \geq 2(q-p)$, then $p/q \geq 2/3$, which is
  not small enough for our applications.} then we will not be able to
color the link $\ell$ with $p$ colors so that $\ell_u$ and $\ell_v$
receive all $q$ colors.

To address this issue, consider the case in which $\ell$ is a
leaf-matching ink and our algorithm colors the links $\ell_u, \ell_v,
\ell$ in this order.  When we color $\ell_v$, we want the respective
set of $p$ colors to sufficiently overlap with the set of $p$ colors
already assigned to $\ell_u$; in other words, we want the set of
colors missed by $e_u$ and $e_v$ to overlap.  This way, we will be
able to ensure that $e_u$ and $e_v$ receive all $q$ colors when we
finally color the link $\ell$ with $p$ colors.  If all leaf-matching
links are vertex-disjoint, then notice that $\ell_u$ and $\ell_v$ are
not leaf-matching links.  Furthermore, link $\ell_u$ will share an
endpoint with at most one leaf-matching link, which in this case is $\ell$.  If link $uv$ is a
leaf-matching link, then we say $u$ and $v$ are {\em leaf-mates}.
This is the intuition behind the proof of Lemma \ref{5/8}, which we
now present.

\topdownfiveovereight*

\begin{proof}
We introduce a top-down $(5\chi^L,8)$-coloring algorithm of $L$, and
we then prove that it is $T$-admissible.

Since this is a top-down coloring algorithm, we sort the links by the
height of their LCA.  When we color a link $\ell$, we give it five
different colors before moving to the next link. Hence, the algorithm
runs in $|L|$ iterations. After each iteration $i \in
\{1,\ldots,|L|\}$ of the algorithm, we have a partial $(5\chi^L,8)$ coloring of $L$, namely $\gamma^i$.

 We show that our coloring algorithm will maintain two additional invariants:
	\begin{itemize}
	\item[(a)] For any coloring $\gamma^i$, an edge $e$ can only
          miss $8,3,1,$ or $0$ colors for $i \in \{1,\ldots,|L|\}$.
	\item[(b)] If both $u$ and $v$ have degree three in $H$, and
          if $\ell=uv$ is a leaf-matching link for $T$, then in any
          coloring $\gamma^i$ for which both $e_u$ and $e_v$ are
          missing a color, they miss a common color in
          $\gamma^i$. (For leaves $u$ and $v$ in $T$, let $e_u$ and
          $e_v$ be the leaf edges in $T$ incident on $u$ and $v$,
          respectively.)
	\end{itemize}
	Suppose we are performing iteration $i+1$ of the algorithm and we want
        to color link $\ell = uv$.

\paragraph{Coloring Rules:}  Depending on $u$ and $v$ we will
do one of the following.  We classify the root as an internal vertex.

\begin{itemize}
	\item[Case 1.] If both $u$ and $v$ are internal vertices in $T$,
          then give $\ell$ all colors in $\bar{\gamma}^i_2(\Lu) \cup
          \bar{\gamma}^i_2(\Rv)$.  At this point $\ell$ will have at
          most four colors. Give a color that $\ell$ does not already
          have until it has five distinct colors.

	\item[Case 2.] {If $u$ is a leaf in $T$ and $v$ is an internal
          vertex of $T$, then we consider two cases.
			
\begin{enumerate}

\item[Case 2a:] If $\ell_{uw}$ is a bad link and link $\ell_w$ is
  already colored (where $\ell_{uw}$ is the link between $u$ and
  $w$, and $\ell_w$ be the other link incident on $w$), then we choose
  five colors $C'$ for $\ell$ in the following way.  Let 
 $\gamma^i(\ell_w)$ is its set of five colors already assigned to $\ell_w$.
  By
  Claim \ref{coloring-shenanigan} we can choose five colors $C'$ for
  $\ell$ such that $\bar{\gamma}^i_1(\Lu)\in C'$,
  $\bar{\gamma}^i_1(\Rv)\in C'$, $|C'\cap \bar{\gamma}^i_3(\Lu)|\geq
  2$, $|C'\cap \bar{\gamma}^i_3(\Rv)|\geq 2$, and $|C'\cap
  \gamma^i(\ell_w)|\geq 3$. (Specifically, let $a =
  \bar{\gamma}^i_1(\Lu), b = \bar{\gamma}^i_1(\Rv), A =
  \bar{\gamma}^i_3(\Lu), B = \bar{\gamma}^i_3(\Rv), C_5 =
  \gamma^i(\ell_w)$ and $S = C'$.)

\item[Case 2b:] Otherwise (i.e., if $\ell_{uw}$ is a bad link and
  $\ell_w$ is not already colored, or if $\ell_{uw}$ is not a bad
  link), give $\ell$ the colors from $\bar{\gamma}^i_2(\Rv)$ and all
  colors in $\bar{\gamma}^i_3(\Lu)$.  If $\ell$ has fewer than five
  distinct colors, we give it any color it does not already have until
  it has five distinct colors.
				
\end{enumerate}
}
		
\item[Case 3.] {If both $u$ and $v$ are leaves in $T$, then we consider
  two cases.  Let $e_u$ and $e_v$ be the edges in the tree incident on $u$ and
 $v$, respectively. 

\begin{enumerate}

\item[Case 3a:] If both $u$ and $v$ have degree three in $H$, then by
  invariant (b), if 
$e_u$ and $e_v$ are each missing at least one color,
then there is a color $c$ that both
 $e_u$ and $e_v$ are missing.  We
  first give color $c$ to $\ell$.  Then we give colors
  $\bar{\gamma}^i_3(\Lu)\setminus\{c\}$ and
  $\bar{\gamma}^i_3(\Rv)\setminus \{c\}$ to $\ell$.

\item[Case 3b:] If at least one vertex has degree greater than three in $H$
  (say $v$),
then give colors
  $\bar{\gamma}^i_3(\Lu)$ and $\bar{\gamma}^i_2(\Rv)$ to $\ell$.

\end{enumerate}

}

\end{itemize}

\begin{claim}
The above top-down coloring algorithm preserves invariant (a).
\end{claim}
	
\begin{cproof}
We proceed by induction on the iteration of the above top-down
coloring algorithm.  It is easy to see for $\gamma^0$ the invariant
holds.  So we assume the invariant holds before the iteration $i$ in
which we color link $\ell=uv$. Consider an edge $e\in P_\ell$, and
assume without loss of generality $e\in \Rv$. By the induction
hypothesis, $e$ is missing 8, 3, 1 or 0 colors before coloring
$\ell$. If $e$ is missing eight colors, all the colors we give to
$\ell$ are new for $e$, hence after coloring $\ell$, $e$ will miss
three colors. So suppose $e$ is missing three colors before we color
link $\ell$.  But notice in all coloring rules $\ell$ will be colored
with at least two colors from $\bar{\gamma}^{i}_3(\Rv)$.  This means
that after coloring $\ell$, edge $e$ will miss at most one color. So
invariant (a) holds after coloring $\ell$.
	\end{cproof}
	
	Next, we show that invariant (b) also holds after coloring $\ell$. 
	
	\begin{claim}
		The above top-down coloring algorithm preserves invariant (b).
	\end{claim}
	
	\begin{cproof}
		Again, we proceed by induction.  We assume the
                invariant holds before the iteration in which we color
                link $\ell=uv$.  If neither $u$ nor $v$ have
                leaf-mates, 
then the invariant holds after coloring
                link $\ell$.  Thus, either (i) $\ell$ is leaf-matching
                or (ii) without loss of generality, $u$ is a leaf and
                has a leaf-mate and $v$ is an internal vertex.
		
		Suppose $\ell$ is a leaf-matching link for $T$.  If
                either $u$ or $v$ have degree greater than three in
                $H$, then the invariant holds after we color $\ell$.
                So assume both $u$ and $v$ have degree three in $H$.
                Let $e_u$ and $e_v$ be the leaf edges incident on $u$
                and $v$, respectively. Also let $\ell_u$ and $\ell_v$
                be the other links incident on $u$ and $v$,
                respectively. Since leaf-matching links for $H$ are
                disjoint, neither $\ell_u$ nor $\ell_v$ is
                leaf-matching.  If $\ell$ is not a bad link, then
                $\ell$ is colored before either $\ell_u$ or $\ell_v$.
                Before we color $\ell$, either $e_u$ or $e_v$ is
                missing eight colors.  After we color $\ell$, either
                $e_u$ and $e_v$ are missing the same three colors, or
                one is missing three colors and the other is missing
                zero colors.  Otherwise, $\ell$ is a bad link.  Now,
                consider the case in which $\ell$ is colored after
                both $\ell_u$ and $\ell_v$ have already been colored.
                Since both $e_u$ and $e_v$ are missing a common color,
                after coloring $\ell$, $e_u$ and $e_v$ are each
                missing zero colors.
		
		Now consider the case in which $u$ is a leaf in $T$
                and $v$ is an internal vertex of $T$.  Suppose $u$ has
                leaf-mate $w$ adjacent to link $\ell_w$ (which is not
                a leaf-matching link).  Moreover, we can assume that
                both $u$ and $w$ have degree three in $H$.
If $\ell_w$ is to be colored
                after $\ell$, then $e_w$ is missing eight colors both
                before and after coloring $\ell$.  Therefore, clearly
                there is a color that both $e_u$ and $e_w$ are missing
                after coloring $\ell$.  Now, consider the remaining
                case: assume that $\ell_w$ was colored before $\ell$
                in the partial coloring.  Then when coloring $\ell$
                the coloring rule is that of Case 2a.  This rule
                ensures that the set of colors we give to $\ell$ has
                three common elements with the set of colors we gave
                to $\ell_w$.  After coloring $\ell$, the set of the
                colors that $e_u$ and $e_w$ received are exactly the
                colors in $\ell$ and $\ell_w$, respectively.  In
                addition $e_u$ and $e_w$ each miss exactly three colors in
                this partial coloring.  Therefore, the set of colors
                $e_u$ is missing is not disjoint from the colors that
                $e_w$ is missing, and both $e_u$ and $e_w$ are missing
                a common color.
	\end{cproof}

	\begin{claim}
		The above top-down coloring algorithm is
		$T$-admissible.
	\end{claim}
	
	\begin{cproof}
		We now prove admissibility. Let $e$ be an edge in
                $T$. First assume $|\cov(e)| \geq 3$. So there are at
                least three links $\ell_1$, $\ell_2$, and $\ell_3$ in
                $\cov(e)$ labeled by their LCA ordering. When the
                algorithm colors $\ell_1$ since edge $e$ is missing
                all eight colors before coloring $\ell_1$ and all the five
                colors we use for $\ell_1$ are distinct, edge $e$
                receives five new colors by Observation
                \ref{firstlink}. Later, the algorithm colors $\ell_2$
                and $e$ receives at least two more new colors. This is
                because of the following: in every case of the
                coloring rules, ancestors of edge $e$ receive at least
                two new colors. By Observation \ref{ansectory} both these colors are new for $e$.  With a similar argument, when $\ell_3$ is colored, if $e$ is still missing a color, it receives its final missing color.
		
		If on the other hand we have $|\cov(e)|=2$, edge $e$
                is a leaf edge.  Let $\ell_1$ and $\ell_2$ be the two
                links that are covering $e$ labeled by the LCA
                ordering. When $\ell_1$ is colored, $e$ receives five
                new colors since all colors are new for $e$. At the
                iteration that we color $\ell_2$, the algorithm either
                applies a rule in Case 2 or in Case 3. In both cases,
                three different missing colors from ancestors of $e$
                are given to $\ell_2$. Hence, by Observation
                \ref{ansectory} edge $e$ receives the three missing
                colors.
	\end{cproof}

	In order to finish the proof we just need to prove the following claim.
	\begin{claim}\label{coloring-shenanigan}
		Let $C$ denote a set of eight distinct colors.  Let $a,b \in C$ and let $A,
		B, C_5 \subset C$ such that $a \in A, b \in B$ and $|A| = |B| =
		3$ and $|C_5| = 5$.  Then we can find $S \subset C$ such that $|S| =
		5$ and
		\begin{enumerate}
			\item  $a \in S$ and $b \in S$,
			\item $|S \cap A| \geq 2$, 
			\item $|S \cap B| \geq 2$, and 
			\item $|S \cap C_5| \geq 3$.
		\end{enumerate}
	\end{claim}
	\begin{cproof}
		If $|A \cap B| = 0$, then observe that $|(A \cup B) \cap C_5| \geq
		3$.  If $|(A \cup B) \cap C_5| = 3$, then set $S = (A \cup
		B) \setminus{c}$ where $c \neq a,b$ and $c \notin C_5$.  If $|(A \cup
		B) \cap C_5| \geq 4$, then set $S = (A \cup B) \setminus{c}$ where
		$c \neq a,b$.
		
		If $|A \cap B| = 1$, then if $|(A \cup B) \cap C_5| \geq 3$, let
		$S = A \cup B$.  So assume $|(A \cup B) \cap C_5| = 2$.  Then
		$A\cup B$ contains a color $c$ such that $c \neq a,b$ and $c \notin
		C_5$.  Let $S = (A \cup B) \setminus{c}$ and add an arbitrary new
		color from $C_5$ to $S$.

		If $|A \cap B| = 2$, then if $|(A \cup B) \cap C_5| \geq 2$, let
		$S = A \cup B$ and add an arbitrary new color from $C_5$.  If 
		$|(A \cup B) \cap C_5| = 1$, then there is some color $c \in A \cup B$
		such that $c \neq a,b$ and $c \notin C_5$.  Let $S = (A\cup
		B) \setminus{c}$ and add two new colors from $C_5$ to $S$.
		
		If $|A \cap B| = 3$, then let $c_1, c_2$ and $c_3$ be any three
		colors in $C_5 \setminus{\{a,b\}}$.  Set $S = \{a,b,c_1,c_2,c_3\}$.~\end{cproof}
	This concludes the proof.
\end{proof}

\subsection{1-trees with few leaf-matching links: Proof of
  Lemma \ref{rainbow2}}\label{sec:lmvd}

\rainbowDecom*

\begin{proof}
Let $C_{odd}$ denote the collection of odd cycles in $\C$.  For each
cycle $C$ in $\C$, consider an arbitrary partition of the edges into
adjacent pairs, leaving at most one edge $e_C$ unpaired if $C$ has odd
length.  Since the number of odd length cycles is even (because $G$ is
a cubic graph and has an even number of vertices), we can arbitrarily
pair the edges in the set $\{e_C\}_{C \in C_{odd}}$.  By Observation
\ref{obs:in-subtour}, we have $y = \chi^E - \frac{1}{2}\chi^{\C} \in
\HK(G)$ and $y(\delta(v)) = 2$ for all $v \in V$.  Thus, if we apply
Theorem \ref{boydsebo-rainbow}, we find a set of 1-trees
$\{K_1,\ldots,K_k\}$ such that each $1$-tree uses exactly one edge
from each pair.  Notice that each edge in $E \setminus{\C}$ belongs to
$K_i$ for $i \in \{1,\ldots,k\}$.

Observe that for any such partition of edges in $\C$, which pairs at
most one edge $e_C$ from an odd cycle $C$ with an edge in another odd
cycle, we have $\lm(G,K_i)\subseteq \{e_C\}_{C \in \C_{odd}}$.  Since
the set of edges $\{e_C\}_{C \in \C_{odd}}$ are vertex disjoint, any
such partition of $C$ results via Theorem \ref{boydsebo-rainbow} in a
set of 1-trees that satisfy property (ii).  
Moreover, notice that if
$C \in C_{odd}$ is a triangle, then $e_C$ will never be a
leaf-matching link for any $K_i$.\footnote{We use Lemma \ref{rainbow2} to
  prove Lemma \ref{1-13/16} which assumes that $G$ is essentially
  4-edge-connected.  A cycle cover of such a graph cannot contain a
  triangle.  However, we choose to state Lemma \ref{rainbow2} using
  the fewest possible assumptions.}

To show property (i), for each odd cycle in $C_{odd}$ with length at
least five, we choose five edges from this odd cycle and label them
$e^j_C$ for each $j \in \{1,\ldots, 5\}$.  For a triangle $C$ in
$C_{odd}$, we choose a single edge to be $e_C$ and set $e^j_C$ to be
equal to this edge for all $j \in \{1,\ldots, 5\}$.  Next we construct
five partitions of edges in $\C$: for $j \in \{1,\ldots,5\}$ only
edges in $\{e^j_C\}_{C \in \C_{odd}}$ are paired with an edge in
another cycle.  Then we apply Theorem \ref{boydsebo-rainbow} fives
times, one for each of the five partitions.  Let $\{K_1^j, \dots,
K_{k_j}^j\}$ denote the 1-trees in the convex combination obtained for
the $j^{th}$ partition.  Note that if cycle $C$ has length at least
five, then $e^j_C$ is leaf-matching at most
half the time in this convex combination.  Thus, the union of the
1-trees constructed for these five partitions satisfy properties (i)
and (ii).
\end{proof}

We remark that if the cycle cover $\C$ of $G$ contains only even
cycles (e.g., when $G$ is 3-edge-colorable), then the $1$-trees found
via Lemma \ref{rainbow2} have no leaf-matching links.  For such
graphs, we can write $\chi^E - \frac{2}{15} \chi^{\C}$ as a convex
combination of 2-edge-connected subgraphs.  This yields the following
theorem, whose complete proof can be found in Appendix
\ref{app:3edgecolorable}.

\begin{restatable}{theorem}{edgeColorable}\label{thm:3-edge-colorable}
	Let $G=(V,E)$ be a 3-edge-connected, cubic, 3-edge-colorable graph. Then
	$\frac{13}{15} \chi^E$ can be written as a convex combination of
	2-edge-connected subgraphs of $G$. 
\end{restatable}

\subsection{Improved bounds for multigraphs}

The ideas used to prove Theorem \ref{uniformcover-cubic} can be
combined with the fact that 3-edge-connected cubic graphs have a cycle
cover covering all 3-edge cuts and 4-edge cuts \cite{bit13} to improve
the factor of $\frac{7}{8}$ when we are allowed to double edges.

\begin{theorem}\label{doubled-edges}
	Let $G=(V,E)$ be a 3-edge-connected cubic graph. The vector
	$\frac{41}{47} \chi^E$ can be written as a convex combination of
	2-edge-connected multigraphs of $G$.
\end{theorem}

This improves over the factor of $\frac{15}{17}$ in \cite{uniform}. We
remark that Theorem \ref{doubled-edges} implies that for any cost
vector on a 3-edge-connected cubic graph that is 
optimized by the vector $\frac{2}{3}\chi^E$, there is a $1.309$-approximation
algorithm for 2ECS. Such cost vectors include unit costs on
3-edge-connected cubic graphs for which a $1.2$-approximation
algorithm is known for 2ECS~\cite{boyd2014frac}, and node-weighted costs on
3-edge-connected cubic graph for which a $1.3$-approximation algorithm
is known for 2ECS \cite{uniform}.

To prove Theorem \ref{doubled-edges}, we first apply Theorem \ref{bit}
to obtain a cycle cover $\C$ of $G$ that covers all 3- and 4-edge cuts
of $G$.  

\begin{lemma}\label{y2}
Let $G=(V,E)$ be a 3-edge-connected cubic graph and $\mathcal{C}$ be a
cycle cover of $G$ that covers 3-edge cuts and 4-edge cuts of
$G$. Then $\frac{3}{5} \chi^E + \frac{2}{5} \chi^{\C}$ can be written
as a convex combination of 2-edge-connected multigraphs of $G$.
\end{lemma}
\begin{proof}
	Consider graph $H=G/\mathcal{C}$. Notice that $H$ is
        5-edge-connected, which means that the vector
        $\frac{2}{5}\chi^{E(H)}$ for $H$ is in $\HK(H)$. The
        polyhedral proof of Christofides algorithm implies that the
        vector $\frac{3}{5}\chi^{E(H)}$ can be written as a convex combination of 2-edge-connected multigraphs of $H$, namely $F_1,\ldots,F_k$. Notice that for $i\in \{1,\ldots,k\}$, the set of edges $\C+F_i$ induces a 2-edge-connected multigraph on $G$.
\end{proof}
\begin{proof}[Proof of Theorem \ref{doubled-edges}]
	From Lemma \ref{ccpoint-cubic}, we have that $\chi^E -
        \frac{3}{16}\chi^{\C}$ can be written as convex combination of
        2-edge-connected subgraphs of $G$.  Combining this with Lemma
        \ref{y2}, we have
\begin{eqnarray}
\frac{32}{47}(\chi^E - \frac{3}{16}\chi^{\C}) +
\frac{15}{47}(\frac{3}{5} \chi^E + \frac{2}{5} \chi^{\C}) & = & \frac{41}{47}\chi^E.
\end{eqnarray}
\end{proof}

\section{2ECM for half-integer square points}\label{sec:squarepoints}
In this section, our goal is to prove the following theorem.  
\squarepoint*

We will use the following theorem due to Boyd and
Seb{\H{o}}~\cite{boydsebo-mp}.

\begin{theorem}[\cite{boydsebo-mp}]
Let $x$ be a half-integer square point. The graph $G_x$
has a Hamiltonian cycle that contains all the 1-edges of $x$ and
opposite half-edges from each half-square in $G_x$. Moreover, this
Hamiltonian cycle can be found in time polynomial in the size of
$G_x$.
\end{theorem}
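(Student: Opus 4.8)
The plan is to exploit the rigid structure of square points through a \emph{contracted square graph} together with a transition-system (circuit-splicing) argument. Recall that a square point $x$ is built from a Boyd--Carr point, so the half-edges of $x$ form vertex-disjoint $4$-cycles (the half-squares), the $1$-edges form internally disjoint paths, and every vertex of a half-square is the endpoint of exactly one such $1$-edge path. Contracting each half-square of $G_x$ to a vertex and each maximal path of $1$-edges to an edge yields a connected $4$-regular multigraph $H$: a half-square $Q=abcd$ becomes a vertex $q$ whose four incident edges (its \emph{ports}) inherit the cyclic order $e_a,e_b,e_c,e_d$ from the $4$-cycle, and each edge of $H$ corresponds to a maximal path of $1$-edges of $G_x$.

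First I would set up the combinatorial dictionary. A $4$-cycle has exactly two perfect matchings, $\{ab,cd\}$ and $\{bc,da\}$, which are precisely its two pairs of \emph{opposite} half-edges. Taking $\{ab,cd\}$ in $Q$ forces at $q$ the transition $\{(e_a,e_b),(e_c,e_d)\}$, while $\{bc,da\}$ forces $\{(e_b,e_c),(e_d,e_a)\}$; the third ("crossing") transition $\{(e_a,e_c),(e_b,e_d)\}$ is \emph{forbidden} because the corresponding chords of $Q$ are not edges of $G_x$. Choosing one admissible transition at every vertex of $H$ decomposes $E(H)$ into closed walks, and pulling these back to $G_x$ produces exactly the $2$-factor $J$ consisting of all $1$-edges together with the chosen opposite half-edges of each half-square; the number of cycles of $J$ equals the number of closed walks of the transition system on $H$. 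So it suffices to choose admissible transitions making the transition system a single closed walk: pulling that walk back, every internal path-vertex lies on its path and the two passages through each $q$ cover all four (distinct) vertices of $Q$, so $J$ spans $G_x$ and is a single cycle, i.e. a Hamiltonian cycle of the required form.

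Next I would give the merging procedure. Start from any admissible choice and let $W_1,\dots,W_m$ be the resulting closed walks. If $m\ge 2$, then since $H$ is connected and every vertex of $H$ lies on some walk, some vertex $q$ is shared by two distinct walks $W_i,W_j$ (otherwise the vertex sets of the walks would be pairwise disjoint and cover $V(H)$, contradicting connectedness, as any edge crossing between two parts lies on a walk meeting both parts). At $q$ the current admissible transition, say $\{(1,2),(3,4)\}$ in port order, has $W_i$ using $1$--$2$ and $W_j$ using $3$--$4$; replacing it by the \emph{other} admissible transition $\{(2,3),(4,1)\}$ re-splices the two passages through $q$ and merges $W_i$ and $W_j$ into one closed walk, with all other walks and all other transitions untouched. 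This keeps the choice admissible and strictly decreases $m$, so after at most $|E(H)|$ iterations we reach $m=1$; each iteration (locating a shared vertex, recomputing the walk decomposition, flipping one transition) is polynomial in the size of $G_x$.

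The main thing to get right is the correspondence in the second paragraph — in particular that a single transition-respecting Eulerian circuit of $H$ pulls back to a genuine \emph{Hamiltonian} cycle of $G_x$ (not merely a $2$-factor), and that the forbidden crossing transition is never invoked, since the merge step only ever switches between the two admissible transitions. A minor technical point is that $H$ may have self-loops (a $1$-edge path from a half-square back to itself, necessarily between opposite ports) or parallel edges; the transition-system argument is unaffected, as it refers only to edge-ends at vertices. Note also that the stronger property that $H$ is $4$-edge-connected, which likewise follows from the Boyd--Carr structure, is not needed here — plain connectedness of $H$ suffices.
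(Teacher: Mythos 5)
Your proposal is correct. Note, however, that the paper you are being compared against does not prove this statement at all: it is quoted verbatim from Boyd and Seb\H{o} \cite{boydsebo-ipco} and used as a black box, so there is no internal proof to measure your argument against. Taken on its own terms, your transition-system argument is sound and self-contained: the dictionary between choices of opposite half-edge pairs in each half-square and admissible (non-crossing) transitions at the vertices of the contracted $4$-regular multigraph $H$ is set up correctly; the pull-back of the resulting closed-walk decomposition is exactly the $2$-factor consisting of all $1$-edges plus the chosen opposite pairs, with cycles in bijection with closed walks (the two passages through a contracted square use the disjoint vertex pairs $\{a,b\}$ and $\{c,d\}$, so no vertex is repeated); and the splicing step is valid, since if two distinct walks meet at a vertex of $H$ they account for its two passages, and flipping to the other non-crossing transition merges them while remaining admissible, so the walk count strictly decreases and the procedure is clearly polynomial. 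Your counting argument that $m\geq 2$ forces a vertex shared by two walks (else the walks' vertex sets would partition $V(H)$ with no crossing edges, contradicting connectedness) is also right, and your hedge about loops and parallel edges is harmless --- in fact a loop can only arise in the degenerate single-square ($K_4$-based) instance, where the argument still produces the required Hamiltonian cycle. A pleasant feature of your route is that it needs only connectivity of $H$, not the $4$-edge-connectivity that the Boyd--Carr structure also provides, and it yields the algorithmic (polynomial-time) part of the statement directly rather than as an afterthought.
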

Let $H$ be such a Hamiltonian cycle of $G_x$. For simplicity, let $A$
be the set of 1-edges of $G_x$, $B$ be the set of half-edges of $G_x$
that are in $H$, and $C$ be the half-edges of $G_x$ that are not in
$H$. Thus, the incidence vector of $H$ is
\begin{equation*}
\chi^H_e = \left\{ \begin{array}{ll}
1 & \mbox{if $e  \in A$};\\
1 & \mbox{if $e\in B$};\\
0 & \mbox{if $e\in C$}.
\end{array} \right. 
\end{equation*}
In order to use $H$ as part of a convex combination in proving Theorem
\ref{square-point}, we need to be able to save on edges in $B$. To
this end, we introduce the following definitions.
\begin{definition}\label{def:r-vec}
	For $\alpha>0$, let $r^{\alpha,x}$ to be the vector in $\mathbb{R}^{E(G_x)}$ where 

\begin{equation*}	r^{\alpha,x}_e = \left\{ \begin{array}{ll}
		1+\alpha & \mbox{if $e \in A$};\\
		\frac{1}{2} & \mbox{if $e\in B$};\\
		1-\alpha	  & \mbox{if $e\in C$}.
	\end{array} \right. 
	\end{equation*}
\end{definition}

\begin{definition}\label{property}
Let $G=(V,E)$ be a graph.
We say property {\em ${P}(G,\alpha)$ holds} if the vector $\alpha \chi^E$
can be written as a convex combination
of matchings $\{M_1,\ldots,M_k\}$ of $G$ such that $G_1' = (V, E\setminus
M_1), \ldots, G_k' = (V, E\setminus M_k)$ are 2-vertex-connected
subgraphs of $G$.
\end{definition}

Let $G_x$ be the support graph of a half-integer square point, and let
$G=(V,E)$ be the 4-regular 4-edge-connected graph obtained from $G_x$
by replacing each path of 1-edges with a single 1-edge and contracting
all of its half-squares.\footnote{Observe that $G$ is Eulerian and is
  therefore 4-edge-connected since the corresponding Boyd-Carr point
  is 3-edge-connected.}
\begin{restatable}{lemma}{construction}
	\label{construction}
	If $P(G,\alpha)$ holds for the graph $G$ obtained from $G_x$,
        then the vector $r^{\alpha,x}$ can be written as a convex
        combination of 2-edge-connected multigraphs of $G_x$.
\end{restatable}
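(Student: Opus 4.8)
The plan is to use the Hamiltonian cycle $H$ from the Boyd–Seb\H{o} theorem as the ``backbone'' of the convex combination, and to exploit property $\mathcal{P}(G,\alpha)$ to reduce the cost on the edges in $B$ (the half-edges lying on $H$). Observe first that the Hamiltonian cycle $H$ already covers all of $A\cup B$ with value $1$ and all of $C$ with value $0$, so it is far from $r^{\alpha,x}$: it is too expensive on $B$ and too cheap on $C$. The idea is to perturb $H$ on each half-square: since each half-square of $G_x$ contributes two opposite half-edges to $B$ and two opposite half-edges to $C$, and since contracting each half-square in $G_x$ yields the $4$-regular $4$-edge-connected graph $G$, there is a natural correspondence between the edges of $G$ and the $1$-edges (paths) of $G_x$ together with a choice, at each half-square, of which diagonal pair to use. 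Concretely, I would map each matching $M_i$ of $G$ given by $\mathcal{P}(G,\alpha)$ back to $G_x$: an edge of $M_i$ incident to a half-square tells us to ``swap'' that half-square from the $H$-pair $\{b_1,b_2\}$ to the non-$H$-pair $\{c_1,c_2\}$, while $G_i' = (V, E\setminus M_i)$ being $2$-vertex-connected guarantees that the resulting multigraph on $G_x$ remains $2$-edge-connected.

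The key steps, in order, are as follows. First, set up the correspondence between $E(G)$ and the structure of $G_x$: each edge $e$ of $G$ either corresponds to a maximal path of $1$-edges of $G_x$ (an element of $A$, after un-contracting), or is incident to one or two contracted half-squares. Second, for a matching $M_i$ of $G$, build a multigraph $J_i$ of $G_x$ as follows: start from $H$; for every half-square $Q$ of $G_x$ whose contracted vertex is \emph{matched} by $M_i$, replace the two $H$-half-edges of $Q$ by the two non-$H$-half-edges of $Q$ (a local ``rotation'' of the Hamiltonian cycle inside $Q$); additionally, to repair connectivity at vertices of $G$ left unmatched by $M_i$, double appropriate edges or add the missing half-edges. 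Third, verify that $J_i$ is $2$-edge-connected: this is where $G_i'=(V,E\setminus M_i)$ being $2$-vertex-connected is used --- the complement of the matching being $2$-vertex-connected in $G$ translates, through the un-contraction, into the statement that after performing all the local half-square rotations dictated by $M_i$ the cycle $H$ is ``rerouted'' through a $2$-edge-connected spanning subgraph of $G_x$. Fourth, take the convex combination $\sum_i \lambda_i \chi^{J_i}$ with the same multipliers $\lambda_i$ as in the decomposition of the everywhere-$\alpha$ vector of $G$, and add $H$ (or a suitable multiple) and possibly a small correction term to bring the value on each edge class up to exactly $r^{\alpha,x}$; finally invoke Observation \ref{dominating-observation} to conclude.

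I expect the main obstacle to be Step three: proving that the rerouted multigraph $J_i$ is genuinely $2$-edge-connected, and doing the bookkeeping so that the averaged value on each edge of $B$ comes out to exactly $\frac12$ (this is the ``saving'' we need) while the value on $A$ is $1+\alpha$ and on $C$ is $1-\alpha$. The tension is that swapping a half-square away from $H$ loses the $H$-edges on $B$ but gains edges on $C$, so the fraction of matchings $M_i$ covering each half-square must be calibrated to $\alpha$; this is exactly why the everywhere-$\alpha$ vector of $G$ (and not some other value) is the right object, and why the matchings --- rather than arbitrary subgraphs --- are needed, since the complement of a matching has every vertex of degree $3$ in the $4$-regular graph $G$, which is what keeps the half-edge usage parsimonious. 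A secondary technical point is handling the vertices of $G$ (equivalently, endpoints of $1$-edge-paths and the interfaces between consecutive half-squares) that are left unmatched by a given $M_i$: there we must locally add back enough edges to restore $2$-edge-connectivity without overshooting the target value $r^{\alpha,x}$ on any edge, which will require choosing these repairs consistently across the whole family $\{M_i\}$.
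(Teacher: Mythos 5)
There is a genuine gap, and it is quantitative rather than just an unverified step. Your scheme ties the use of the non-$H$ pair of a half-square to whether that square is matched by $M_i$. But since $\sum_i\lambda_i\chi^{M_i}$ is the everywhere-$\alpha$ vector of the $4$-regular graph $G$ and each $M_i$ is a matching, a fixed square (a vertex of $G$) is incident to a matching edge in exactly a $4\alpha$ fraction of the combination. So your rerouted graphs $J_i$, averaged with the weights $\lambda_i$, put value $1-4\alpha$ on each edge of $B$ and $4\alpha$ on each edge of $C$. For the relevant range $\alpha\le\tfrac18$ (the application uses $\alpha=\tfrac1{10}$) this \emph{overshoots} the target $\tfrac12$ on $B$ and falls far short of $1-\alpha$ on $C$. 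The repair you propose --- adding $\chi^H$ or a multiple of it and then invoking Observation \ref{dominating-observation} --- only moves coordinates up, so it cannot fix the overshoot on $B$; and you cannot simply delete $B$-edges from unmatched squares, because in your $J_i$ those squares rely on both their $B$-edges for connectivity. The point is that the factor-$\tfrac12$ saving on $B$ must be realized in roughly half of the multigraphs for \emph{every} square, matched or not, which is incompatible with using $H$ as the backbone and swapping only matched squares. (You also leave the two hardest steps --- the 2-edge-connectivity of the rerouted graphs and the repairs at unmatched squares --- unspecified, and these are exactly where the work lies.)

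The paper's construction is structurally different and does not use $H$ inside this lemma at all ($H$ only reappears later, as the separate term in $\tfrac57 r^{1/10,x}+\tfrac27\chi^H$ proving Theorem \ref{square-point}). For each matching $M_i$ it builds \emph{two} multigraphs $F^i_1,F^i_2$, each taken with weight $\tfrac{\lambda_i}{2}$. Both contain every $1$-edge once, and twice when the corresponding edge of $G$ lies in $M_i$ (this gives $1+\alpha$ on $A$). In a square with no matched incident $1$-edge --- the default situation --- both $C$-edges are taken and the two $B$-edges are split, one to $F^i_1$ and one to $F^i_2$, so each $B$-edge already gets only $\tfrac12$ there. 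In a square incident to a matched $1$-edge (there is at most one, since $M_i$ is a matching in a $4$-regular graph), an arbitrary orientation of that matching edge determines which half-edges go to $F^i_1$ and which to $F^i_2$, arranged so that averaging the two yields exactly $\tfrac12$ on each $B$-edge and $1-\alpha$ on each $C$-edge. Finally, the $2$-vertex-connectivity (not merely $2$-edge-connectivity) of $G\setminus M_i$ is what rules out cuts of $F^i_j$ consisting of a single $1$-edge plus paired half-edges, or of half-edges alone, where a contracted square would otherwise become a cut vertex --- the step your proposal gestures at but does not carry out. As written, your approach cannot hit the target vector $r^{\alpha,x}$, so the calibration and the construction both need to be redone along these lines.
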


Lemma \ref{construction} will be proved in Section \ref{sec:const}.
It is clear that $P(G,0)$ holds. By Lemma \ref{construction}, the
vector $r^{0,x}$ dominates a convex combination of 2-edge-connected
multigraphs of $G_x$. Hence any convex combination of vectors
$r^{0,x}$ and $\chi^H$ also dominates a convex combination of
2-edge-connected multigraphs. Thus, $\frac{2}{3}r^{0,x} +
\frac{1}{3}\chi^{H}$ dominates a convex combination of
2-edge-connected multigraphs of $G_x$. We have $\frac{2}{3}r^{0,x} +
\frac{1}{3}\chi^{H}\leq \frac{4}{3}x$. To go beyond $\frac{4}{3}$, we
need to use the half-edges less and therefore we need to account for
this by sometimes doubling 1-edges.  The property ${P}(G,\alpha)$ will
allow us to double all the 1-edges in $G_x$ that belong to a
particular matching in $G$ (i.e., an $\alpha$-fraction of the
1-edges).  In this section, our main goal is to prove the following
lemma.
\begin{theorem}\label{1/10}
For any 4-regular, 4-edge-connected graph $G$, $P(G,\frac{1}{10})$ holds.
\end{theorem}
By Lemma \ref{construction}, we have the following corollary.
\begin{corollary}\label{vector-r}
	For a half-integer square point $x$, the vector
        $r^{\frac{1}{10},x}$ dominates a convex combination of
        2-edge-connected multigraphs of $G_x$ and this convex
        combination can be found in time polynomial in the size of
        $G_x$.
	\end{corollary}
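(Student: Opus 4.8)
The plan is to follow the same high-level route as the proof of Lemma~\ref{7/8}, but with the everywhere $\tfrac12$ vector (natural for $4$-regular graphs) playing the role of the everywhere $\tfrac23$ vector, and with a coloring step that is strengthened to deliver $2$-vertex-connectivity and a matching complement rather than mere $2$-edge-connectivity. It is cleanest to prove the complementary statement: $P(G,\tfrac1{10})$ holds iff the everywhere $\tfrac9{10}$ vector for $G$ can be written as a convex combination of subgraphs $H$ that are $2$-vertex-connected and such that $E\setminus H$ is a matching (equivalently, since $G$ is $4$-regular, $H$ has minimum degree three). So we aim to build such subgraphs as $H=T+L_c$, where $T$ is a spanning tree of $G$ coming from a rainbow decomposition and $L_c$ is a color class of a top-down coloring of the links $L=E\setminus T$.

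\emph{Step 1 (spanning trees via rainbow decomposition).} Since $G$ is $4$-regular and $4$-edge-connected, the everywhere $\tfrac12$ vector $y$ lies in $\HK(G)$, is half-integer, and satisfies $y(\delta(v))=2$ for every $v$, so Theorem~\ref{boydsebo-rainbow} applies to \emph{any} pairing $\mathcal P$ of $E(G)$ (here every edge is a half-edge). As in the proof of Lemma~\ref{rainbow2}, I would choose $\mathcal P$ with the help of a $2$-factor $C$ of $G$ (which exists by Petersen's theorem, with complement $\bar C=E\setminus C$ a second $2$-factor): pair consecutive edges inside the cycles of $C$ so that $r$ is a rainbow vertex and at most one edge of each odd cycle is unpaired, and pair the leftover edges together with the edges of $\bar C$ among themselves (a $4$-regular bipartite auxiliary graph shows one can even match each $C$-edge with an adjacent $\bar C$-edge, which is convenient for controlling tree-degrees). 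Applying Theorem~\ref{boydsebo-rainbow} to $y$ and then deleting one root edge from each resulting $1$-tree (Remark~\ref{rem:contract}) produces a convex combination of spanning trees $T_1,\dots,T_k$ that is dominated by $\tfrac12\mathbf 1$ and for which the leaf-matching links are vertex-disjoint; the rainbow-vertex argument of Lemma~\ref{rainbow2} adapts to the $4$-regular setting, though choosing the right pairing here (unlike the cubic case, there is no always-present matching of $1$-edges to "protect" leaves) is a point that needs care.

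\emph{Step 2 (the coloring; the crux).} For each rooted spanning tree $T$ above, I would design an \emph{admissible} top-down coloring algorithm with factor $\tfrac45$ on $L$ with the two extra guarantees that, for every color $c\in\{1,\dots,5\}$: (i) $L\setminus L_c$ is a matching, and (ii) $T+L_c$ is $2$-vertex-connected. Because $G$ is $4$-regular, each link misses exactly one of the five colors, so (i) is equivalent to forcing the (at most three) links at each leaf of $T$, and the two links at each tree-degree-two vertex, to miss pairwise \emph{distinct} colors — a local scheduling condition in the same spirit as, but more demanding than, the invariants of Lemma~\ref{5/8}, handled by similar bookkeeping and now using that leaf-matching links are vertex-disjoint. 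Guarantee (ii) is the genuinely new ingredient: deleting an internal vertex $v$ from $T$ leaves $\deg_T(v)$ subtrees, and $T+L_c$ reconnects them precisely when enough of the links across these subtrees (none incident to $v$) receive color $c$; a counting argument using $4$-edge-connectivity shows that across each such subtree there are always at least two "reconnecting" links, so the coloring rule must additionally ensure that, for each such cut, these reconnecting links miss every color at most once. Maintaining admissibility (every tree edge receives all five colors, using $|\cov(e)|\ge 3$ for internal edges) simultaneously with (i) and (ii) requires a case analysis by the tree-degree and position of each vertex and edge, extending the coloring rules of Lemma~\ref{5/8}.

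\emph{Step 3 (assembling).} By the analogue of Lemma~\ref{coloringdownlemma}, since $\tfrac12\mathbf 1$ dominates a convex combination of spanning trees each admitting such a factor-$\tfrac45$ coloring, the everywhere vector of value $\tfrac12+(1-\tfrac12)\cdot\tfrac45=\tfrac9{10}$ dominates a convex combination of subgraphs $H=T+L_c$ that are $2$-vertex-connected with $E\setminus H$ a matching; since adding edges preserves both properties, Observation~\ref{dominating-observation} (in the appropriate form) lets us hit the everywhere $\tfrac9{10}$ vector exactly. Complementing each such $H$ gives $\tfrac1{10}\mathbf 1$ as a convex combination of matchings $M_i=E\setminus H_i$ with $E\setminus M_i=H_i$ $2$-vertex-connected, i.e.\ $P(G,\tfrac1{10})$ holds; everything is polynomial-time, since the rainbow decomposition is (as noted after Theorem~\ref{boydsebo-rainbow}) and the top-down coloring is an explicit algorithm. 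The main obstacle is Step~2: arranging the top-down coloring so that the color classes are $2$-vertex-connected (guarantee (ii)) while keeping each class's complement a matching (guarantee (i)) and preserving admissibility — it is the interplay of these three requirements that forces the factor $\tfrac45$ and hence the constant $\tfrac1{10}$.
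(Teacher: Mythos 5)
Your high-level skeleton (spanning trees, a factor-$\frac{4}{5}$ top-down coloring, then complementing to get matchings and invoking the transfer to $G_x$) matches the paper, but the step you yourself flag as the crux is not actually carried out, and it is precisely where the paper's proof lives. You assert that one can design an admissible factor-$\frac{4}{5}$ coloring whose color classes $T+L_c$ are 2-vertex-connected by adding, on top of the usual rules, a constraint that "reconnecting links" across the subtrees of $T-v$ miss each color at most once; but vertex cuts are not tree-edge coverage constraints, so none of the top-down/LCA machinery (the ancestor-monotonicity of missing colors, the counting of new colors per link) controls them, and you give no coloring rule that does. The paper does not attack vertex-connectivity inside the coloring at all: it first reduces 2-vertex-connectivity to 2-edge-connectivity via the subdivided graph $G'$ of Lemma \ref{reductionlemma}, whose second property --- every subdivided tree edge is covered by at least two links --- is exactly your "at least two reconnecting links" claim, and its proof requires 4-\emph{regularity} (degree-five/six contradictions), not just 4-edge-connectivity. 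Only then does it run an explicit coloring (Lemma \ref{vertexcolor}) which enforces the minimum-degree-three/matching-complement property by tracking, for each original vertex $v$ and via $\point(\ell')$, that degree-two vertices of $T$ receive all five colors and leaves receive all five colors twice. Without the subdivision reduction (or an equivalent device) and without concrete rules achieving admissibility plus these vertex conditions simultaneously, Step~2 of your proposal is a statement of the goal, not a proof.

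Your Step~1 also aims at the wrong tree property. Vertex-disjoint leaf-matching links are what the factor-$\frac{5}{8}$ argument for cubic graphs needs (Lemma \ref{5/8}); in the 4-regular setting what the paper's argument needs is spanning trees with no degree-four vertex, which it obtains immediately as two edge-disjoint spanning trees of the 4-edge-connected graph $G$ (Nash--Williams/Tutte), averaging their associated vectors to reach the everywhere $\frac{9}{10}$ vector. Your rainbow 1-tree decomposition of the everywhere $\frac{1}{2}$ vector neither guarantees maximum degree three (a vertex can have all four incident edges drawn from four different pairs) nor does the vertex-disjointness argument of Lemma \ref{rainbow2} transfer: that proof leans on the 1-edges present at every vertex of the cubic support, which have no analogue when all edges are half-edges. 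These choices could perhaps be repaired, but as written they add unjustified claims while omitting the two ingredients --- the subdivision reduction and the explicit degree-tracking coloring --- that make the factor $\frac{4}{5}$ (hence $P(G,\frac{1}{10})$ and the corollary) actually provable.
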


From Corollary \ref{vector-r}, the proof of Theorem \ref{square-point}
is easy. Obviously any convex combination of $r^{\frac{1}{10},x}$ and
$\chi^H$ also dominates a convex combination of 2-edge-connected
multigraphs of $G_x$. Consider the combination
$\frac{5}{7}r^{\frac{1}{10},x}+ \frac{2}{7}\chi^H$. It is easy to see
this convex combination is dominated by $\frac{9}{7}x$.

It remains to prove Lemma \ref{construction} and Theorem \ref{1/10}.
We will prove Lemma \ref{construction} in Section \ref{sec:const},
where we describe how to construct the convex combination.  Regarding
Theorem \ref{1/10}, note that $P(G,\frac{1}{10})$ is equivalent to
saying that the vector $\frac{9}{10}\chi^E$ can be written as a convex
combination of 2-vertex-connected subgraphs of minimum degree three.
This equivalent statement will be proved using Lemma \ref{9/10}.
\begin{restatable}{lemma}{nineoverten}
	\label{9/10}
	Let $G$ be a 4-regular 4-edge-connected graph. Let $T$ be a
        spanning tree of $G$ such that $T$ does not have any
        vertex of degree four. The vector $y\in \mathbb{R}^G$, where
        $y_e = \frac{4}{5}$ for $e\notin T$ and $y_e = 0$ for $e\in
        T$, dominates a convex combination of edge sets
        $\{F_1,\ldots,F_k\}$ such that $T+F_i$ is a 2-vertex-connected
        subgraph of $G$ where each vertex has degree at least three in
        $T+F_i$ for $i \in \{1,\ldots,k\}$. 
\end{restatable}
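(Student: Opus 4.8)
The plan is to produce the required convex combination via an enhanced version of the top-down coloring framework of Section~\ref{subsec:coloring}. I would build an admissible top-down coloring algorithm with factor $\tfrac{4}{5}$ on the set of links $L = E\setminus T$ --- five colors, each link receiving exactly four of them --- with the \emph{additional} guarantee that for every color $c$ the augmentation $T + L_c$ is 2-vertex-connected and has minimum degree three, where $L_c$ is the set of links colored $c$. Averaging over the five colors, exactly as in Observation~\ref{combination}, then writes $y = \tfrac{4}{5}\mathbf{1}_L$ as a convex combination of edge sets $F_i = L_{c_i}$ with the desired properties. Before designing the rule I would reformulate the two extra requirements in terms of the coloring. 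Since $G$ is $4$-regular, each link misses exactly one of the five colors, so a coloring is nothing but a ``missing-color'' map on $L$. In $T+L_c$ a leaf of $T$ has degree $1$ plus the number of its incident links colored $c$, and a degree-two vertex has degree $2$ plus that number; hence $T+L_c$ has minimum degree three for all $c$ if and only if the three links at every leaf of $T$, and the two links at every degree-two vertex of $T$, get pairwise distinct missing colors (degree-three vertices impose nothing). For 2-vertex-connectivity I would use the standard fact that $T+F$ is 2-vertex-connected if and only if, for every internal vertex $v$, the links of $F$ not incident to $v$ connect the $\deg_T(v)$ subtrees of $T-v$.

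Next I would extract the slack that genuine $4$-edge-connectivity provides, beyond the essential $4$-edge-connectivity used in Lemmas~\ref{lem:3/5} and~\ref{5/8}. Because the fundamental cut of any tree edge has size at least $4$ in $G$ and meets $T$ in exactly one edge, \emph{every} tree edge --- leaf edges included --- is covered by at least three links. At a degree-two internal vertex $v$ with subtrees $T_1,T_2$ there are exactly two links incident to $v$; if $a,b$ of them enter $T_1,T_2$ and $m$ links join $T_1$ and $T_2$ away from $v$, the cuts around $V(T_1)$ and $V(T_2)$ give $1+a+m\ge 4$ and $1+b+m\ge 4$, so $m\ge 2$. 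At a degree-three internal vertex $v$ there is exactly one link at $v$, say entering $T_1$, and the three analogous cuts give $m_{12}+m_{13}\ge 2$, $m_{12}+m_{23}\ge 3$, $m_{13}+m_{23}\ge 3$ for the pairwise link counts. Through these counts, 2-vertex-connectivity of every $T+L_c$ becomes the condition that the links crossing any fixed internal vertex must not all share a single missing color (with the strengthening at a degree-three vertex that, in every color, at least two of its three subtree-pairs contain a link of that color) --- something the above inequalities leave room for.

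With these reformulations I would construct the coloring by processing the links in decreasing LCA order, in the style of Lemma~\ref{5/8}, maintaining by induction the invariants that (a)~once all links of $\cov(e)$ have been colored, $e$ has received all five colors; (b)~at each leaf and each degree-two vertex the missing colors of the already-colored incident links are pairwise distinct; and (c)~at each internal vertex the already-colored links crossing it do not all miss a common color, and the subtree-pairs stay active where required. For a link $\ell=uv$ with LCA $s$, the rule examines the highest still-deficient tree edge on each of the two $s$-to-endpoint paths of $P_\ell$ and then selects the four colors of $\ell$ --- equivalently, its one missing color --- so as to (i)~give those top deficient edges the colors they are missing, serving (a); (ii)~keep $\ell$'s missing color different from the missing colors already committed at $u$ and at $v$ whenever the corresponding endpoint is a leaf or a degree-two vertex, serving (b); and (iii)~when $\ell$ is the last link crossing some internal vertex of $P_\ell$, choose its missing color to avoid a clash there, serving (c). Leaf-matching links, and ``bad'' links in the sense of Lemma~\ref{5/8}, are treated by dedicated sub-cases that, while coloring the earlier of two links sharing a leaf endpoint, reserve a color still missing at both leaf tree-edges --- the same device as in Lemma~\ref{5/8}, but simpler here because each link withholds only one color instead of three. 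That a color meeting (i)--(iii) always exists follows from a short count of the flavor of Claim~\ref{coloring-shenanigan}: there are five colors, and the number forbidden at $\ell$ is bounded because the relevant vertex degrees in $L$ are at most three.

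The main obstacle, and where most of the work lies, is that all of (i)--(iii) must be honored by the \emph{single} color that $\ell$ is allowed to withhold, and $\ell$ can be simultaneously relevant to several internal vertices it crosses, to a leaf endpoint's distinctness constraint, and to a leaf-matching constraint; checking that the resulting system is never over-determined requires a careful case analysis, and it is exactly the slack from genuine $4$-edge-connectivity --- three links per tree edge, at least two or three crossing links per internal vertex --- that makes five colors enough. With only essential $4$-edge-connectivity, or with fewer colors, some case of the rule would be over-constrained. Once the case analysis confirms that every case preserves invariants (a)--(c), the averaging argument behind Observation~\ref{combination} delivers the convex combination claimed in the lemma.
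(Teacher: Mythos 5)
Your overall plan---a direct admissible top-down coloring with factor $\tfrac{4}{5}$ on $L=E\setminus T$, with the minimum-degree condition rephrased as ``pairwise distinct withheld colors at each leaf and at each degree-two vertex'' and 2-vertex-connectivity rephrased through the subtrees of $T-v$---is a genuinely different route from the paper, and your preparatory reformulations and cut counts ($m\geq 2$ at a degree-two internal vertex, the three pairwise inequalities at a degree-three vertex) are correct. However, there is a genuine gap exactly where you place ``the main obstacle'': you never establish that the single color a link is allowed to withhold can always be chosen consistently, and a direct count suggests it cannot, at least not without overlap invariants that you have not formulated. In your framework every requirement is a ``must-include'' constraint on $\ell$ (supply the missing color of a deficient tree edge on $P_\ell$; supply the colors already withheld by the other links at a leaf or degree-two endpoint; supply a color to a subtree-pair at a degree-three vertex for which $\ell$ is the last crossing link), and a link carrying four of five colors can honor at most four distinct must-includes. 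A leaf-matching link already faces up to four distinct forced colors from its two leaf endpoints alone (the withheld colors of the other two links at each leaf, which your own invariant (b) forces to be pairwise distinct), leaving a unique admissible withheld color; if in the same step $\ell$ is also the last link covering a still-deficient tree edge of $P_\ell$, or the last link crossing a subtree-pair with only two crossing links, the set of forced colors can reach size five and the step fails. You cannot sidestep leaf-matching links as in Section \ref{sec:uniform}, because here $T$ is given, not obtained from a rainbow decomposition. So the heart of the lemma is precisely the ``careful case analysis'' you defer, together with invariants in the spirit of invariant (b) of Lemma \ref{5/8} and Claim \ref{coloring-shenanigan} guaranteeing the forced colors overlap, and none of that is supplied; the asserted ``short count'' (five colors versus degree at most three in $L$) does not bound the number of forbidden withheld colors below five.

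For comparison, the paper's proof avoids this over-determination structurally: it subdivides every tree edge and re-hangs each link at the subdivision vertex of the first/last edge of $P_\ell$ (Definition \ref{subdivide}), so that 2-edge-connectivity of $T'+F'$ already implies 2-vertex-connectivity of $T+F$ (Lemma \ref{reductionlemma}), and the degree requirements become color-reception requirements at $\point(\ell')$ handled by dedicated rules (Lemma \ref{vertexcolor}). In that formulation each link must honor at most four must-include colors---one for each endpoint in $\point(\ell')$ and one for the highest deficient edge on each side path---which exactly matches the four colors it carries, and admissibility needs only two covering links per subdivided edge. To salvage your direct approach you would need to prove an analogous slack (or import the subdivision reduction); as written, the existence of a valid withheld color at every step is asserted rather than proven.
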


The proof of Lemma \ref{9/10} can be found in Section \ref{sec:9/10}.
In order to prove this lemma, we need a way to reduce vertex
connectivity to edge-connectivity, which is done in Section
\ref{vertex-cut}.  The main tool in the proof of Lemma \ref{9/10} is a
top-down coloring algorithm, which is detailed in Section
\ref{app:coloring}.  From Lemma \ref{9/10}, one can easily prove
Theorem \ref{1/10}.
\begin{proof}[Proof of Theorem \ref{1/10}]

Consider a half-integer square point $x$. Let $G=(V,E)$ be the graph
obtained from $G_x$ by replacing each path of 1-edges with a single
1-edge and 
contracting all the half-squares in $G_x$. Graph $G$ is
4-regular and 4-edge-connected, hence $G$ has two edge-disjoint
spanning trees $T_1$ and $T_2$ \cite{nw}. Notice that $T_1$ and $T_2$
cannot have any vertex of degree four, since for all vertices $v\in
V$, we have $\delta_{T_1}(v)\geq 1$ and $\delta_{T_2}(v)\geq 1$ while
$\delta_{T_1}(v)+\delta_{T_2}(v)\leq 4$. Hence, by Lemma \ref{9/10} we
can write vector $z^i\in \mathbb{R}^G$, with $z^i_e = 1$ for $e\in
T_i$, and $z^i_e=\frac{4}{5}$ for $e\notin T_i$ as a convex
combination of 2-vertex-connected subgraphs of $G$ where every vertex
has degree at least three, for $i\in \{1,2\}$. Now consider
$\frac{1}{2}\cdot z^1 + \frac{1}{2}\cdot z^2$: it dominates a convex
combination of 2-vertex-connected subgraphs of $G$ where every vertex
has degree at least three. Also, $\frac{1}{2}\cdot z^1 +
\frac{1}{2}\cdot z^2$ is the vector $\frac{9}{10}\chi^E$.  This
concludes the proof, since the complement of the solutions in the
convex combination form the desired convex combination of matchings.
\end{proof}

\subsection{From matching to 2ECM: Proof of Lemma \ref{construction}}\label{sec:const}
Recall that $G_x$ is the support graph of a half-integer square point
$x$, and $G=(V,E)$ is the 4-regular 4-edge-connected graph obtained
from $G_x$ by replacing each path of 1-edges with a single 1-edge and
contracting all of its half-squares.  The definition of vector
$r^{\alpha,x}$ can be found in Definition \ref{def:r-vec}, and the
definition of edge sets $A,B$ and $C$ can be found directly before.

\construction*
\begin{proof}
Since $P(G,\alpha)$ holds, we can find $\lambda_1,\ldots,\lambda_k \in
\mathbb{R}_{\geq 0}$ where $\sum_{i=1}^{k}\lambda _i = 1$,
$\alpha = \sum_{i=1}^{k}\lambda_i \chi^{M_i}$, and $M_i$ is a
matching in $G$ such that graph $G_i' = (V, E\setminus M_i)$ is
2-vertex-connected for $i \in \{1,\ldots,k\}$.  Specifically, for each
$i\in \{1,\ldots,k\}$, we create two 2-edge-connected multigraphs
$F^i_1$ and $F^i_2$, as follows.  Notice that each edge in $M_i$
corresponds to a 1-edge (an edge in $A$) in $G_x$.  For each $e\in
M_i$ we add two copies of the 1-edge corresponding to $e$ in $G_x$ to
$F^i_1$ and $F^i_2$. For each $e\notin M_i$ we add one copy of the
1-edge corresponding to $e$ in $G_x$ to $F^i_1$ and $F^i_2$.
Additionally, we assign an arbitrary orientation to each edge $e\in
M_i$. For each edge $e\in M_i$, there are two squares $Q_1$ and $Q_2$
incident on $e$. We say $e\rightarrow Q_1$ and $e\leftarrow Q_2$ if
$e$ is oriented from the endpoint in $Q_2$ towards the endpoint in
$Q_1$.

Consider a half-square $Q$ with vertices $u_1,u_2,u_3$ and $u_4$ in
$G_x$. There are four 1-edges incident on $Q$, namely $f_j$ for $j \in
\{1,\ldots,4\}$, where $f_j$ is incident to $u_j$. Since $M_i$ is a
matching in $G$, at most one of $\{f_1,f_2,f_3,f_4\}$ belongs to
$M_i$. If one of $\{f_1,\ldots,f_4\}$ is in $M_i$ we can assume without loss of generality that $f_1\in M_i$. If
$f_1\rightarrow Q$, then we add to $F^1_i$ the two half-edges in $Q$
that are not incident on $u_1$. If $f_1\leftarrow Q$, then we add
to $F^1_i$ the two half-edges in $Q$ that are incident to $u_1$
together with the other half-edge in $Q\cap C$. For $F^2_i$ we do the
opposite: If $f_1\leftarrow Q$, then we add to $F^2_i$ the two
half-edges in $Q$ that do not have as endpoint $u_1$, and if
$f_1\rightarrow Q$, then we add to $F^2_i$ the two half-edges in $Q$
that are not incident to $u_1$ together with the other half-edge in
$Q\cap C$.  See Figure \ref{fig:half-squares-red} for an illustration.
If none of $\{f_1,\ldots,f_4\}$ belong to $M_i$, we add both edges in
$C\cap Q$ to $F^i_1$ and $F^i_2$. We also arbitrarily choose an edge
in $Q \cap B$ to add to $F_1^i$ and add the other edge in $Q \cap B$
to $F_2^i$.

\begin{figure}[h]
	\centering
	\begin{subfigure}[b]{0.3\linewidth}
		\begin{tikzpicture}[scale=0.8]
		
		\draw [dashed] [black, line width=0.7mm] plot [smooth, tension=0] coordinates {(0,1) (1,2)};
		
		\draw [-] [black, line width=0.7mm] plot [smooth, tension=0] coordinates {(1,2) (2,1)};
		
		\draw [dashed] [black, line width=0.7mm] plot [smooth, tension=0] coordinates {(2,1) (1,0)};
		
		\draw [-] [black, line width=0.1mm] plot [smooth, tension=0] coordinates {(1,0) (0,1)};

		\draw [dashed] [black, line width=0.1mm,xshift=3.8 cm] plot [smooth, tension=0] coordinates {(0,1) (1,2)};
		
		\draw [-] [black, line width=0.7mm,xshift=3.8 cm] plot [smooth, tension=0] coordinates {(1,2) (2,1)};
		
		\draw [dashed] [black, line width=0.7mm,xshift=3.8 cm] plot [smooth, tension=0] coordinates {(2,1) (1,0)};
		
		\draw [-] [black, line width=0.1mm,xshift=3.8 cm] plot [smooth, tension=0] coordinates {(1,0) (0,1)};
		
		\draw [->] [black, line width=0.3mm] plot [smooth, tension=0] coordinates {(2,1) (3.7,1)};
		
		\draw[black,fill=white] (1,0) ellipse (0.1 cm  and 0.1 cm);
		\draw[black,fill=white] (0,1) ellipse (0.1 cm  and 0.1 cm);
		\draw[black,fill=white] (2,1) ellipse (0.1 cm  and 0.1 cm);
		\draw[black,fill=white] (1,2) ellipse (0.1 cm  and 0.1 cm);
		\draw[black,fill=white] (4.8,0) ellipse (0.1 cm  and 0.1 cm);
		\draw[black,fill=white] (3.8,1) ellipse (0.1 cm  and 0.1 cm);
		\draw[black,fill=white] (5.8,1) ellipse (0.1 cm  and 0.1 cm);
		\draw[black,fill=white] (4.8,2) ellipse (0.1 cm  and 0.1 cm);
		
		\node (Q) at (4.8,1) {{\large$Q$}};
		\end{tikzpicture}
		\caption{$F^i_1$}
	\end{subfigure}
	\hspace{10mm}
	\begin{subfigure}[b]{0.3\linewidth}
		\begin{tikzpicture}[scale=0.8]
		
		\draw [dashed] [black, line width=0.8mm] plot [smooth, tension=0] coordinates {(0,1) (1,2)};
		
		\draw [-] [black, line width=0.1mm] plot [smooth, tension=0] coordinates {(1,2) (2,1)};
		
		\draw [dashed] [black, line width=0.1mm] plot [smooth, tension=0] coordinates {(2,1) (1,0)};
		
		\draw [-] [black, line width=0.8mm] plot [smooth, tension=0] coordinates {(1,0) (0,1)};

		\draw [dashed] [black, line width=0.8mm,xshift=3.8 cm] plot [smooth, tension=0] coordinates {(0,1) (1,2)};
		
		\draw [-] [black, line width=0.1mm,xshift=3.8 cm] plot [smooth, tension=0] coordinates {(1,2) (2,1)};
		
		\draw [dashed] [black, line width=0.8mm,xshift=3.8 cm] plot [smooth, tension=0] coordinates {(2,1) (1,0)};
		
		\draw [-] [black, line width=0.8mm,xshift=3.8 cm] plot [smooth, tension=0] coordinates {(1,0) (0,1)};
		
		\draw [->] [black, line width=0.3mm] plot [smooth, tension=0] coordinates {(2,1) (3.7,1)};
		
		\draw[black,fill=white] (1,0) ellipse (0.1 cm  and 0.1 cm);
		\draw[black,fill=white] (0,1) ellipse (0.1 cm  and 0.1 cm);
		\draw[black,fill=white] (2,1) ellipse (0.1 cm  and 0.1 cm);
		\draw[black,fill=white] (1,2) ellipse (0.1 cm  and 0.1 cm);
		\draw[black,fill=white] (4.8,0) ellipse (0.1 cm  and 0.1 cm);
		\draw[black,fill=white] (3.8,1) ellipse (0.1 cm  and 0.1 cm);
		\draw[black,fill=white] (5.8,1) ellipse (0.1 cm  and 0.1 cm);
		\draw[black,fill=white] (4.8,2) ellipse (0.1 cm  and 0.1 cm);
		
		\node (Q) at (4.8,1) {{\large$Q$}};

		\end{tikzpicture}
		\caption{$F^i_2$}
	\end{subfigure}
	\label{fig:5.0}
	\caption{Solid edges belong to $B$ and dashed edges  belong to
		$C$.  The directed edge belongs to the matching.
		Thick edges represent those half-edges that are added to $F_1^i$ and $F_2^i$, respectively.}  \label{fig:half-squares-red}
\end{figure}

We conclude this proof with the following two key claims.

\begin{claim}
	The graphs induced on $G_x$ by edge sets $F^i_1$ and $F^i_2$
        are 2-edge-connected multigraphs of $G_x$ for $i \in \{1,\ldots,k\}$.
\end{claim}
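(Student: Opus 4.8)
The plan is to prove $2$-edge-connectivity of $F^i_1$ by combining a global contraction argument with a local analysis inside each half-square; since the construction of $F^i_2$ is the mirror image, the same argument applies verbatim to it, so I will only treat $F^i_1$. Spanning-ness is immediate (every $1$-edge gets at least one copy, and in every half-square the kept half-edges together with the incident doubled $1$-edge touch all four vertices), so it suffices to show that every edge cut of $F^i_1$ in $G_x$ has size at least two.

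First I would record how $F^i_1$ looks inside a fixed half-square $Q$, with vertices $u_1,\dots,u_4$ forming a $4$-cycle and incident $1$-edges $f_1,\dots,f_4$. Because $M_i$ is a matching of $G$ (whose vertices are exactly the half-squares), at most one $f_j$ lies in $M_i$. When no $f_j$ is in $M_i$, the construction keeps three of the four square edges, and those three always form a Hamiltonian path on $\{u_1,\dots,u_4\}$. When, say, $f_1\in M_i$, the edge $f_1$ is doubled, and according to the fixed orientation of $f_1$ the set of square edges kept in $Q$ either spans all four vertices or spans only the three vertices other than $u_1$, with $u_1$ attached to the rest of $F^i_1$ through the doubled copy of $f_1$. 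The key combinatorial point is that for each $M_i$-edge these two possibilities occur at its two endpoint-squares in a complementary way, so the delicate local configurations are vertex-disjoint and come in matched pairs, one on each side of a doubled $1$-edge.

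Next I would contract every half-square of $F^i_1$ to a single vertex. The resulting multigraph contains $G$ and is therefore $4$-edge-connected, in particular $2$-edge-connected. To recover $F^i_1$ one re-expands each contracted vertex into its gadget (a spanning path of the square, or a path on three of its vertices together with the pendant doubled $1$-edge). Here I would verify the local statement that each such gadget, with its four incident $1$-edges redirected to one external vertex, yields a $2$-edge-connected graph, taking care of the multiplicity-two edges; the matched-pair structure above is exactly what guarantees that a doubled $1$-edge together with the vertices it attaches never forms a cut of size one. Expanding the vertices of a $2$-edge-connected multigraph into gadgets with this property preserves $2$-edge-connectivity, which one checks by a short cut-counting argument: a vertex subset $S$ that respects the half-squares induces a cut of the contracted graph (size at least $4$), while a subset $S$ splitting some half-square is handled by playing the gadget cut against the contracted cut. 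Equivalently, one can argue directly that every edge lies on a cycle, producing cycles through the $1$-edges by lifting cycles of $G\setminus M_i=G_i'$ (which is $2$-vertex-connected by hypothesis) through the spanning paths inside the squares, and cycles through a square half-edge by using $2$-vertex-connectivity of $G_i'$ to route a cycle that enters and leaves that square through the pair of $1$-edges straddling the desired half-edge.

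I expect the crux to be precisely this local bookkeeping around the $M_i$-edges: one must ensure that replacing a $1$-edge by two parallel copies, while deleting square edges near one of its endpoints, never leaves a bridge — this is exactly why the construction couples the two endpoint-squares of each $M_i$-edge, and why one needs $M_i$ to be a matching so that the configurations never overlap. Once the finitely many gadget types have been checked, the global argument goes through unchanged, and the mirror-image analysis gives the same conclusion for $F^i_2$.
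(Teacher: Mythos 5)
Your outline correctly identifies the local gadget structure and the fact that the two endpoint-squares of a matched $1$-edge are treated in a complementary way, but the heart of the proof is missing, and the two places where you sketch how it would go are exactly where the argument breaks. Your primary route rests on the assertion that re-expanding the (at least $4$-edge-connected) contracted multigraph into the square gadgets preserves $2$-edge-connectivity, with cuts that split a half-square ``handled by playing the gadget cut against the contracted cut.'' There is no generic expansion lemma available here: the gadget at the head of a matched $1$-edge leaves its attachment vertex with \emph{no} kept square edges, so the dangerous cuts are precisely those containing \emph{no} $1$-edges at all (they slice only through half-squares), and for such cuts the edge-connectivity of the contracted graph, or of $G$, gives you nothing. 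Ruling these cuts out is the actual content of the claim, and it cannot be done from $4$-edge-connectivity of $G$ alone: the paper's proof pairs up the half-edges of the cut (opposite pairs versus corner pairs), reduces the bad configurations to a cut consisting only of $1$-edges, and then invokes the hypothesis that $G_i'=(V,E\setminus M_i)$ is $2$-connected --- once to get a non-matched $1$-edge in that cut, and once, in the hardest sub-case, to conclude that the split square would be a \emph{cut vertex} of $G_i'$, where vertex-connectivity (not just edge-connectivity) is essential. Your main argument never uses the hypothesis that $G\setminus M_i$ is $2$-vertex-connected, so it cannot be completed as stated; the construction's savings on half-edges are safe only because of that hypothesis.

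Your alternative ``every edge lies on a cycle'' sketch does invoke $2$-vertex-connectivity of $G_i'$, which is the right ingredient, but the stated recipe fails in the delicate case: for a kept half-edge one of whose endpoints carries a matched $1$-edge (which happens at the tail square of every $M_i$-edge), the ``pair of $1$-edges straddling the desired half-edge'' is not contained in $G_i'$, so you cannot route a cycle of $G_i'$ through that pair. One must instead pick a different pair of non-matched $1$-edges at that square and check, gadget type by gadget type, that the within-square routing can be forced to traverse the given half-edge (and one must also argue connectivity separately, since ``every edge on a cycle'' only characterizes $2$-edge-connectivity for connected multigraphs). These checks are exactly the case analysis the paper carries out in its cut-counting form; without them the proposal is an outline rather than a proof.
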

	\begin{cproof}
Since the construction of $F_1^i$ and $F_2^i$ are symmetric, it is
enough to show this only for $F^i_1$.  First notice that for every
vertex $v\in G_x$, we have $|F^i_1\cap \delta(v)|\geq 2$. Let $e$ be
the 1-edge incident on $v$. If $e\in M_i$, then we have two copies of
$e$ in $F^i_1$ so we are done. If $e\notin M_i$, then $F^i_1$ contains
only one copy of $e$.  However, by construction, in the half-square
that contains $v$, we will have at least one half-edge in $F^i_1$ that
is incident to $v$.

We proceed by showing that for every set of edges $D$ in $G_x$ that
forms a cut (i.e., whose removal disconnects the graph $G_x$), we have
$|D\cap F^i_1|\geq 2$. Clearly, if $D$ contains two or more 1-edges,
since $F^i_1$ contains all the 1-edges, we have $|D\cap F^i_1|\geq 2$.
So assume $|D\cap A|=1$; $D$ contains exactly one 1-edge $e$ of
$G_x$. If $e\in M_i$, we are done as the matching will take two copies
of $e$. Thus, we may assume $e\notin M_i$.  Notice that for any edge
cut $D$, $D$ contains either zero or two edges from every
half-square. Hence, we can pair up the half-edges in $D$. Let
$e_1,\ldots,e_n,f_1,\ldots,f_m$ and
$e'_1,\ldots,e'_n,f'_1,\ldots,f'_m$ be the half-edges in $D$ such that
$e_j$ and $e'_j$ belong to the same half-square and are opposite
edges, and $f_j$ and $f'_j$ belong to the same half-square and share
an endpoint.  Notice that while we can have $m=0$ or $n=0$, it must be
the case that $n+m \geq 1$, since $G_x$ is 2-edge-connected and hence $D$
must contain two edges from at least one half-square.  Note that
$D\cap F^i_1$ contains edge $e$. For a contradiction, suppose
that $|D\cap F^i_1|=1$. In this case, we must have $n=0$ since in our
construction we take at least one half-edge from every pair of
opposite half-edges. (In other words, if $n \geq 1$, then $D$ and
$F_1^i$ must have at least one half-edge in common.) For $j \in
\{1,\ldots,m\}$, let $u_j$ be the endpoint that $f_j$ and $f'_j$ share
and let $g_j$ be the 1-edge incident to $u_j$.  Notice that $D'=e\cup
\{\bigcup_{j=1}^m g_j\}$ forms a cut in $G_x$ that only contains
1-edges. Thus, $D'$ is also a cut in $G$. This implies that there is
an edge $g_j$ for some $j\in \{1,\ldots,m\}$ such that $g_j\notin
M_i$. Otherwise, $e$ is the unique edge of cut $D'$ that is not in $M_i$. This means that $G_i'=(V,E\setminus M_i)$ has a cut with only one edge, which implies that it is not 2-vertex-connected.
Since $g_j\notin M_i$, by construction $F^i_1$
contains an edge in the half-square that contains $u_j$. This implies
that $|F^i_1 \cap \{f_j,f'_j\}|\geq 1$, which is a contradiction to
the assumption that $|D \cap F^i_1| = 1$ (See Figure \ref{helpconstruction}.)

		
		\begin{figure}[h]
			\centering
			\begin{tikzpicture}[scale=0.7]
			
			\draw [-] [black, line width=0.3mm] plot [smooth, tension=0] coordinates {(-3,1.5) (-3,-0.7)};
			
			\draw [dashed] [black, line width=0.3mm] plot [smooth, tension=0] coordinates {(-1,0.5) (-2,-1.2)};
			\draw [dashed] [black, line width=0.3mm] plot [smooth, tension=0] coordinates {(-1,0.5) (0,-1.2)};

			\draw [-] [black, line width=0.3mm] plot [smooth, tension=0] coordinates {(-1,0.5) (-1,2)};
			\draw [dashed] [black, line width=0.3mm] plot [xshift=4.5cm,smooth, tension=0] coordinates {(-1,0.5) (-2,-1.2)};
			\draw [dashed] [black, line width=0.3mm] plot [xshift = 4.5cm, smooth, tension=0] coordinates {(-1,0.5) (0,-1.2)};
			\draw [-] [black, line width=0.3mm] plot [smooth, tension=0,xshift =4.5cm] coordinates {(-1,0.5) (-1,2)};

			\draw [dashed] [black, line width=0.3mm] plot [xshift=9cm,smooth, tension=0] coordinates {(-1,0.5) (-2,-1.2)};
			\draw [dashed] [black, line width=0.3mm] plot [xshift = 9cm, smooth, tension=0] coordinates {(-1,0.5) (0,-1.2)};
			\draw [-] [black, line width=0.3mm] plot [smooth, tension=0,xshift =9cm] coordinates {(-1,0.5)  (-1,2)};

			\draw [dotted] [blue, line width=0.3mm] plot [smooth, tension=0.5] coordinates {(-4,1.3) (-1,1) (8,1) (10,1.3)};
			
			\draw [dotted] [red, line width=0.3mm] plot [smooth, tension=0.5] coordinates {(-4,0.3) (-1,-0.8) (8,-0.8) (10,0.3)};
			
			\node (D') at (-4.45,1.3) {\textcolor{blue}{\large$D'$}};
			\node (D) at (-4.45,0.3) {\textcolor{red}{\large$D$}};

			\draw[black,fill=white] (-3,-0.8) ellipse (0.1 cm  and 0.1 cm);
			\draw[black,fill=white] (-3,1.5) ellipse (0.1 cm  and 0.1 cm);
			\draw[black,fill=white] (-2,-1.2) ellipse (0.1 cm  and 0.1 cm);
			\draw[black,fill=white] (-1,0.5) ellipse (0.1 cm  and 0.1 cm);
			\draw[black,fill=white] (0,-1.2) ellipse (0.1 cm  and 0.1 cm);
			\draw[black,fill=white] (-1,2) ellipse (0.1 cm  and 0.1 cm);
			\draw[black,fill=white,xshift=4.5cm] (-2,-1.2) ellipse (0.1 cm  and 0.1 cm);
			\draw[black,fill=white,xshift=4.5cm] (-1,0.5) ellipse (0.1 cm  and 0.1 cm);
			\draw[black,fill=white,xshift=4.5cm] (0,-1.2) ellipse (0.1 cm  and 0.1 cm);
			\draw[black,fill=white,xshift=4.5cm] (-1,2) ellipse (0.1 cm  and 0.1 cm);
			\draw[black,fill=white,xshift=9cm] (-2,-1.2) ellipse (0.1 cm  and 0.1 cm);
			\draw[black,fill=white,xshift=9cm] (-1,0.5) ellipse (0.1 cm  and 0.1 cm);
			\draw[black,fill=white,xshift=9cm] (0,-1.2) ellipse (0.1 cm  and 0.1 cm);
			\draw[black,fill=white,xshift=9cm] (-1,2) ellipse (0.1 cm  and 0.1 cm);

			\node (Q) at (1.25,0.3) {{\large$\ldots$}};
			
			\node (e) at (-3.3,0.5) {{\large$e$}};
			
			\node (fj) at (-1.9+4.5,-0.2) {{\large$f_j$}};
			\node (f'j) at (-0.1+4.5,-0.2) {{\large$f'_j$}};
			\node (gj) at (-1.35+4.5,1+0.5) {{\large$g_j$}};

			\node (f1) at (-1.9,-0.2) {{\large$f_1$}};
			\node (f'1) at (-0.1,-0.2) {{\large$f'_1$}};
			\node (g1) at (-1.35,1+0.5) {{\large$g_1$}};
			
			\node (fm) at (-1.9+9,-0.2) {{\large$f_m$}};
			\node (f'm) at (-0.1+9,-0.2) {{\large$f'_m$}};
			\node (gm) at (-1.4+9,1+0.5) {{\large$g_m$}};

			\node (Q) at (5.75,0.3) {{\large$\ldots$}};

			\end{tikzpicture}
			\caption{Edges in the cuts $D$ and $D'$.}  
			\label{helpconstruction}
		\end{figure}
Finally, assume that $D$ does not contain any 1-edges. In this case,
let $e_1,\ldots,e_n,f_1,\ldots,f_m$ and
$e'_1,\ldots,e'_n,f'_1,\ldots,f'_m$ be the half-edges in $D$ such that
$e_j$ and $e'_j$ belong to the same half-square and are opposite edges, and
$f_j$ and $f'_j$ belong to the same half-square and share one
endpoint. Notice that we can have $m=0$ or $n=0$ but $n+m\geq 2$, because
$D$ must contain edges from at least two half-squares (since $G_x$ is
2-vertex connected). For
$j \in \{1,\ldots,m\}$ let $u_j$ be the endpoint that $f_j$ and $f'_j$ share
and $g_j$ be the 1-edge incident on $u_j$. If $n=0$, then
$D'=\bigcup_{j=1}^{m}g_j$ forms a cut in $G$. Hence, there are two
edges $g_j$ and $g_{j'}$ such that $g_j,g_{j'}\notin M_i$. This implies
that $|F^i_1\cap \{f_j,f'_j\}|\geq 1$, and $|F^i_1\cap
\{f_{j'},f'_{j'}\}|\geq 1$. Therefore, $|D\cap F^i_1|\geq 2$. If
$n=2$, then by construction $|F^i_1\cap \{e_1,e'_1\}|\geq 1$, and
$|F^i_1\cap \{e_2,e'_2\}|\geq 1$, so we have the result. It only
remains to consider the case when $n=1$. Notice as before we have
$|F^i_1 \cap \{e_1,e'_1\}|\geq 1$. If there is $g_j$ for some $j\in
\{1,\ldots,m\}$ such that $g_j\notin M_i$, then we have $|F^i_i\cap
\{f_j,f'_j\}|\geq 1$ in which case we are done. Thus, we may assume
$g_j\in M_i$. Let $Q$ be the half-square that contains $e_1$ and $e'_1$. In
$G_i'=(V,E\setminus M_i)$ the vertex corresponding to $Q$ will be a cut
vertex, which is a contradiction.
\end{cproof}
Now we conclude the proof by proving the second and last claim.

\begin{claim}
	Let $r= \sum_{i=1}^{k} \frac{\lambda_i}{2}\chi^{F_1^i} + \sum_{i=1}^{k} \frac{\lambda_i}{2}\chi^{F_2^i}$. We have $r_e=1+\alpha$ for $e\in A$, $r_e=\frac{1}{2}$ for $e\in B$, and $r_e = 1-\alpha$ for $e\in C$, i.e. $r= r^{x,\alpha}$.
\end{claim}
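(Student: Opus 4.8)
The plan is to compute $r_e$ directly from the construction, edge class by edge class. Since $r=\sum_{i=1}^k\tfrac{\lambda_i}{2}(\chi^{F_1^i}+\chi^{F_2^i})$, $\sum_i\lambda_i=1$, and $E(G_x)=A\cup B\cup C$, it suffices to determine, for each edge $e$ and each index $i$, the local contribution $\mu^i_e:=\chi^{F_1^i}_e+\chi^{F_2^i}_e$ and then evaluate $r_e=\tfrac12\sum_i\lambda_i\mu^i_e$. The only global input needed is that the $M_i$ realize the everywhere $\alpha$ vector of $G$, i.e.\ $\sum_{i:\,g\in M_i}\lambda_i=\alpha$ for every edge $g$ of $G$.

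For $e\in A$ with corresponding edge $\hat e$ of $G$, each of $F_1^i,F_2^i$ contains two copies of $e$ if $\hat e\in M_i$ and one copy otherwise, so $\mu^i_e=4$ or $\mu^i_e=2$ accordingly, and $r_e=\tfrac12(4\alpha+2(1-\alpha))=1+\alpha$. For the half-edges, fix a half-square $Q$, a $4$-cycle of $G_x$ whose four half-edges form two opposite pairs; since $H$ contains one of these pairs, $B\cap Q$ is one pair and $C\cap Q$ the other. The four $1$-edges $f_1,\dots,f_4$ meeting $Q$ are exactly the four edges of $G$ at the vertex $v_Q$ obtained by contracting $Q$, so since $G$ is $4$-regular and each $M_i$ is a matching, the index sets $\{i:f_j\in M_i\}$ are pairwise disjoint and each has $\lambda$-weight $\alpha$; hence the ``some $f_j$ matched'' event has weight $4\alpha$ and its complement weight $1-4\alpha$.

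Next I would read off the behaviour inside $Q$ for a fixed $i$. If no $f_j\in M_i$, the construction places both edges of $C\cap Q$ in both $F_1^i$ and $F_2^i$ and splits the two edges of $B\cap Q$, one into each, so $\mu^i_e=2$ for $e\in C\cap Q$ and $\mu^i_e=1$ for $e\in B\cap Q$. If $f_j\in M_i$, then regardless of the orientation chosen for $f_j$ the sets $F_1^i\cap Q$ and $F_2^i\cap Q$ are, in some order, the path consisting of the two half-edges of $Q$ not meeting $u_j$ and that same path together with ``the other half-edge in $Q\cap C$''. Hence the two half-edges not meeting $u_j$ (one from $B$, one from $C$) get $\mu^i_e=2$, the $B$-edge meeting $u_j$ gets $\mu^i_e=0$, and the $C$-edge meeting $u_j$---which is exactly ``the other half-edge in $Q\cap C$''---gets $\mu^i_e=1$. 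This last point is the one place care is needed: a $C$-edge incident to the matched vertex is re-inserted as the extra $C$-edge rather than dropped, and overlooking this would incorrectly give $1-2\alpha$ on $C$.

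Finally I would sum with the weights above. For a $C$-edge $e$ of $Q$ with endpoints $u_j,u_k$, we get $\mu^i_e=2$ when no $f$ is matched (weight $1-4\alpha$), $\mu^i_e=1$ when $f_j$ or $f_k$ is matched (weight $2\alpha$), and $\mu^i_e=2$ when one of the other two $f$'s is matched (weight $2\alpha$), so
\[
r_e=\tfrac12\bigl(2(1-4\alpha)+1\cdot 2\alpha+2\cdot 2\alpha\bigr)=1-\alpha .
\]
For a $B$-edge $e$ of $Q$ with endpoints $u_j,u_k$, we get $\mu^i_e=1,0,2$ in the same three cases, giving $r_e=\tfrac12\bigl((1-4\alpha)+0+2\cdot 2\alpha\bigr)=\tfrac12$. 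Combined with the value on $A$, this yields $r=r^{x,\alpha}$. The main obstacle is purely the bookkeeping of the previous paragraph --- tracking how each of the four half-edges of a half-square is treated in the matched case --- since the rest is linearity together with the single identity $\sum_{i:g\in M_i}\lambda_i=\alpha$.
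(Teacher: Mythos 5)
Your computation is correct and follows essentially the same route as the paper's proof: both reduce the claim to the identity $\sum_{i:\,g\in M_i}\lambda_i=\alpha$ together with a per-half-square case analysis of how $F_1^i$ and $F_2^i$ use the $A$-, $B$- and $C$-edges, then sum with the weights $2\alpha$, $2\alpha$, $1-4\alpha$. One remark: your bookkeeping for $B$-edges (contribution $\mu^i_e\in\{0,2,1\}$ according to whether the matched 1-edge meets an endpoint of $e$, meets the other two square vertices, or is absent) is the one consistent with the construction as stated, whereas the paper's write-up asserts a contribution of $1$ in every case; both tallies sum to $\tfrac12$, so the conclusion is unaffected.
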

\begin{cproof}
	 Let $e\in A$ (a 1-edge in $G_x$). We have $\sum_{i\in [k]:\;e\in M_i}\lambda_i = \alpha$. Therefore,
	 \begin{align*}
	 \sum_{i=1}^{k}\frac{\lambda_i}{2}\chi^{ F^i_1}_e+ \sum_{i=1}^{k}\frac{\lambda_i}{2}\chi^{ F^i_2}_e &= \sum_{i\in k:\; e\in M_i}^{k}\frac{2\lambda_i}{2}+ \sum_{i\in k:\; e\notin M_i}^{k}\frac{\lambda_i}{2}+ \sum_{i\in k:\; e\in M_i}^{k}\frac{2\lambda_i}{2}+\sum_{i\in k:\; e\notin M_i}^{k}\frac{\lambda_i}{2}\\
	 &= \alpha + \frac{1}{2}-\frac{\alpha}{2} + \alpha + \frac{1}{2}-\frac{\alpha}{2}\\
	 &=1+\alpha.
		\end{align*}
	Now consider a half-edge $e\in B$.  Let $f$ and $g$ be the 1-edges incident on the endpoints of $e$. If $f\in M_i$ and $f$ is incoming to $e$, then $e\notin F^i_1$ and $e\in F^i_2$, otherwise if $f\in M_i$ and $f$ is outgoing of $e$, then $e\in F^i_1$ and $e\notin F^i_2$. This means that if $f\in M_i$, then $\frac{\lambda_i}{2}\chi^{F^i_1}_e + \frac{\lambda_i}{2}\chi^{F^i_2}_e=\frac{\lambda_i}{2}$. Similarly, if $g\in M_i$, we have  $\frac{\lambda_i}{2}\chi^{F^i_1}_e + \frac{\lambda_i}{2}\chi^{F^i_2}_e=\frac{\lambda_i}{2}$. 	Notice that if $f\in M_i$, then $g\notin M^i$, since in $G$, edges $f$ and $g$ share an endpoint and $M_i$ is a matching.

Now, assume $f,g\notin M_i$. Let $f',g'$ be the other 1-edges incident on the square $Q$ that contains $e$. If $f'\in M_i$, then if $f'$ is incoming to $Q$, then $e\in F^i_1$ and $e\notin F^i_2$. If $f'$ is outgoing from $Q$, then $e\notin F^i_1$ and $e\in F^i_2$. In both case, $\frac{\lambda_i}{2}\chi^{F^i_1}_e + \frac{\lambda_i}{2}\chi^{F^i_2}_e=\frac{\lambda_i}{2}$. Similarly, if $g'\in M_i$. If $f,g,f',g'\notin M_i$, then exactly one of $F^i_1$ and $F^i_2$ will contain $e$. Hence, $\frac{\lambda_i}{2}\chi^{F^i_1}_e + \frac{\lambda_i}{2}\chi^{F^i_2}_e=\frac{\lambda_i}{2}$. We have,
 \begin{align*}
\sum_{i=1}^{k}\frac{\lambda_i}{2}\chi^{ F^i_1}_e+ \sum_{i=1}^{k}\frac{\lambda_i}{2}\chi^{ F^i_2}_e &= \sum_{i=1}^{k}\frac{\lambda_i}{2}= \frac{1}{2}.
\end{align*}

Now consider edge $e\in C$. Let $Q$ be the square in $G_x$ that contains $e$. Let $f,g,f',g'$ be the 1-edges incident on $Q$ such that $f,g$ are the 1-edges that are incident on the endpoints of $e$. If $f\in M_i$ and $f$ is incoming to $Q$, then $e\notin F^i_1$. Also, if $g\in M_i$ and $g$ is incoming to $Q$, then $e\notin F^i_1$. In all other cases $e\in F^i_1$. Similarly, if $f\in M_i$ and $f$ is outgoing from $Q$, then $e\notin F^i_2$. Also, if $g\in M_i$ and $g$ is outgoing from $Q$, then $g\notin F^i_2$. In all other case $e\in F^i_2$. We conclude
\begin{align*}
\sum_{i=1}^{k}\frac{\lambda_i}{2}\chi^{ F^i_1}_e+ \sum_{i=1}^{k}\frac{\lambda_i}{2}\chi^{ F^i_2}_e &=\;\;\frac{1}{2}- \sum_{i\in k:\;f\in M_i, f\rightarrow Q}\frac{\lambda_i}{2}- \sum_{i\in k:\;g\in M_i, g\rightarrow Q}\frac{\lambda_i}{2}\\
&\;\;+\frac{1}{2}- \sum_{i\in k:\;f\in M_i, f\leftarrow Q}\frac{\lambda_i}{2}- \sum_{i\in k:\;g\in M_i, g\leftarrow Q}\frac{\lambda_i}{2}\\
&=\;\;1- \sum_{i\in k:\;f\in M_i}\frac{\lambda_i}{2}- \sum_{i\in k:\;g\in M_i}\frac{\lambda_i}{2}\\
&=\;\; 1- \alpha.
\end{align*}
\end{cproof}
This concludes the proof.
\end{proof}

\subsection{A top-down coloring approach: Proof of Lemma \ref{9/10}}\label{sec:9/10}
In this section we prove Lemma \ref{9/10}.

\nineoverten*

In order to prove this lemma, we
need a way to reduce vertex connectivity to edge-connectivity to be able to employ the top-down coloring approach. 

\subsubsection{Reducing 2-vertex connectivity to 2-edge
	connectivity}\label{vertex-cut}

We now present an approach to reduce vertex connectivity to
edge-connectivity.  Let $G=(V,E)$ be a 4-regular 4-edge-connected
graph.  Note that $G$ must be 2-vertex-connected.  Let $T$ be a
spanning tree of $G$ such that $T$ does not have any vertices of
degree four and let $L=E\setminus T$ be the set of links. 

For a link $\ell$ in $L$, let $P_\ell$ be the set of edges in $T$ on
the unique path in $T$ between the endpoints of $\ell$. For $e\in T$,
let $ \cov(e)$ be the set of links $\ell$ such that $e\in
P_\ell$. Since $G$ is 4-edge-connected, $|\cov(e)|\geq 3$ for all
$e\in T$.

\begin{definition}\label{subdivide}
The {\em subdivided graph} $G'=(V',E')$ of
$G$ is the graph in which each edge $e=uw$ of $T$ is subdivided into
$uv_e$ and $v_ew$.  Then $T'$ is a spanning tree of $G'$ in which for
each edge $uw \in T$, we include both $uv_e$ and $v_ew$ in $T'$.  We
define $L' = E'\setminus{T'}$ as follows.  For each link $\ell \in L$,
we make a link $\ell' \in L'$ as follows.  Let $u$ be an endpoint of
$\ell$.
\begin{enumerate}

\item If $u$ is a leaf of $T$, then $u$ is an endpoint of $\ell'$.

\item If $u$
        is an internal vertex, let $e$ be the edge in $P_\ell$ such
        that $u$ is also an endpoint of $e$.  (Note that there is only
        one such $e$, since $P_\ell$ is a unique path and $e$ is the
        first, or last, edge in $P_\ell$.)  Then $v_e$ is the endpoint
        of $\ell'$.

\end{enumerate}
\end{definition}

The procedure outlined in Definition \ref{subdivide} defines a
bijection between links in $L$ and $L'$.  Thus, for every set of links
$F' \subset L'$, we let $F \subset L$ denote the corresponding set of
links.  We use this bijection to go from 2-edge-connectivity to
2-vertex-connectivity.

\begin{lemma}\label{reductionlemma}
Let $G=(V,E)$ be a 4-regular 4-edge-connected graph, let $T$ be a
spanning tree of $G$ with maximum degree three, and let $L =
E\setminus{T}$.  Let $G' = (V',E')$ be a subdivided graph with
spanning tree $T'$ and links $L' = E'\setminus{T'}$.  We have
  
\begin{itemize}
	\item  For any $F' \subset L'$ such that $T' + F'$ is
          2-edge-connected, $T + F$ is 2-vertex-connected.

	\item For every edge $e'\in T'$, there are at least two links
          $\ell'_1,\ell'_2\in L'$ such that $\ell'_1,\ell'_2 \in \cov(e')$.
\end{itemize}
\end{lemma}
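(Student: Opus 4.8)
The plan is to prove the two bullets separately, in each case translating between $G$ and its subdivision through the endpoint‑relocation rule of Definition~\ref{subdivide}: when $\ell'$ is built from $\ell=ab$, a leaf or root endpoint of $\ell$ stays put, whereas an internal endpoint $a$ is moved to the subdivision vertex $v_e$ of the first edge $e$ of $P_\ell$ incident to $a$. Intuitively $\ell'$ ``starts in the middle'' of that first tree edge, so relative to $P_\ell$ the path $P_{\ell'}$ can only lose the half of that edge touching $a$.

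For the first bullet I would argue by contradiction, converting a cut vertex of $T+F$ into a one‑edge cut of $T'+F'$. Since $T+F\supseteq T$ it is connected, and no leaf of $T$ is a cut vertex of $T+F$ (deleting a leaf leaves $T$ connected), so a cut vertex $v$ must be internal, say with $\deg_T(v)=d\in\{2,3\}$; let $C_1,\dots,C_d$ be the components of $T-v$, each containing one neighbour of $v$. In $(T+F)-v$ the $C_i$'s are joined only by links of $F$ (no tree edge joins two of them), and since $v$ is a cut vertex and $d\le 3$ some single $C_i$---say $C_1$---is by itself a component of $(T+F)-v$; in particular no link of $F$ leaves $C_1$ except possibly to $v$. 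Let $w_1$ be the neighbour of $v$ in $C_1$ and $e_1=vw_1\in T$.

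Now in $G'$ I would consider
\[
W \;=\; C_1 \;\cup\; \{\, v_e : e\in T,\ \text{both endpoints of } e \text{ lie in } C_1 \,\}\;\cup\;\{v_{e_1}\}\;\subset V',
\]
which is nonempty and proper (it omits $v$), and show $\delta_{T'+F'}(W)=\{v v_{e_1}\}$. The tree part is routine: a tree edge inside $C_1$ has both halves in $W$; $e_1$ contributes only its half $v v_{e_1}$; and every other tree edge at $v$, as well as every tree edge inside a $C_j$ with $j\neq1$, has both halves outside $W$. For the links I would run through the possibilities for $\ell=ab\in F$: if $a,b\in C_1$ then $P_\ell\subseteq C_1$ and both endpoints of $\ell'$ land in $W$; if $a\in C_1$ and $b=v$ then the $a$‑endpoint of $\ell'$ lands in $W$ and its $v$‑endpoint is exactly $v_{e_1}\in W$; and if $a,b\notin C_1$ then the relevant first edges lie inside some $C_j$, $j\neq1$, or equal some $vw_j$, $j\neq1$, so neither endpoint of $\ell'$ is in $W$---and the remaining type $a\in C_1,\ b\in C_j$ with $j\neq1$ does not arise, since no $F$‑link leaves $C_1$. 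Thus $T'+F'$ would have a one‑edge cut, contradicting $2$‑edge‑connectivity; hence $T+F$ has no cut vertex and is $2$‑vertex‑connected. The delicate step, which I expect to be the main obstacle, is precisely the case $b=v$: one must verify that a link of $F$ from $C_1$ to $v$ attaches to $\ell'$ at $v_{e_1}$ (which we placed in $W$) rather than at $v$, as this is what keeps $|\delta_{T'+F'}(W)|$ equal to $1$.

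For the second bullet, fix $e'\in T'$; it is one of the two halves $uv_e,\,v_ew$ of some $e=uw\in T$, and I treat $e'=v_ew$, the case $e'=uv_e$ being symmetric in $u$. Since $G$ is $4$‑edge‑connected, $|\cov(e)|\ge3$. By the relocation rule, a link $\ell\in\cov(e)$ fails to cover $v_ew$ only if $\ell$ is incident to $w$ and $w$ is internal, so at most $4-\deg_T(w)$ links of $\cov(e)$ are lost, and none if $w$ is a leaf. If $w$ is a leaf or $\deg_T(w)=3$, this loses at most one link, so $|\cov(v_ew)|\ge|\cov(e)|-1\ge2$. If $\deg_T(w)=2$ and both links at $w$ lie in $\cov(e)$, let $f$ be the other tree edge at $w$ and $S$ the component of $T-f$ not containing $w$; since both links at $w$ leave the $w$‑side of $e$, no link joins $S$ to $w$, so $\cov(e)$ consists exactly of those two links together with the $\ge3$ links in $\delta(S)\setminus\{f\}$, whence $|\cov(e)|\ge5$ and $|\cov(v_ew)|\ge3$. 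In all cases $|\cov(e')|\ge2$.
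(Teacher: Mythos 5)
Your proof is correct, and it is worth noting where it coincides with and where it departs from the paper's argument. For the first bullet the underlying idea is the same as the paper's (a cut vertex $u$ of $T+F$ must be an internal vertex of tree-degree two or three, and it forces a bridge in $T'+F'$), but your packaging is cleaner: instead of the paper's separate treatment of $\deg_T(u)=2$ and $\deg_T(u)=3$ via an $S_1,S_2$ partition and an uncovered half-edge, you use $\deg_T(u)\le 3$ to isolate a single component $C_1$ of $T-u$ with no $F$-link leaving it except to $u$, and you exhibit the explicit one-edge cut $\delta_{T'+F'}(W)=\{uv_{e_1}\}$; the step you flag as delicate (a link from $C_1$ to $u$ re-attaches at $v_{e_1}\in W$ because the last edge of its tree path is $e_1$) is exactly right and is the same phenomenon the paper exploits. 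For the second bullet you genuinely diverge in the tight case: the paper argues by contradiction that if fewer than two links of $\cov(e)$ survive subdivision, then enough links must be pinned at the offending vertex to force degree five or six, contradicting 4-regularity; you instead observe that a link of $\cov(e)$ loses $e'=v_ew$ only if it is incident to an internal $w$, so at most $4-\deg_T(w)$ links are lost, and in the only dangerous subcase ($\deg_T(w)=2$ with both links at $w$ in $\cov(e)$) you apply 4-edge-connectivity to the component $S$ of $T-f$ to get $|\cov(e)|\ge 5$, so at least two links survive. Both routes use the same hypotheses (4-regularity, 4-edge-connectivity, $T$ without degree-four vertices); yours makes the loss mechanism at relocated endpoints more transparent, at the price of one extra cut argument. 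One shared caveat: your reduction of the cut vertex to an ``internal'' vertex (so that its link-endpoints are relocated) implicitly uses that the root is chosen to be a leaf of $T$, as the paper assumes just before Definition \ref{subdivide}.
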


\begin{proof}
	Let us show that this reduction satisfies the first
        property. Suppose for contradiction that there is $F'\subseteq
        L'$ such that $T'+F'$ is 2-edge-connected, but the
        corresponding set of links $F$, is such that $T+F$ has a
        cut-vertex, namely $u$. Clearly $u$ cannot be a leaf of $T$,
        since $T-u$ is a connected graph. Similarly, $r\neq u$. Hence,
        we can assume that $u$ is an internal vertex of $T$.

Since $u$ is a cut-vertex of $T+F$, we can partition $V\setminus
\{u\}$ into $S_1$ and $S_2$ such that there is no edge in $T+F-
\delta(u)$ that has one endpoint in $S_1$ and one endpoint in $S_2$.
Let $\delta_T(u)$ be the set of edges in $T$ incident on $u$.  Since
$u$ is an internal vertex of $T$, we have $2\leq|\delta_T(u)| \leq 3$.
Suppose $u$ is adjacent to $v$ in $T$. Label the $vu$ edge in $T$ with
$e$.  Assume first that $|\delta_T(u)|=2$: let $f$ be the other edge
incident to $u$ in $T$. There is no link $\ell'\in F'$ such that
$\ell'$ covers the edge $uv_f$, because such a link $\ell'$
corresponds to a link in $\ell \in L$ that has one endpoint in $S_1$
and other in $S_2$. Now, assume $|\delta_T(u)|=3$: let $f_1$ and $f_2$
be the edges incident to $u$ (besides $e$) in $T$. Let $w_1$ and $w_2$
be the endpoints of $f_1$ and $f_2$ other than $u$. Again, let $S_1$
and $S_2$ be a partition of $V\setminus \{u\}$ such that no edge in
$T+F-\delta(u)$ that has one end in $S_1$ and other in $S_2$. Without
loss of generality, assume $v\in S_1$ and $w_1,w_2\in S_2$. Consider
edge $v_eu$ in $T'$: if there is a link $\ell'\in L'$ covering $v_eu$,
then the link $\ell$ corresponding to $\ell'$ has one end in $S_1$ and
the other in $S_2$. Hence, we get a contradiction.

Now we show the second property holds: for each edge $e'\in T'$, there
are at least two links $\ell,\ell' \in L'$ that are in
$\cov(e')$. Suppose there is an edge $e'$ such that $e'$ does not have
this property. Edge $e'$ corresponds to one part of a subdivided edge
$e$ in the tree $T$. Let $v$ and $v_e$ be the endpoints of $e'$.

First, notice that if $v$ is a leaf,
then there are three links in $\ell$ that cover edge $e$ in
$T$, and all these links will cover $e'$ in the new instance as we do not
change the leaf endpoints. Thus we may assume that $v$ is not a leaf.

	\begin{figure}[h]
		\centering
		\begin{subfigure}[t]{.3\textwidth}\centering
			\begin{tikzpicture}[scale=0.9]
			\draw [-] [black, line width=0.4mm] plot [smooth, tension=0] coordinates {(1,2.5) (1,1)};
			
			\draw [-] [black, line width=0.4mm] plot [smooth, tension=0] coordinates {(1,1) (1,-0.5)};

			\draw [dashed] [red, line width=0.4mm,xshift=0 cm] plot [smooth, tension=1] coordinates {(1,1) (0.4,1.7) (0.1,2.4)};

			\draw [dashed] [red, line width=0.4mm,xshift=0 cm] plot [smooth, tension=1] coordinates {(1,1) (1.6,1.7) (1.9,2.4)};

			\draw [dashed] [red, line width=0.4mm,xshift=0 cm] plot [smooth, tension=1] coordinates {(1,1) (0.4,0.3) (0.1,-0.4)};

			\draw [dashed] [red, line width=0.4mm,xshift=0 cm] plot [smooth, tension=1] coordinates {(1,1) (1.6,0.3) (1.9,-0.4)};

			\draw[black,fill=white] (1,-0.5) ellipse (0.1 cm  and 0.1 cm);
			\draw[black,fill=white] (1,1) ellipse (0.1 cm  and 0.1 cm);
			\draw[black,fill=white] (1,2.5) ellipse (0.1 cm  and 0.1 cm);
			
			\node (Q) at (1.2,1.8) {{$e$}};
			\node (Q) at (1.2,0.2) {{$f$}};
			\node (Q) at (1.3,1) {{$v$}};
			\end{tikzpicture}
			\caption{}\label{fig:5.1a}
		\end{subfigure}
		\begin{subfigure}[t]{.3\textwidth}\centering
			\begin{tikzpicture}[scale=0.9]
			
			\draw [-] [black, line width=0.4mm] plot [smooth, tension=0] coordinates {(1,2.5) (1,1)};
			
			\draw [-] [black, line width=0.4mm] plot [smooth, tension=0] coordinates {(1,1) (0,0)};
			
			\draw [-] [black, line width=0.4mm] plot [smooth, tension=0] coordinates {(1,1) (2,0)};
			
			\draw [dashed] [red, line width=0.4mm,xshift=0 cm] plot [smooth, tension=1] coordinates {(1,1) (0.4,1.7) (0.1,2.4)};

			\draw [dashed] [red, line width=0.4mm,xshift=0 cm] plot [smooth, tension=1] coordinates {(1,1) (1.6,1.7) (1.9,2.4)};

			\draw[black,fill=white] (0,0) ellipse (0.1 cm  and 0.1 cm);
			\draw[black,fill=white] (2,0) ellipse (0.1 cm  and 0.1 cm);
			\draw[black,fill=white] (1,1) ellipse (0.1 cm  and 0.1 cm);
			\draw[black,fill=white] (1,2.5) ellipse (0.1 cm  and 0.1 cm);
		
			\node (Q) at (1.2,1.8) {{$e$}};
			\node (Q) at (0.35,0.6) {{$g$}};
			\node (Q) at (1.72,0.6) {{$f$}};
			\end{tikzpicture}
			\caption{}		\label{fig:5.1b}
		\end{subfigure}
	\caption{Different cases for the second property in Lemma \ref{reductionlemma}.}
	\end{figure}

If $v$ has degree two in $T$, then let edge $f$ be the other edge
incident to $v$, as shown in Figure \ref{fig:5.1a}.  Let $\ell$ be a
link in $L$ such that $e$ and $f$ are both covered by $\ell$. If
$\ell'\in L'$ is the link corresponding to $\ell$, then $\ell'$ covers
$e'$. Hence we can suppose there is at most one link $\ell$ in $L$
that covers both $e$ and $f$. Therefore, there are distinct links
$\ell_1,\ldots,\ell_4$ such that $\ell_1,\ell_2$ cover $f$ and
$\ell_3,\ell_4$ cover $e$. But then vertex $v$ has degree six in $G$
as every link that covers $e$ and does not cover $f$ or vice versa
must have $v$ as an endpoint. Thus, we may assume that $v$ has degree
three in $T$, which means $v$ is incident to edges $e,f$ and $g$ in
$T$, as shown in Figure \ref{fig:5.1b}.  Let $\ell_1,\ell_2,\ell_3$ be
the links that cover $e$. Suppose without loss of generality that
$\ell_1$ and $\ell_2$ cover either $f$ or $g$. Then, the corresponding
links $\ell_1'$ and $\ell_2'$ in $L'$ will cover $e'$. However, if
$\ell_1$ does not cover $f$ or $g$ if must be the case that $\ell_1$
has an endpoint in $v$. The same holds for $\ell_2$. This implies that
$v$ has degree five, which is a contradiction to the 4-regularity of
$G$.
\end{proof}

\subsubsection{The top-down coloring algorithm}\label{app:coloring}


We want to find a set of links $F' \subset L'$ such that i) $T' + F'$
is 2-edge-connected, and ii) each vertex in $T + F$ has degree at
least three.  Now we expand our terminology for a top-down coloring
algorithm to address these additional requirements.  For each
$\ell'\in L'$, where $\ell$ is the link in $L$ corresponding to
$\ell'$, we define $\point(\ell')$ to be the two endpoints of $\ell$
in $G$.

For a $(\p,q)$-coloring of $L'$, namely $\gamma$, we say that $v$ in
$G$ has received a color $c$ in $\gamma$ if there is $\ell'$ such that
$v\in \point(\ell')$ and $c\in \gamma(\ell')$. We say a vertex $v$
received a color $c$ twice in $\gamma$, if there are two links $\ell'$
and $\ell''$ such that $v\in \point(\ell')$ and $v\in \point(\ell'')$
and both $c\in \gamma(\ell')$ and $c\in\gamma(\ell')$. Similarly, we
say \textit{$v$ is missing color $c$ in $\gamma$}, if there is no link
$\ell'$ such that $v\in \point(\ell')$ with $c\in
\gamma(\ell)$. Moreover, we say $v$ is \textit{missing a color $c$ for
  the second time in $\gamma$}, if there is exactly one link $\ell'$
with $v\in \point(\ell')$ with $c\in\gamma(\ell')$.

\begin{lemma}\label{vertexcolor}
Let $G=(V,E)$ be a 4-regular 4-edge-connected graph and let $T$ be a
spanning tree of $G$ with maximum degree three.  Let $G'=(V',E')$ and
$T'$ be the subdivided graph and spanning tree.  Then there is
$T'$-admissible $(4\chi^{L'},5)$-coloring of $L'=E'\setminus T'$,
namely $\gamma$ such that for a vertex $v$ of $G$, i) if $v$ has
degree two in $T$, then $v$ receives all the five colors in $\gamma$,
and ii) if $v$ is a degree one vertex in $T$, then $v$ receives all
the five colors twice in $\gamma$.
\end{lemma}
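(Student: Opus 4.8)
The plan is to design a top-down coloring algorithm with factor $\frac{4}{5}$ (i.e., five colors, each link gets four), working on the tree $T'$ and links $L'$ of the subdivided graph, and to maintain enough invariants so that (i) every tree edge $e'\in T'$ eventually receives all five colors (admissibility, which by Lemma \ref{reductionlemma} plus Observation \ref{combination}-type reasoning gives 2-edge-connectivity of $T'+F'$, hence 2-vertex-connectivity of $T+F$), and (ii) each vertex $v$ of $G$ reaches its degree target: a degree-two-in-$T$ vertex receives each of the five colors at least once, and a degree-one-in-$T$ vertex (a leaf) receives each of the five colors at least twice. For a leaf $v$, its two incident links $\ell'_1,\ell'_2$ each get $4$ colors, so together they cover every color at least $4+4-5 = 3 \geq$ well, we need $\geq 2$ occurrences of each color, which is automatic from $4+4-5=3$; the subtlety is only that we must make sure both links incident to a leaf actually put their colors "at" $v$, which they do by definition of $\point(\cdot)$. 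For a degree-two-in-$T$ vertex $v$, $v$ is an endpoint of exactly two links in $L$ (since $v$ has degree four in $G$), and again $4+4-5 = 3 \geq 1$; but here the two links in $L$ correspond to links in $L'$ that may have endpoints at subdivision vertices $v_e$ rather than at $v$ itself — so $\point(\ell')$ is defined precisely to record the original endpoints, and the same counting applies.

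**Next I would** spell out the coloring rule, imitating the proofs of Lemmas \ref{lem:3/5} and \ref{5/8}. When we color a link $\ell'$ with LCA $s$ in $T'$, we split $P_{\ell'}$ into the $s$-to-$u$ path $\Lup$ and the $s$-to-$v$ path $\Rvp$. Since every tree edge of $T'$ is covered by at least two links (Lemma \ref{reductionlemma}, second bullet) and in fact internal structure gives more, the partial coloring keeps each tree edge missing a "suffix-closed" set of colors: using the admissibility invariant, the set of colors missing by an ancestor is contained in the set missing by a descendant, so along $\Lup$ and along $\Rvp$ the missing-color sets are nested. The rule: let the highest missing edge on $\Lup$ be missing some color $c_L$; give $\ell'$ the color $c_L$, then give it up to three more colors prioritizing (a) the next missing color going down $\Lup$, (b) the highest missing color on $\Rvp$, (c) a second missing color on $\Rvp$ — i.e., a "3 from one side, then top up from the other side" strategy tuned so that each tree edge, after being covered by its two highest-LCA links, has received $4 + \geq 1 = 5$ colors, and after three links has received all $5$. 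One has to check the boundary cases: $s = u$ (so $\Lup$ empty, set $\Lup = \Rvp$), $v_e$-vertices as endpoints, and the root (taken to be a leaf of $T$, so only one subdivided edge hangs off it and its two covering links both have the root as a common LCA-endpoint, exactly as in the $|\cov(e)|=2$ case of Lemma \ref{5/8}).

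**The main obstacle**, and where I expect the real work to lie, is simultaneously guaranteeing the vertex-degree property (ii) while keeping the color budget at $\frac{4}{5}$ and not breaking admissibility. Admissibility alone would be a routine rehash of Lemma \ref{lem:3/5} — the catch is that the "naive" color choices that make tree edges happy might repeatedly avoid a particular color at a leaf or degree-two vertex, leaving it with only one occurrence of that color. So I would need an extra invariant tracking, for each vertex $v$ of $G$, the multiset of colors $v$ has received so far, and when coloring the \emph{second} (in LCA order) link incident to $v$, I must bias the remaining free color slots toward colors that $v$ is "missing for the second time" — a combinatorial selection step analogous to Claim \ref{coloring-shenanigan}, where I have $4$ slots, some already forced by the tree-edge requirements, and I need to cover a target set of size at most $5$ with multiplicity. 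Since $4+4 = 8 \geq 5+3$ there is slack, but making the greedy choices provably compatible with the tree-edge constraints (which may force up to, say, $3$ of the $4$ slots on one link) requires a careful case analysis of how many slots are free and which colors the partner link already used — essentially a small counting lemma stating that given two $5$-element color sets, two $4$-subsets can be chosen so that specified "$\geq 2$ from this side, $\geq 2$ from that side" constraints hold on each while their union hits all $5$ colors with the right multiplicities. I would isolate that as a standalone claim and prove it by exhausting the few cases for $|A\cap B|$ exactly as in Claim \ref{coloring-shenanigan}, then plug it into the induction, and finally invoke Lemma \ref{reductionlemma} to convert the resulting 2-edge-connected $T'+F'$ into the 2-vertex-connected, min-degree-three subgraph $T+F$ claimed in Lemma \ref{9/10}.
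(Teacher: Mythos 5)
There is a genuine gap, and it sits exactly where the new content of this lemma lies: the vertex requirements. In your first paragraph you claim these are ``automatic'' from counting, but the counting is wrong. A vertex $v$ of degree one in $T$ has \emph{three} incident links of $L$ (it has degree four in $G$), not two, and the requirement is that $v$ receive all five colors twice, i.e.\ ten color-occurrences, whereas two links with four colors each supply only eight; so nothing is automatic, and in fact the third link is indispensable. Even for a degree-two vertex the arithmetic $4+4-5\geq 1$ does not help: two $4$-subsets of the five colors may coincide, so the union need not be all five colors unless the rule actively forces the second incident link to take a color $v$ is still missing. Your third paragraph retracts the automaticity and proposes to patch things with an unproven counting claim in the spirit of Claim \ref{coloring-shenanigan}, reconciling ``up to three slots forced by tree-edge demands'' with the vertex demands; but you never establish that four slots always suffice, and this is precisely the point that needs an argument.

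The paper's proof avoids any such counting lemma by a different mechanism. Each link reserves (at most) one slot per endpoint of $\point(\ell')$ for a color that endpoint is missing, or missing \emph{for the second time}, and (at most) one slot per side of the tree path for the highest missing edge; admissibility then needs only $4+1$ over the at least two covering links guaranteed by Lemma \ref{reductionlemma}. The crucial observation for a leaf $v$ is structural: its first two incident links (in LCA order) both cover the leaf edge $e'_{v'}$ of $T'$, so after they are colored $e'_{v'}$ --- and, by nestedness of missing-color sets, every ancestor of it --- has all five colors. Hence when the third incident link is colored, the path rules on the leaf side impose nothing, and the two freed slots can be given to the exactly two colors $v$ is still missing for the second time (after two links, $v$ has three colors twice and two colors once, since the second link's fourth slot was forced to be $v$'s fifth missing color). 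Without identifying this ``the leaf side of the third link is already fully colored'' phenomenon, your plan does not close, so the proposal as written does not constitute a proof.
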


\begin{proof}
We construct $\gamma$ using the top-down coloring algorithm.

Let $\gamma$ be the $(4\chi^{L'},5)$-coloring of $L'$ that we maintain. Initially, we have $\gamma(\ell')=\emptyset$ for $\ell' \in L'$. Suppose we want to color link $\ell'$ at some iteration of the algorithm.  Let $u',v'$ be the endpoints of $\ell'$ in $G'$. Let $s'$ be the LCA of $\ell'$ in $T'$. Let $\Lup$ be the $s'u'$-path in $T'$ and $\Rvp$ be the $s'v'$-path in $T'$. Let $\point(\ell')= \{u,v\}$.

\paragraph{Coloring Rules:}\begin{enumerate}
		\item If there is a color $c$ that $u$ has not received we set one color on $\ell'$ to be $c$. If $u$ is not missing a color, but missing a color $c$ for the second time, give color $c$ to $\ell'$.
		\item If there is a color $c$ that $v$ has not received we set one color on $\ell'$ to be $c$. If $v$ is not missing a color, but missing a color $c$ for the second time, give color $c$ to $\ell'$.

		\item  Give color $\bar{\gamma}_1(\Lup)$ to
                $\ell'$.  If there is no such color and vertex $u$ is
                missing a color $c$ for the second time, give color
                $c$ to $\ell'$.

\item Give color $\bar{\gamma}_1(\Rvp)$ to $\ell'$.  If there is no
  such color and vertex $v$ is missing a color $c$ for the second
  time, give color $c$ to $\ell'$.

		\item If after applying all the above four rules, 
                  $\ell'$ still has fewer than four distinct colors, give $\ell'$ any color that it does not already have until $\ell'$ has four different colors.
	\end{enumerate}
	\vspace{2mm}
	First we show that the top-down coloring algorithm above is
        $T'$-admissible. Consider an edge $e'$ in $T'$. We know by Lemma
        \ref{reductionlemma} that there are links $\ell'$ and $\ell''$
        in $L'$ such that $\ell',\ell''\in \cov(e')$.  Without loss of
        generality, suppose that $\ell'$ has a higher LCA.  After we
        color $\ell'$, $e'$ has received at least four colors
        (Observation \ref{firstlink}). When we color $\ell''$ (using
        Rule 3 or 4) we give at least one new color to $e'$ so it receives all the five colors (Observation \ref{ansectory}). Therefore, the coloring algorithm is $T'$-admissible. 
	
	Now, we show the extra properties hold as well. Consider a
        vertex $v$ of degree two in $T$. Notice that since $G$ is
        4-regular, there are at least two links $\ell'$ and $\ell''$
        such that $v\in \point(\ell')$ and $v\in \point(\ell'')$. At
        the iteration the algorithm colors $\ell'$, vertex $v$
        receives four new colors, and later when the algorithm color
        $\ell''$, vertex $v$ receives its fifth missing color (by Rule
        1 or 2). 
	
	Finally, assume $v$ is a vertex of degree one in $T$. This
        implies that $v'$ is also a degree one vertex in $T'$ (since
        in the reduction we do not change the endpoints for degree one
        vertices). Let $e'_{v'}$ be the leaf edge in $T'$ incident on
        $v'$. By 4-regularity there are three links
        $\ell'_1,\ell'_2,\ell'_3$ labeled in LCA order  such that
        $v\in \point(\ell'_i)$ for $i\in\{1,2,3\}$. In the iteration
        that $\ell'_1$ is colored, $v$ receives four new
        colors. Later, when $\ell'_2$ is colored, $v$ receives its
        last missing color (by Rule 1 or 2). In other words, after
        coloring $\ell'_2$, vertex $v$ has received all five colors
        and has received three colors twice. This means that after
        coloring $\ell'_2$, vertex $v$ is missing exactly two colors
        for the second time. Furthermore, $\ell'_1,\ell'_2 \in
        \cov(e'_{v'})$. This implies by the argument above, when the
        algorithm colors $\ell'_2$, edge $e'_{v'}$ has received all
        the five colors. Consider the time when the algorithm wants to
        color $\ell'_3$. Notice that all the ancestors of $e'_{v'}$
        have received all the five colors, and $e'_{v'}$ is the lowest
        edge in $\mathcal{R}_{\ell'_3}$. Therefore, there is no
        missing color in $\mathcal{R}_{\ell'_3}$. Also, $v$ has
        received all five colors. Therefore, when coloring $\ell'_3$,
        vertex $v$ will receives two new colors for the second time
        (where the first color is assigned by Rule 1 or 2 and the
        second color by Rule 3 or 4).  
\end{proof}

Lemma \ref{9/10} follows immediately from  Lemmas \ref{reductionlemma} and \ref{vertexcolor}.
\subsection{Hard to round half-integer square points}\label{hard-to-round}

As discussed in the introduction, $\alpha^{\HK}_{\EC}\geq \frac{6}{5}$.
An example achieving this lower bound is given in 
\cite{alexander2006integrality}.  However, a more curious instance is
the $k$-donut. A $k$-donut point for $k\in \mathbb{Z}$, $k\geq 2$, is
a graph $G_k=(V_k,E_k)$ that has $k$ half-squares arranged around a
cycle, and the squares are joined by paths consisting of $k$ 1-edges.
(See Figure \ref{k-donut} for an illustration of the 4-donut.)

\begin{figure}[t]
	\centering
		\includegraphics[width=0.25\linewidth]{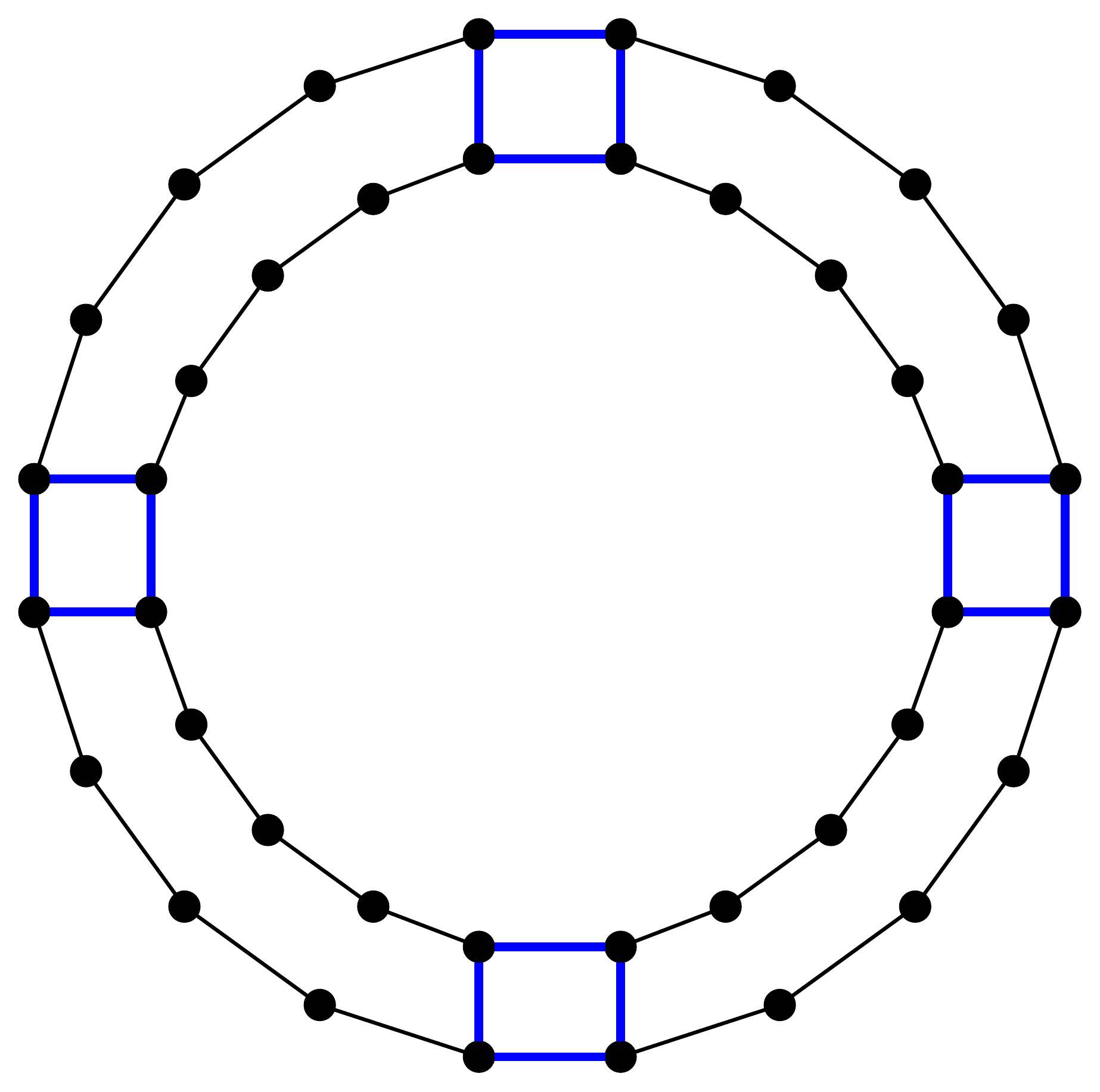}
		\caption{The $k$-donut for $k=4$: bold (blue) edges
                  are the half edges and remaining edges are 1-edges.}
	\label{k-donut}
\end{figure}
Define the edge cost $c_e$ of each half-edge in the outer cycle and
the inner cycle to be 2.  All other half-edges have cost 1. All the
1-edges have cost $\frac{1}{k}$. Therefore, $\sum_{e\in E_x} c_ex_e =
5k$, while the optimal solution is $6k-2$.  We note that this instance
was discovered by the authors of \cite{Carr98}, but due to the page
limit of their conference paper they did not present it and just
mentioned that they know a lower bound.  Recently, Boyd and Seb\H{o}
used $k$-donut points with different costs to show a new instance that
achieves a lower bound of $\frac{4}{3}$ for the integrality gap of 2ECM
and TSP, and we attribute the term ``$k$-donut'' to
them~\cite{boydsebo-mp}.  Notice that if $x$ is the $k$-donut point,
then $P(G,\frac{1}{4}-\frac{1}{2k})$ holds.  This implies that $z=
\frac{1}{5}\chi^{H}+ \frac{4}{5}r^{x,\frac{1}{4}-\frac{1}{2k}}$ is a
convex combination of 2-edge-connected multigraphs of $G_x$.  We have
$z_e= \frac{6}{5}-\frac{2}{5k}$ for $e\in A$, $z_e = \frac{3}{5}$ for
$e\in B$, and $z_e = \frac{3}{5}+\frac{2}{5k}$ for $e \in C$.  As $k
\rightarrow \infty$, this approaches $\frac{6}{5}x$ and thus shows
that our approach can verify the six-fifths conjecture for $k$-donut
points.  We conclude with the following corollary of Theorem
\ref{square-point}.
\begin{corollary}
The integrality gap $\alpha^{\LP(G)}_{\EC}$ is between $\frac{6}{5}$
and $\frac{9}{7}$ for half-integer square points.
\end{corollary}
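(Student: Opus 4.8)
The plan is to prove the two inequalities separately; the upper bound is essentially a restatement of the main theorem of this section, and the lower bound comes from the $k$-donut family of Section~\ref{hard-to-round}.

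\emph{Upper bound.} That $\alpha^{\LP(G)}_{\EC}\le \frac{9}{7}$ on half-square instances is exactly Theorem~\ref{square-point}: for a half-square point $x$, the vector $\frac{9}{7}x$ dominates a convex combination of 2-edge-connected multigraphs of $G_x$, and by the standard equivalence between ``$\alpha x$ dominates a convex combination of integral solutions'' and ``integral optimum $\le \alpha\cdot$ LP value'' (see Observation~\ref{dominating-observation} and the characterization of $\gap$ given after the definition of $\HK$), this says precisely that every half-square instance has integral optimum at most $\frac{9}{7}$ times its $\HK$ value. So nothing is needed here beyond Lemma~\ref{construction}, Corollary~\ref{vector-r}, and Lemma~\ref{1/10}.

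\emph{Lower bound.} I would take the $k$-donut graph $G_k$ together with the half-integer square point $x$ (value $1$ on 1-edges, $\frac12$ on half-edges) and the cost vector from Section~\ref{hard-to-round}. As recorded there, $x$ is feasible for $\HK(G_k)$ and $\sum_e c_e x_e = 5k$; both facts are routine, feasibility being a cut count of the type used in Lemma~\ref{lem:in-subtour}, and the value being the bookkeeping that the $2k^2$ 1-edges contribute $2k^2\cdot\frac1k = 2k$, the $2k$ cost-$2$ half-edges contribute $2k\cdot\frac12\cdot 2 = 2k$, and the $2k$ cost-$1$ half-edges contribute $2k\cdot\frac12\cdot 1 = k$, for a total of $5k$. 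Combining this with the fact, also stated in Section~\ref{hard-to-round}, that the integral optimum of $G_k$ equals $6k-2$, and using that $x$ is $\HK$-feasible, the integrality gap on $G_k$ is at least $\frac{6k-2}{5k}$; letting $k\to\infty$ gives $\alpha^{\LP(G)}_{\EC}\ge \frac{6}{5}$ over half-square instances.

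\emph{Main obstacle.} The only nontrivial ingredient is the claim that every 2-edge-connected multigraph of $G_k$ has cost at least $6k-2$. Since every internal vertex of a 1-edge path has degree two in the support, all $2k^2$ 1-edges must be taken at least once, contributing $2k$; the substance is to show the half-edges and doubled 1-edges buy at least $4k-2$ more. The obvious candidate beyond the forced 1-edges -- add every ``radial'' (cost-$1$) half-edge -- costs only $2k$ extra, giving $4k$ total, but one checks the resulting graph is a disjoint union of $k$ cycles and hence not connected; the point is that reconnecting these pieces while keeping the graph bridgeless and of minimum degree at least two forces an additional $2k-2$ units of cost, bought either as outer/inner-cycle (cost-$2$) half-edges or as doubled 1-edge paths. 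I would make this precise either by exhibiting a feasible dual solution to (a suitable cut-covering strengthening of) the $\HK$ relaxation on $G_k$ of value $6k-2$ -- which lower-bounds the LP and therefore the integral optimum -- or, failing that, by a direct structural induction on the number $k$ of half-squares, charging each half-square against the $4$-edge cut it induces. Everything else in the corollary (feasibility and cost of $x$, and the limit $\frac{6k-2}{5k}\to\frac{6}{5}$) is routine.
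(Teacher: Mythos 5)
Your proposal follows the paper's own route exactly: the upper bound is just Theorem \ref{square-point}, and the lower bound is the $k$-donut half-square point of Section \ref{hard-to-round}, whose LP cost $5k$ against an integral optimum of $6k-2$ gives a ratio $\frac{6k-2}{5k}\to\frac{6}{5}$. The one ingredient you single out as the ``main obstacle''---that every 2-edge-connected multigraph of the $k$-donut costs at least $6k-2$---is not proved in the paper either (it is asserted there and attributed to earlier work of Carr and to Boyd and Seb\H{o}), so your treatment matches the paper's level of detail; your heuristic that reconnecting the $k$ cycles forces an extra $2k-2$ units of cost is the right intuition but would indeed need the dual or inductive argument you mention to be a self-contained proof.
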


\section*{Acknowledgments}
We would like to thank R. Ravi for his insightful comments throughout
this project.  We would also like to thank Robert Carr for sharing the
$k$-donut example.  AH was supported in part by the U.S. Office of
Naval Research under award number N00014-18-1-2099, and the U.S.
National Science Foundation under award number CCF-1527032.  AN was
supported in part by IDEX-IRS SACRE.

\bibliographystyle{unsrt}
\bibliography{cubic3EC}

\newcommand{\etalchar}[1]{$^{#1}$}
\begin{thebibliography}{AGM{\etalchar{+}}17}

\bibitem[ABE06]{alexander2006integrality}
Anthony Alexander, Sylvia Boyd, and Paul Elliott{-}Magwood.
\newblock On the integrality gap of the 2-edge connected subgraph problem.
\newblock Technical report, TR-2006-04, SITE, University of Ottawa, 2006.

\bibitem[AGM{\etalchar{+}}17]{asadpour2017log}
Arash Asadpour, Michel~X Goemans, Aleksander Madry, Shayan {Oveis Gharan}, and
  Amin Saberi.
\newblock An ${O}(\log{n}/\log \log{n})$-approximation algorithm for the
  asymmetric traveling salesman problem.
\newblock {\em Operations Research}, 65(4):1043--1061, 2017.

\bibitem[BC11]{Boyd-Carr-2011}
Sylvia Boyd and Robert Carr.
\newblock Finding low cost {TSP} and 2-matching solutions using certain
  half-integer subtour vertices.
\newblock {\em Discrete Optimization}, 8(4):525--539, 2011.

\bibitem[BCC{\etalchar{+}}20]{boyd_et_al:LIPIcs:2020:12664}
Sylvia Boyd, Joseph Cheriyan, Robert Cummings, Logan Grout, Sharat Ibrahimpur,
  Zolt{\'a}n Szigeti, and Lu~Wang.
\newblock A 4/3-approximation algorithm for the minimum 2-edge connected
  multisubgraph problem in the half-integral case.
\newblock In {\em International Conference on Approximation Algorithms for
  Combinatorial Optimization Problems (APPROX)}, volume 176, pages 61:1--61:12,
  2020.

\bibitem[BFS16]{boyd2014frac}
Sylvia Boyd, Yao Fu, and Yu~Sun.
\newblock A 5/4-approximation for subcubic 2{EC} using circulations and obliged
  edges.
\newblock {\em Discrete Applied Mathematics}, 209:48--58, 2016.

\bibitem[BIT13]{bit13}
Sylvia Boyd, Satoru Iwata, and Kenjiro Takazawa.
\newblock Finding 2-factors closer to {TSP} tours in cubic graphs.
\newblock {\em SIAM Journal on Discrete Mathematics}, 27(2):918--939, 2013.

\bibitem[BL97]{broersma1997spanning}
Hajo Broersma and Xueliang Li.
\newblock Spanning trees with many or few colors in edge-colored graphs.
\newblock {\em Discussiones Mathematicae Graph Theory}, 17(2):259--269, 1997.

\bibitem[BL17]{philip6/7}
Sylvia Boyd and Philippe Legault.
\newblock Toward a 6/5 bound for the minimum cost 2-edge connected spanning
  subgraph.
\newblock {\em SIAM Journal on Discrete Mathematics}, 31(1):632--644, 2017.

\bibitem[BS21]{boydsebo-mp}
Sylvia Boyd and Andr{\'a}s Seb{\H{o}}.
\newblock The salesman’s improved tours for fundamental classes.
\newblock {\em Mathematical Programming}, 186:289--307, 2021.

\bibitem[Car11]{caratheodory1911variabilitatsbereich}
Constantin Carath{\'e}odory.
\newblock {\"U}ber den {V}ariabilit{\"a}tsbereich der {F}ourierschen
  {K}onstanten von positiven harmonischen {F}unktionen.
\newblock {\em Rendiconti Del Circolo Matematico di Palermo}, 32(2):193--217,
  1911.

\bibitem[Chr76]{christofides}
Nicos Christofides.
\newblock {Worst-case analysis of a new heuristic for the travelling salesman
  problem}.
\newblock Technical Report 388, Graduate School of Industrial Administration,
  Carnegie Mellon University, 1976.

\bibitem[CJR99]{cheriyan19992}
Joseph Cheriyan, Tibor Jord{\'a}n, and R.~Ravi.
\newblock On 2-coverings and 2-packings of laminar families.
\newblock In {\em European Symposium on Algorithms}, pages 510--520. Springer,
  1999.

\bibitem[CR98]{Carr98}
Robert Carr and R.~Ravi.
\newblock A new bound for the 2-edge connected subgraph problem.
\newblock In {\em Proceedings of 6th International Conference on Integer
  Programming and Combinatorial Optimization}, pages 112--125. Springer, 1998.

\bibitem[CV04]{carrvempala}
Robert Carr and Santosh Vempala.
\newblock On the {H}eld-{K}arp relaxation for the asymmetric and symmetric
  traveling salesman problems.
\newblock {\em Mathematical Programming}, 100(3):569--587, 2004.

\bibitem[DJS03]{devos2003cut}
Matt DeVos, Thor Johnson, and Paul Seymour.
\newblock Cut coloring and circuit covering.
\newblock
  \url{https://web.math.princeton.edu/~pds/papers/cutcolouring/paper.pdf},
  2003.
\newblock Unpublished manuscript.

\bibitem[Gao13]{gao2013lp}
Zhihan Gao.
\newblock An {LP}-based-approximation algorithm for the $s$-$t$ path graph
  traveling salesman problem.
\newblock {\em Operations Research Letters}, 41(6):615--617, 2013.

\bibitem[Goe95]{Goemans95}
Michel~X. Goemans.
\newblock Worst-case comparison of valid inequalities for the {TSP}.
\newblock {\em Mathematical Programming}, 69(1):335--349, 1995.

\bibitem[Had20]{arash-phd2020}
Arash Haddadan.
\newblock {\em {New Bounds on Integrality Gaps by Constructing Convex
  Combinations}}.
\newblock PhD thesis, Carnegie Mellon University, 2020.

\bibitem[HN20]{HN-extended}
Arash Haddadan and Alantha Newman.
\newblock Towards improving {C}hristofides algorithm on fundamental classes by
  gluing convex combinations of tours.
\newblock {\em CoRR}, abs/1907.02120, 2020.

\bibitem[HNR21]{uniform}
Arash Haddadan, Alantha Newman, and R.~Ravi.
\newblock Shorter tours and longer detours: {U}niform covers and a bit beyond.
\newblock {\em Mathematical Programming}, 185:245--273, 2021.

\bibitem[HS21]{horscheulerian}
Florian H{\"o}rsch and Zolt{\'a}n Szigeti.
\newblock Connectivity of orientations of 3-edge-connected graphs.
\newblock {\em European Journal of Combinatorics}, 2021.
\newblock To appear.

\bibitem[IR17]{IR}
Jennifer Iglesias and R.~Ravi.
\newblock Coloring down: $3/2$-approximation for special cases of the weighted
  tree augmentation problem.
\newblock {\em CoRR}, abs/1707.05240, 2017.

\bibitem[KN{\etalchar{+}}05]{kaiser2005unions}
Tom{\'a}{\v{s}} Kaiser, Serguei Norine, et~al.
\newblock Unions of perfect matchings in cubic graphs.
\newblock {\em Electronic Notes in Discrete Mathematics}, 22:341--345, 2005.

\bibitem[Leg17]{philip}
Philippe Legault.
\newblock {Towards New Bounds for the 2-Edge Connected Spanning Subgraph
  Problem}.
\newblock Master's thesis, University of Ottawa, 2017.

\bibitem[NP81]{naddef1981matchings}
Denis Naddef and William~R. Pulleyblank.
\newblock Matchings in regular graphs.
\newblock {\em Discrete Mathematics}, 34(3):283--291, 1981.

\bibitem[NW61]{nw}
C.~St.J.~A. Nash-Williams.
\newblock Edge-disjoint spanning trees of finite graphs.
\newblock {\em Journal of the London Mathematical Society}, s1-36(1):445--450,
  1961.

\bibitem[SV14]{sebo12}
Andr{\'a}s Seb{\H{o}} and Jens Vygen.
\newblock Shorter tours by nicer ears: 7/5-approximation for the graph-{TSP},
  3/2 for the path version, and 4/3 for two-edge-connected subgraphs.
\newblock {\em Combinatorica}, 34(5):597--629, 2014.

\bibitem[SWvZ14]{schalekamp20142}
Frans Schalekamp, David~P Williamson, and Anke van Zuylen.
\newblock 2-{M}atchings, the traveling salesman problem, and the subtour {LP}:
  {A} proof of the {B}oyd-{C}arr conjecture.
\newblock {\em Mathematics of Operations Research}, 39(2):403--417, 2014.

\bibitem[Tak16]{takazawa}
Kenjiro Takazawa.
\newblock A 7/6-approximation algorithm for the minimum 2-edge connected
  subgraph problem in bipartite cubic graphs.
\newblock {\em Information Processing Letters}, 116(9):550 -- 553, 2016.

\bibitem[Wol80]{wolsey1980heuristic}
Laurence~A. Wolsey.
\newblock Heuristic analysis, linear programming and branch and bound.
\newblock In {\em Combinatorial Optimization II}, pages 121--134. Springer,
  1980.

\end{thebibliography}
\appendix
\section{Proof of Theorem \ref{reductionToFourEC}}\label{app:gluing}

The main tool used to prove Theorem \ref{reductionToFourEC} is {\em
  gluing solutions over 3-edge cuts}~\cite{Carr98}, which was
introduced by Carr and Ravi~\cite{Carr98} and has been used frequently
for constructing convex combinations of 2-edge-connected
subgraphs~\cite{Carr98,philip6/7,philip} and more recently for gluing
tours (see Proposition 3.2 in \cite{HN-extended}).

We need the following definitions.

\begin{definition}
A {\em proper 3-edge cut} of $G$ is a set $S\subset V$ such that
$|\delta(S)|=3$, $|S|\geq 2$ and $|V\setminus{S}| \geq 2$.
\end{definition}

\begin{definition}\label{G-S-graph}
	For a 3-edge-connected graph $G=(V,E)$ and a proper 3-edge cut
        $S \subset V$, we define the graph $G_S$ to be the graph
        obtained after we contract the set
$V\setminus{S}$ to a single vertex.
\end{definition}

Lemmas \ref{gluing-in-polytime} and \ref{gluing-essentially-4EC} appear in
different forms in \cite{Carr98,philip6/7,philip}, always with the
purpose of reducing to the problem on essentially 4-edge-connected
graphs.

\begin{restatable}{lemma}{gluingX}\label{gluing-in-polytime}
Let $G=(V,E)$ be a 3-edge-connected graph and $x\in [0,1]^E$. Let $S$
be a 3-edge cut of $G$. Define $x^S$ and $x^{\bar{S}}$ to be vector
$x$ restricted to the edges in $G_S$ and $G_{\bar{S}}$, respectively.
If $x^S$ and $x^{\bar{S}}$ can be written as a convex combination of
2-edge-connected subgraphs of $G_S$ and $G_{\bar{S}}$, respectively,
then $x$ can be written as convex combination of 2-edge-connected
subgraphs of $G$.
\end{restatable}	
      
\begin{proof}
By the assumption, vector $x^S$ can be written as a convex combination
of 2-edge-connected subgraphs of $G_S$: $x^S_e =
\sum_{i=1}^{k}\lambda_i \chi^{F_S^i}_e$ for $e \in E(G_S)$.  The same
holds for $G_{\bar{S}}$: $x^{\bar{S}}_e = \sum_{i=1}^{\bar{k}}\theta_i
\chi^{F_{\bar{S}}^i}_e$ for $e \in E(G_{\bar{S}})$.

Note that $\delta(X_S)= \delta(X_{\bar{S}}) = \{a,b,c\}$, and hence
$x^S_e = x^{\bar{S}}_e = x_e$ for $e\in \{a,b,c\}$.  Let
$\lambda^{a,b}$ be the sum of all $\lambda_i$'s where $F_S^i$ contains
exactly the two edges $a$ and $b$ from $\delta(X_S)$. Define
$\lambda^{a,c}, \lambda^{b,c}$, and $\lambda^{a,b,c}$ analogously.
Notice that these are the only possible outcomes since a
2-edge-connected subgraphs contains at least two edges from the cut
around any vertex.  Hence, $\lambda^{a,b}+\lambda^{a,c} +
\lambda^{b,c} + \lambda^{a,b,c}=1$. Also
\begin{align*} \lambda^{a,b}+
\lambda^{a,c}+\lambda^{a,b,c} = x_a,\\ \lambda^{a,b}+
\lambda^{b,c}+\lambda^{a,b,c} = x_b,\\ \lambda^{a,c}+
\lambda^{b,c}+\lambda^{a,b,c} = x_c. \end{align*} This system of
equations has a unique solution: $\lambda^{a,b,c} = x_a+ x_b+x_c- 2$,
$\lambda^{b,c}=1-x_a$, $\lambda^{a,b}=1-x_c$, and $\lambda^{a,c}= 1-
x_b$.  Similarly, we can define and show that $\theta^{a,b,c}=
x_a+x_b+x_c - 2$, $\theta^{b,c}=1-x_a$, $\theta^{a,b}=1-x_c$, and
$\theta^{a,c}= 1- x_b$.
		
So we have $\lambda^{h} = \theta^{h}$ for $h\in
\{\{a,b\},\{a,c\},\{b,c\},\{a,b,c\}\}$. This allows us to glue the two
convex combinations in the following way: suppose $F_S^i$ and
$F_{\bar{S}}^j$ use the same edges from $\{a,b,c\}$.  Now we glue
$\sum_{i=1}^{k}\lambda_i \chi^{F_S^i}$ and
$\sum_{i=1}^{\bar{k}}\theta_i \chi^{F_{\bar{S}}^i}$ as follows.  Let
$\sigma_{ij} = \min\{\lambda_i,\theta_j\}$, and $F^{ij}=F_S^i +
F_{\bar{S}}^j$. Update $\lambda_i$ and $\theta_j$ by subtracting
$\sigma_{ij}$ from both, and continue. The arguments in the lemma
ensure that we can find the $i$ and $j$ pair until all the remaining
$\lambda_i$ and $\theta_j$ multipliers are zero. The convex
combination with multipliers $\sigma_{ij}$ and 2-edge-connected
subgraphs $F^{ij}$ is equal to $x_e$ on every edge in $E(G)$.  Note
that the number of subgraphs in the set
$\{F^{ij}\}$ (i.e., the support of the new convex combination) is
at most $k+\bar{k}$.  Assuming that the size of the support of the convex
combinations for both $G_S$ and $G_{\bar{S}}$ are polynomial in the
size of the respective graphs, then
the total number of subgraphs in the convex combination produced for $G$ is
polynomial.
\end{proof}

The next two observations are needed for the proof of the next lemma.

\begin{observation}\label{obs:cornercut}
Let $U\subseteq V$ be a minimal proper 3-edge-cut in $G$ (i.e., for
$S\subset U$, the cut defined by $S$ is not a proper 3-edge cut in
$G_{U}$). Then $G_{U}$ does not contain any proper 3-edge cuts.
\end{observation}
\begin{proof}
Suppose for contradiction that there is $S \subset V(G_{U})$ that is a
proper 3-edge cut of $G_{U}$.  This is a contradiction to minimality
of $U$ since $S$ constitutes a proper 3-edge cut in $G$ as well.
\end{proof}

\begin{observation}\label{obs:inductivecriticalcut}
Suppose that $G$ has $t$ proper 3-edge cuts. Let $U$ be a proper 3-edge cut of
$G$. Then the number of proper 3-edge cuts in $G_{U}$ is at most $t-1$.
\end{observation}
\begin{proof}

There is correspondence between the proper 3-edge cuts of $G_{U}$ and
$G$: each proper 3-edge cut in $G_U$ corresponds to a proper 3-edge
cut in $G$.  However, clearly, $U$ is not a proper 3-edge cut of
$G_{U}$.  This implies that $G_{U}$ can have at most $t-1$ proper
3-edge cuts. 
\end{proof}

\begin{definition}\label{G-family}
	Let $G=(V,E)$ be a 3-edge-connected graph.
        Define $\mathcal{G}$ to be the collection of graphs
        obtained from $G$ by iteratively contracting an arbitrary
        proper 3-edge cut and contracting it into a single vertex
        until the graph becomes essentially 4-edge-connected.  
        \end{definition}

\begin{restatable}{lemma}{gluingEssentiallyFourEC}\label{gluing-essentially-4EC}
	Let $G=(V,E)$ be a 3-edge-connected graph and $x\in [0,1]^E$.
        The following two statements are equivalent.
\begin{enumerate}

\item For any $G'\in \mathcal{G}$, vector $x$ restricted to the
        entries of $E(G')$ can be written as a convex combination of
        2-edge-connected subgraphs of $G'$. 

\item Vector $x$ can be
        written as a convex combination of 2-edge-connected subgraphs
        of $G$.
\end{enumerate}
\end{restatable}

\begin{proof}
To show that $2. \Rightarrow 1.$, we observe that every
2-edge-connected subgraph of $G$ can be mapped to a 2-edge-connected
subgraph of $G'$ by considering only the subset of edges that belong
to $G'$.  Moreover, a convex combination corresponding to the vector
$x$ can be mapped to a convex combination for the vector $x'$, where
$x'$ contains only entries corresponding to edges in $G'$.
Now we show the other direction, namely $1. \Rightarrow 2.$

Suppose $G$ contains $t$ proper 3-edge cuts. We assume 1. and we prove
the statement 2. by induction on $t$.  If $t=0$, then $G$ does not
contain a proper 3-edge cut.  In this case, $G\in \mathcal{G}$ and we
are done.  If $t \geq 1$, find a minimal 3-edge cut $U$ of $G$. By
Observation \ref{obs:cornercut}, $G_U$ does not contain any proper
3-edge cuts and by Observation \ref{obs:inductivecriticalcut},
$G_{\bar{U}}$ contains at most $t-1$ proper 3-edge cuts.  Since $G_U$
belongs to $\mathcal{G}$, we can write $x$ restricted to $E(G_U)$ as a
convex combination of 2-edge-connected subgraphs of $G_U$.  By
induction, since $G_{\bar{U}}$ has at most $t-1$ proper 3-edge cuts,
we can write $x$ restricted to the entries of $E(G_{\bar{U}})$ as a
convex combination of 2-edge-connected subgraphs of $G_{\bar{U}}$.
Then we can apply Lemma \ref{gluing-in-polytime} to find a convex
combination of 2-edge-connected subgraphs for $G$.

Moreover, since $t$ is upper bounded by a polynomial in $|V|$, this
shows that the support of the convex combination for $G$ has
polynomial size.  Note that a minimal proper 3-edge cut of $G$ can be
found in polynomial time, since it is a minimum cut of $G$ and there
are only polynomially many minimum cuts of $G$.
\end{proof}

We will use the following observation in the proof of Theorem \ref{reductionToFourEC}.

\begin{observation}\label{cccontract}
Let $G=(V,E)$ be a 3-edge-connected graph and $\mathcal{C}$ be a cycle
cover of $G$ that covers the 3-edge cuts of $G$. Let $\emptyset\subset
S\subset V$ be a proper 3-edge cut of $G$ (i.e., $|\delta(S)| = 3$)
such that $|\delta(S) \cap \mathcal{C}| = 2$. Then the graph $G_S$ is
3-edge-connected and $\C$ restricted to $E(G_S)$ is a cycle cover of $G_S$.
\end{observation}

\reductionToEssentiallyFourEC*

\begin{proof}
Suppose we are given a 3-edge-connected graph $G=(V,E)$ and a cycle
cover $\C$ that covers all the 3-edge cuts of $G$.  Let $x = \chi^E -
\alpha \chi^{\C}$ for some $\alpha \in [0,1]$.  For each $G' \in
\mathcal{G}$, we apply Observation \ref{cccontract}, which yields
cycle cover $\C'$ (i.e., the cycle cover $\C$ restricted to the edges
of $G'$).  By Lemma \ref{gluing-essentially-4EC} we can conclude that
statements 1. and 2. are equivalent.
\end{proof}

\section{3-Edge-colorable cubic graphs: Proof of Theorem
  \ref{thm:3-edge-colorable}}\label{app:3edgecolorable}.

We now give the complete proof for the following theorem.
\edgeColorable*

Note that while deciding whether a graph is 3-edge-colorable is
NP-complete, there are special cases in which it is easy to find a
3-edge coloring, for example in the case of a bipartite, cubic graph.
Throughout this section, we assume that we are given a 3-edge coloring
of the input graph $G=(V,E)$.  We will consider three cycle covers of
$G$ where each cycle cover consists of all the edges on two of the
colors.  In to prove Theorem \ref{thm:3-edge-colorable} we prove the
following theorem.

\begin{theorem}\label{ccpoint-cubic-3edgecolorable}
	Let $G=(V,E)$ be a 3-edge-connected cubic, 3-edge-colorable
        graph and let $\mathcal{C}$ be a cycle cover of $G$ consisting
        of all edges of two colors.  Then the vector $\chi^E -
        \frac{1}{5}\chi^{\mathcal{C}}$ can be written as a convex
        combination of 2-edge-connected subgraphs of $G$.
\end{theorem}

\begin{proof}[Proof of Theorem \ref{thm:3-edge-colorable}]
Let $\sum_{i=1}^{3}\lambda_i \chi^{\mathcal{C}_i} = \frac{2}{3}
\chi^E$ be the convex combination of three cycle covers, each one
consisting of two edge colors.  By Theorem \ref{ccpoint-cubic}, for
each $i\in\{1,\ldots,3\}$ we can write $\chi^E -
\frac{1}{5}\chi^{\C_i}$ as convex combination of 2-edge-connected
subgraphs of $G$. Hence, $\sum_{i=1}^{3} \lambda_i (\chi^E -
\frac{1}{5}\chi^{\C_i}) = \chi^E - \frac{2}{15}\chi^E =
\frac{13}{15}\chi^E$, and we conclude that $\frac{13}{15} \chi^E$ can
be written as a convex combination of 2-edge-connected subgraphs of
$G$.
\end{proof}

Using the following observation, we can apply
Theorem \ref{reductionToFourEC}.  This allows us to focus on proving
Lemma \ref{1-1/5}, which implies Theorem \ref{ccpoint-cubic-3edgecolorable}.

\begin{observation}\label{contract3edgeColorable}
Let $G=(V,E)$ be a 3-edge-connected, 3-edge-colorable graph and
$\mathcal{C}$ be a cycle cover of $G$ that consists of all edges of
two colors.  Then (i) $\C$ covers the 3-edge cuts of $G$, (ii) $\C$
consists of only even-length cycles, (iii) $G_S$ is 3-edge colorable
(i.e., the 3-edge coloring of $G$ restricted to $E(G_S)$ is a 3-edge
coloring of $G_S$), and (iv) $\C$ restricted to $E(G_S)$ is also a
cycle cover of $G_S$ on two colors.
\end{observation}

\begin{lemma}\label{1-1/5}
Let $G=(V,E)$ be an essentially 4-edge-connected cubic graph and
$\mathcal{C}$ be a cycle cover of $G$ that consists of only
even-length cycles. Then the vector $\chi^E -
\frac{1}{5}\chi^{\mathcal{C}}$ can be written as a convex combination
of 2-edge-connected subgraphs of $G$.
\end{lemma}

\begin{proof}
Let $y= \chi^E - \frac{1}{2}\chi^{\C}$. Let $\sum_{i=1}^{k}\lambda_i
\chi^{K_i}$ be the convex combination of vector $y$ obtained via Lemma
\ref{rainbow2}.  
We now set $H_i = G/C_{K_i} = (U_i,F_i)$ and $T_i =
K_i/C_{K_i}$ (recall that $C_{K_i}$ is the unique cycle in the 1-tree
$K_i$).  Notice that since $\C$ contains only even cycles, there are
no leaf-matching links for $K_i$ in $G$, and consequently
$LM(H_i,T_i)$ is empty.
Thus, by Lemma \ref{lem:newlemma}, we can find a $(3\chi^{F_i\setminus
  T_i},5)$-coloring of $L_i = F_i\setminus T_i$ for $i \in
\{1,\ldots,k\}$. By Observation \ref{combination}, we have
$\frac{3}{5}\chi^{L_i}= \sum_{j=1}^{5} \frac{1}{5} \chi^{A^i_j}$ where
$A^i_j$ is a feasible augmentation for $T_i$ and therefore for $K_i$
(by Observation \ref{obs:connector2tree}).  In other words, $K_i +
A_j^i$ for $i \in \{1, \ldots, k\}$ and $j \in \{1, \ldots, 5\}$ is a
2-edge-connected subgraph of $G$.  Let $z^i_j = \chi^{K_i + A_j^i} $
be the characteristic vector of the corresponding 2-edge-connected
subgraph of $G$. Then, we have:
\begin{eqnarray*}
z^i = \sum_{j=1}^5\frac{1}{5} z^i_j = \sum_{j=1}^5 \frac{1}{5}
\chi^{K_i + A_j^i} = \chi^{K_i} + \sum_{j=1}^5 \frac{1}{5} \chi^{A_j^i} =
\chi^{K_i} + \frac{3}{5} \chi^{L_i}.
\end{eqnarray*}
Notice that $\sum_{i=1}^k \lambda_i \chi^{L_i} \leq
\frac{1}{2}\chi^{\cal{C}}$.  Next, we claim that the vector $z$ can be
written as a convex combination of 2-edge-connected subgraphs of $G$.
\begin{eqnarray*}
z  & = &  \sum_{k=1}^k \lambda_i z^i = \sum_{i=1}^k \lambda_i (\chi^{K_i} +
\frac{3}{5} \chi^{L_i}) = \sum_{i=1}^k \lambda_i \chi^{K_i} +
\frac{3}{5} \sum_{i=1}^k \lambda_i \chi^{L_i}\\  & \leq &
(\chi^E - \frac{1}{2} \chi^{\cal{C}}) + \frac{3}{10}\chi^{\cal{C}} = 
\chi^E - \frac{1}{5}\chi^{\cal{C}}.
\end{eqnarray*}
This concludes the proof of Lemma \ref{1-1/5}.
\end{proof}

\end{document}